\documentclass[12pt,reqno]{amsart}
\usepackage{amssymb}
\usepackage{hyperref}
\usepackage{color}
\usepackage{amsmath}
\usepackage{amsthm}
\usepackage{lineno}

\textwidth=160mm
\textheight=200mm
\topmargin=20mm
\hoffset=-20mm

\newtheorem{theorem}{Theorem}[section]
\newtheorem{definition}[theorem]{Definition}
\newtheorem{proposition}[theorem]{Proposition}
\newtheorem{lemma}[theorem]{Lemma}
\newtheorem{corollary}[theorem]{Corollary}

\newtheorem{problem}[theorem]{Problem}
\newtheorem{remark}[theorem]{Remark}

\title{Positive reduction from spectra}
\date{\today}
\author{Maria Anastasia Jivulescu}
\address{M.A.J.: Department of Mathematics,
Politehnica University of Timi\c soara,
Victoriei Square 2, 300006 Timi\c soara, Romania}
\email{maria.jivulescu@upt.ro}

\author{Nicolae Lupa}
\address{N.L.: Department of Mathematics,
Politehnica University of Timi\c soara,
Victoriei Square 2, 300006 Timi\c soara, Romania}
\email{nicolae.lupa@upt.ro}

\author{Ion Nechita}
\address{I.N.: Zentrum Mathematik, M5, Technische Universit\"at M\"unchen, Boltzmannstrasse 3, 85748 Garching, Germany
 and CNRS, Laboratoire de Physique Th\'{e}orique, IRSAMC, Universit\'{e} de Toulouse, UPS, F-31062 Toulouse, France}
\email{nechita@irsamc.ups-tlse.fr}

\author{David Reeb}
\address{D.R.: Zentrum Mathematik, M5, Technische Universit\"at M\"unchen, Boltzmannstrasse 3, 85748 Garching, Germany}
\email{david.reeb@tum.de}

\subjclass[2010]{15B48, 81P45}
\keywords{quantum entanglement, reduction criterion, entanglement from eigenvalues}

\begin{document}

\begin{abstract}
We study the problem of whether all bipartite quantum states having a prescribed spectrum remain positive under the reduction map applied to one subsystem. We provide necessary and sufficient conditions, in the form of a family of linear inequalities, which the spectrum has to verify. Our conditions become explicit when one of the two subsystems is a qubit, as well as for further sets of states. Finally, we introduce a family of simple entanglement criteria for spectra, closely related to the reduction and positive partial transpose criteria, which also provide new insight into the set of spectra that guarantee separability or positivity of the partial transpose.
\end{abstract}

\maketitle

\section{Introduction}

One of the most studied problems in quantum information theory is
to find methods to decide whether a given quantum state is separable or entangled \cite{horodeckireview}.
We recall that a quantum state $\rho\in M_n(\mathbb C)\otimes M_k(\mathbb C)$  (here  $M_n(\mathbb C)$ denotes the space of all $n\times n$ complex matrices) is called \emph{separable} \cite{wernerentangled} if it can be written as
$$\rho=\sum_i p_i e_ie_i^* \otimes f_if_i^*$$
with $p_i\geq 0$, $\sum_i p_i=1$, $e_i\in\mathbb{C}^n$, $f_i\in\mathbb{C}^k$ (throughout the paper we will identify states with their density matrices). States which are not separable are called \emph{entangled}. Note that the set of separable states ($\mathrm{SEP}$) is a convex subset of the convex set of all states. The extremal points of $\mathrm{SEP}$ are the pure product states, i.e.\ tensor products of one-dimensional projectors.

The separability  problem has been proved to be $NP$-hard \cite{Gu.2003-proc}.
It can be mathematically related to positive maps on $C^*$-algebras since a quantum state $\rho\in M_n(\mathbb C)\otimes M_k(\mathbb C)$ is separable if and only if $(\mathrm{id}_n\otimes P)(\rho)$ is positive-semidefinite for all positive maps $P:M_k(\mathbb C)\to M_m(\mathbb C)$ and all positive integers $m \in \mathbb N$,  where $\mathrm{id}_n$ is the identity map on some matrix algebra with appropriate dimension (here, $n$) \cite{horodeckiPPTcriterion}. Thus, each fixed positive map gives a necessary condition for separability. For example, the \emph{positive partial transpose} ($\mathrm{PPT}$) criterion corresponds to the choice $P=\Theta$, where $\Theta$ denotes the transposition map on $M_k(\mathbb C)$. Moreover,  the $\mathrm{PPT}$ criterion is also sufficient for $nk\leq 6$  \cite{woro,horodeckiPPTcriterion}, but this equivalence is wrong in higher dimensions.

An alternative choice of the positive map $P$ is the \emph{reduction map}
$$R:M_k(\mathbb C)\to M_k(\mathbb C), \; R(X):=I_k\cdot\mathrm{Tr}[X]-X,$$
and the corresponding separability test is called  \emph{reduction $(\mathrm{RED})$ criterion}  \cite{cerf,horodeckireduction}.
The reduction criterion is weaker than the $\mathrm{PPT}$ criterion: if a state violates the reduction criterion, then it also violates the $\mathrm{PPT}$ criterion \cite{horodeckireduction}. Conversely, there exist states (some entangled Werner states \cite{wernerentangled}) which satisfy the reduction criterion but violate the $\mathrm{PPT}$ criterion. On the other hand, the two criteria are equivalent if the subsystem on which the reduction map is applied is a qubit \cite{cerf}.
The importance of the reduction criterion stems from its connection to entanglement distillation \cite{horodeckireduction}: any state which violates the reduction criterion is distillable. Recall that a bipartite entangled state is distillable if a pure maximally entangled state can be obtained arbitrarily closely, by local quantum operations and classical communication, from many copies of that state.

A possible approach  to the separability problem is to study  \emph{absolutely separable states} (ASEP), i.e.\  states  that
remain separable under any global unitary transformation  \cite{kus}. Since absolute separability is a purely spectral property, the problem is to find conditions on the spectrum that characterize absolutely separable states, i.e.\ to find constrains on the eigenvalues of a state $\rho$ guaranteeing that $\rho$ is separable with respect to any decomposition of the corresponding product tensor space \cite{knil}. This problem was first fully solved in the qubit-qubit case in \cite{verstraeteaudenaert}.
Furthermore, it is known that there is a ball of known Euclidean radius centered at the maximally-mixed state $\frac{1}{nk} (I_n\otimes I_k)$ such that every state within this ball is separable \cite{largestseparableball} (see also \cite{zycz}), meaning that any state within this ball is actually absolutely separable. However, there exist absolutely separable states outside of this ball \cite[Appendix B]{vidaltarrach} (cf.\ Remark \ref{ASEPoutsideSEPBALL}).
In analogy to absolutely separable states, states which remain $\mathrm{PPT}$ under any global unitary transformation are called  \emph{absolutely $\mathrm{PPT}$ states} ($\mathrm{APPT}$) \cite{zycz}.
Necessary and sufficient conditions on the spectrum of these states are given in \cite{hil}, in the form of a finite set (albeit exponentially large in the dimension) of linear matrix inequalities. Finally, it was shown that in the qudit-qubit case ($\mathbb{C}^n\otimes\mathbb{C}^2$ quantum systems) the set of  absolutely $\mathrm{PPT}$ states coincides with the set of absolutely separable states  \cite{joh}, meaning that one also has a finite necessary and sufficient criterion for absolute separability in the case where one of the subsystems is a qubit.

In this paper, we introduce and characterize the set of  \emph{absolutely $\mathrm{RED}$ states}, i.e.  states which remain positive under the reduction map ($\mathrm{RED}$) applied to one subsystem  after any global unitary transformation. Our main result (Theorem \ref{thm:ared}) provides  a necessary and sufficient condition on the spectrum under which a state is absolutely $\mathrm{RED}$. This condition can be stated in the form of a family of linear inequalities in terms of the spectrum of the reduction
of a pure state given by its Schmidt coefficients (Theorem \ref{thm:red-pure}). Moreover, we obtain an explicit criterion for pseudo-pure states to be absolutely $\mathrm{RED}$ (Proposition \ref{exampleprop}). We also provide
simple polyhedral approximations of the set of absolutely $\mathrm{RED}$ states by establishing upper and lower bounds on it (Theorem \ref{intermediatecriteriathm}).
Additionally,  a linear sufficient condition for a state to be absolutely $\mathrm{PPT}$  is obtained in Theorem \ref{simplesufftheoremforAPPT}, which is simpler than Hildebrand's condition \cite{hil} which consists in checking the positivity of an exponential number of Hermitian matrices. As a consequence, we deduce a lower bound for the set of  absolutely $\mathrm{PPT}$ states.

\medskip

\noindent\emph{Note added:} After completion of the present work, we became aware of the recent paper \cite{johnstonmay2014}, which investigates the relationship between the set of absolutely separable states and the set of absolutely PPT states, providing evidence for the conjecture $\mathrm{ASEP} = \mathrm{APPT}$. The content of our Proposition \ref{boundaryproposition} is implicit in the proof of \cite[Proposition 1]{johnstonmay2014}.

\medskip

\noindent \textit{Acknowledgments.} We would like to thank Marco Piani for making us aware of Ref.\ \cite{marcopiani} (see around Proposition \ref{marcoremark} below).
The authors also like to thank to the referee for his/hers useful comments and suggestions meant to improve the quality of our paper.

The work of M.A.J. and N.L. was supported by a grant of the Romanian National Authority for Scientific Research, CNCS-UEFISCDI, project number  PN-II-ID-JRP-RO-FR-2011-2-0007.
I.N.'s research has been supported by a von Humboldt fellowship and by the ANR projects {OSQPI} {2011 BS01 008 01} and {RMTQIT}  {ANR-12-IS01-0001-01}. D.R. acknowledges support from an EU Marie Curie grant, number 298742 QUINTYL.

\section{The absolute reduction criterion}\label{preliminariessection}
The set of density operators (positive-semidefinite matrices of unit trace) acting on $\mathbb C^d$ is denoted by $D_d$. In this work we will mostly be concerned with bipartite quantum systems represented on a tensor product Hilbert space $\mathbb C^n\otimes \mathbb C^k\equiv\mathbb C^{nk}$, and we denote the set of quantum states on such a bipartite system also by $D_{n,k}$ with the subscripts indicating the bipartition. Except for Sections \ref{sectiondecompofdifferentdimensions} and \ref{remarkssection}, $n$ will denote the Hilbert space dimension of the first tensor factor and $k$ that of the second one. We will always take $n,k\geq2$ as the questions become trivial otherwise.

We denote the set of separable states \cite{wernerentangled,horodeckireview} in $D_{n,k}$ by
$$\mathrm{SEP}_{n,k}:=\{\rho\in D_{n,k}\,|\,\rho~\text{separable}\}.$$
A central goal in quantum information theory is to find upper and lower approximations to $\mathrm{SEP}_{n,k}$ \cite{horodeckireview}.

On any matrix algebra $M_d(\mathbb C)$, we define the \emph{reduction map}
\begin{align*}
R:M_d(\mathbb C)\to M_d(\mathbb C), \qquad R(X):=I_d\cdot\mathrm{Tr}[X]-X,
\end{align*}
where $I_d\in M_d(\mathbb{C})$ is the identity matrix of size  $d$ and $\mathrm{Tr}$ is the usual, unnormalized, matrix trace. From the definition, it follows that the map $R$ is positive, i.e. $R(X) \geq 0$ whenever $X \geq 0$.
We write the transposition map on any matrix algebra $M_d(\mathbb C)$ as $\Theta$, and we also write $\Theta(X)\equiv X^T$. We point out that the reduction map $R$ is completely co-positive, i.e. $R\Theta : X\mapsto I_d\cdot\mathrm{Tr}[X]-\Theta(X)$ is a completely positive map \cite{tomyiama}.

For a bipartite matrix $X= X_{AB} \in M_n(\mathbb C) \otimes M_k(\mathbb C)\equiv M_{nk}(\mathbb C)$, its reduction over the second subsystem ($B$) is denoted by
$$X^{red} := ({\rm id}_n\otimes R)(X_{AB}) = X_A \otimes I_k - X_{AB},$$
where $X_A: = (\mathrm{id}_n \otimes \mathrm{Tr})(X)$ denotes the partial trace over $(B)$ of the matrix $X=X_{AB}$ (cf.\ \cite{nielsenchuang} for these general notions). We denote the reduction over the first subsystem ($A$) by
$$X^{red'}:=(R\otimes{\rm id}_k)(X_{AB})=I_{n}\otimes X_B-X_{AB}.$$
We denote the \emph{partial transposition} of a bipartite matrix $X=X_{AB}\in M_n(\mathbb C) \otimes M_k(\mathbb C)$ by
$$X^\Gamma:=({\rm id}_n\otimes\Theta)(X).$$


As is well known, every positive map $P$ on $M_k(\mathbb C)$ defines an entanglement criterion \cite{horodeckiPPTcriterion,horodeckireview}: if, for $\rho\in D_{n,k}$, the matrix $({\rm id}_n\otimes P)(\rho)$ is not positive-semidefinite, then $\rho$ is entangled. Specializing to the reduction map $P=R$, this becomes the \emph{reduction criterion} \cite{horodeckireduction,cerf}, which is also related to the distillability of the state in question \cite{horodeckireview}. Every bipartite state whose entanglement is detected by the reduction criterion is also detected by the partial transposition criterion \cite{perescriterion,horodeckiPPTcriterion}, which is the above criterion for the map $P=\Theta$; this follows due to the representation of $R$ as the composition of $\Theta$ with the completely positive map $X\mapsto I_k\cdot\mathrm{Tr}[X]-\Theta(X)$.

The set of density operators $\rho\in D_{n,k}$ having positive reductions with respect to the second resp.\ first tensor factor for the fixed tensor decomposition $M_{nk}(\mathbb C) \cong M_n(\mathbb C) \otimes M_k(\mathbb C)$ is denoted by
$$\mathrm{RED}_{n,k}:= \{\rho \in D_{n,k} \,|\, \rho^{red} \geq 0\}\qquad\text{and}\qquad\mathrm{RED}'_{n,k}:=\{\rho \in D_{n,k} \,|\, \rho^{red'} \geq 0\}.$$
Moreover, we shall denote by
$$\mathrm{RED}_{n,k}'' := \mathrm{RED}_{n,k} \cap \mathrm{RED}_{n,k}' = \{\rho \in D_{n,k} \,|\, \rho^{red} \geq 0 \text{ and } \rho^{red'} \geq 0\},$$ the set of density matrices which have \emph{both} reductions positive. The above described entanglement criterion \cite{horodeckiPPTcriterion} implies the inclusion $\mathrm{SEP}_{n,k}\subseteq\mathrm{RED}''_{n,k}$ \cite{horodeckireduction,cerf}. Recall also that the set of states with positive partial transpose is
$$\mathrm{PPT}_{n,k}:=\{\rho\in D_{n,k}\,|\,\rho^\Gamma\geq0\}.$$
Note that, when $k=2$, the reduction and the PPT criterion are equivalent \cite{horodeckireduction,cerf,jln}, i.e.\ they detect entanglement for the same states, so that $\mathrm{PPT}_{n,2}=\mathrm{RED}_{n,2}$. Furthermore, it is well known that $\mathrm{SEP}_{n,k}=\mathrm{PPT}_{n,k}$ whenever $nk\leq6$ \cite{horodeckiPPTcriterion}. Occasionally we will write $\mathrm{RED}$ instead of $\mathrm{RED}_{n,k}$ etc., as the dimensions of the subsystems will be clear from the context most of the time.

We introduce the $(d-1)$-dimensional \emph{probability simplex}:
$$\Delta_d~:=~\{x = (x_1,\ldots,x_d)\in\mathbb R^d\,|\,\forall i:x_i\geq0,\,\sum_{i=1}^dx_i=1\}\,.$$
Any permutation-invariant set $A\subseteq\Delta_d$ defines the subset $\tilde A:=\{\rho\in D_d\,|\,{\rm spec}\left(\rho\right)\in A\}$ of all density matrices whose spectrum lies in $A$ (including multiplicities of eigenvalues; here we identify $\mathrm{spec}\left(\rho\right)$ as the vector of eigenvalues of $\rho$). Conversely, any set $\tilde A\subseteq D_{d}$ which is invariant under all unitary conjugations, i.e.\ $U\tilde AU^*=\tilde A$ $\forall$ unitaries $U\in M_d(\mathbb C)$, can be uniquely identified with a set of spectra $A\subseteq\Delta_d$. Throughout the paper, we freely identify $A\equiv\tilde A$ for such subsets of quantum states for which membership is decided by spectral information alone.

\medskip

Starting from the subsets of bipartite quantum states introduced above, we now define special such spectral sets:
\begin{definition}\label{def:ARED}\rm
The set of states which remain $\mathrm{RED}$ (i.e.\ positive under the reduction map applied to the second tensor factor) under any global unitary transformation $U\in\mathcal U_{nk}$ is denoted by $\mathrm{ARED}$ (``\emph{absolutely} $\mathrm{RED}$''):
$$\mathrm{ARED}_{n,k}~:=~\{ \rho \in D_{n,k} \,|\, \forall U \in \mathcal U_{nk}: \, (U\rho U^*)^{red} \geq 0 \} = \bigcap_{U \in \mathcal U_{nk}} U \mathrm{RED}_{n,k} U^*.$$
Similarly:
\begin{align*}
\mathrm{ARED}'_{n,k}~&:=~\bigcap_{U \in \mathcal U_{nk}} U \mathrm{RED}'_{n,k} U^*\,,\\
\mathrm{ARED}''_{n,k}~&:=~\bigcap_{U \in \mathcal U_{nk}} U \mathrm{RED}''_{n,k} U^*=\mathrm{ARED}_{n,k} \cap \mathrm{ARED}'_{n,k}\,,\\
\mathrm{APPT}_{n,k}~&:=~\bigcap_{U\in\mathcal U_{nk}}U\mathrm{PPT}_{n,k}U^*\,,\\
\mathrm{ASEP}_{n,k}~&:=~\bigcap_{U\in\mathcal U_{nk}}U\mathrm{SEP}_{n,k}U^*\,.
\end{align*}
\end{definition}
The fact that $\rho^\Gamma=({\rm id}_n\otimes\Theta)(\rho)$ and $(\Theta\otimes{\rm id}_k)(\rho)$ have the same spectrum implies, together with identifying $\mathrm{APPT}_{n,k}$ as a subset of $\Delta_{nk}=\Delta_{kn}$ as described above, that $\mathrm{APPT}_{n,k}=\mathrm{APPT}_{k,n}$; similarly, $\mathrm{ASEP}_{n,k}=\mathrm{ASEP}_{k,n}$. The set $\mathrm{ARED}_{n,k}$ does generally not share this invariance as the dimension of the subsystem to which the reduction map is applied does matter, see Section \ref{sectiondecompofdifferentdimensions} and also the more explicit examples in Sections \ref{qubitsection} and \ref{examplesection}; it is however true, from the definition, that $\mathrm{ARED}_{n,k}''=\mathrm{ARED}_{k,n}''$.

More generally than in Definition \ref{def:ARED}, we may define for any subset $C\subseteq D_{n,k}$ the set
$$\mathrm{A}C:=\{ \rho \in D_{n,k} \,|\, \forall U \in \mathcal U_{nk}: \, U\rho U^*\in C \} = \bigcap_{U \in \mathcal U_{nk}} U C U^*.$$
Then we get the following:
\begin{lemma}\label{lemmaAC}
Let $C\subseteq D_{n,k}$ be a convex set. If $\rho\in\mathrm{A}C$ majorizes $\sigma\in D_{n,k}$, i.e.\ if $\sigma\prec\rho$, then $\sigma\in\mathrm{A}C$.
\end{lemma}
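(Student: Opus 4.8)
The plan is to reduce the whole statement to the classical characterization of majorization between density matrices due to Uhlmann. First I would recall that, for $\rho,\sigma\in D_{nk}$ acting on the same space $\mathbb C^{nk}$, the relation $\sigma\prec\rho$ (the spectrum of $\sigma$ is majorized by that of $\rho$) holds if and only if $\sigma$ lies in the convex hull of the unitary orbit of $\rho$. Concretely, there exist unitaries $V_1,\dots,V_m\in\mathcal U_{nk}$ and weights $p_1,\dots,p_m\geq0$ with $\sum_i p_i=1$ such that
$$\sigma=\sum_i p_i\,V_i\rho V_i^*.$$
This is a standard noncommutative analogue of the Hardy--Littlewood--P\'olya theorem, realizing the doubly stochastic matrix relating the two spectra as a mixture of unitary conjugations.

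Granting this representation, the rest is a short computation exploiting convexity of $C$ together with the unitary invariance built into the definition of $\mathrm{A}C$. I would fix an arbitrary $U\in\mathcal U_{nk}$ and write
$$U\sigma U^*=\sum_i p_i\,(UV_i)\rho(UV_i)^*,$$
observing that each $UV_i$ is again a unitary. Since $\rho\in\mathrm{A}C=\bigcap_{W}WCW^*$, testing membership against the unitary $UV_i$ gives $(UV_i)\rho(UV_i)^*\in C$ for every $i$. The displayed identity then exhibits $U\sigma U^*$ as a convex combination of elements of $C$, so convexity of $C$ yields $U\sigma U^*\in C$. As $U$ was arbitrary, this is precisely the statement $\sigma\in\mathrm{A}C$.

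The only genuine content is the first step, so the main obstacle is to have Uhlmann's theorem cleanly available and to verify its hypotheses apply: both $\rho$ and $\sigma$ are density matrices on $\mathbb C^{nk}$, so the majorization $\sigma\prec\rho$ between their spectra is exactly the relation it characterizes. Everything afterward is formal; in particular no property of $C$ beyond convexity is used, and the unitary orbit structure of $\mathrm{A}C$ does all the work. Alternatively, I could present the argument intrinsically: $\mathrm{A}C$ is convex (an intersection of the convex sets $UCU^*$) and invariant under all unitary conjugations (since $W(\mathrm{A}C)W^*=\bigcap_U(WU)C(WU)^*=\mathrm{A}C$), hence any convex unitarily-invariant set containing $\rho$ must contain the whole convex hull of its unitary orbit, which by Uhlmann contains every $\sigma\prec\rho$.
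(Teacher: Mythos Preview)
Your proof is correct and essentially identical to the paper's: both invoke Uhlmann's theorem to write $\sigma$ as a convex combination of unitary conjugates of $\rho$, then use convexity (you apply it to $C$ directly, the paper to $\mathrm{A}C$ as an intersection of convex sets) together with the unitary invariance of $\mathrm{A}C$. Your alternative ``intrinsic'' formulation is in fact exactly the paper's argument.
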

\begin{proof}By the quantum generalization of Birkhoff's Theorem for majorization \cite{uhlmann}, there exist unitary matrices $U_j\in \mathcal U_{nk}$ and a probability distribution $\{p_j\}$ such that $$\sigma=\sum_jp_jU_j\rho U_j^*.$$ Now, $\rho\in\mathrm{A}C=\bigcap_{U \in \mathcal U_{nk}} UCU^*$ implies $U_j\rho U_j^*\in\mathrm{A}C$ for all $j$. Since $\mathrm{A}C$ is convex as an intersection of convex sets $UCU^*$, we have $\sigma=\sum_jp_jU_j\rho U_j^*\in\mathrm{A}C$.
\end{proof}

Thus, the set $\mathrm{ASEP}_{n,k}$  is ``majorization-invariant'', since $\mathrm{SEP}_{n,k}$ is convex by definition; the same reasoning holds for the sets $\mathrm{ARED}_{n,k}$ and $\mathrm{APPT}_{n,k}$. See also Lemma \ref{lemmaAZcirc} for a proof using another characterization.

Finally, we introduce some general notation. We denote $[n]:=\{1,2,\ldots,n\}$. For any vector $\lambda\in\mathbb R^d$, we denote by $\lambda^\uparrow\in\mathbb R^d$ the vector having the same entries ordered increasingly, i.e.\ $\lambda^\uparrow_1\leq\lambda^\uparrow_2\leq\ldots\leq\lambda^\uparrow_d$; similarly, we define the decreasingly-ordered vector $\lambda^\downarrow$.

\section{Reductions of pure states}

The main ingredient in the proof of our main contribution, Theorem \ref{thm:ared}, is the following result, giving the spectrum of the reduction of a pure state in terms of its Schmidt coefficients \cite{nielsenchuang}.

\begin{theorem}\label{thm:red-pure}
Let $\psi \in \mathbb C^n \otimes \mathbb C^k$ be a vector having Schmidt decomposition
$$\psi = \sum_{i=1}^r \sqrt{x_i} e_i \otimes f_i,$$
where $r \leq \min(n,k)$ is the Schmidt rank of $\psi$, $x_i >0$, and $(e_i)_{i=1}^{n}$, $(f_i)_{i=1}^{k}$ are orthonormal families in $\mathbb C^n$ and $\mathbb C^k$, respectively. If the set of Schmidt coefficients $\{x_i\}_{i=1}^r$ is equal to $\{x_1 > x_2 > \cdots > x_q\}$ and the $x_i$ have multiplicities $m_i$ $(i=1,\ldots, q)$, the eigenvalues of the reduced projection on $\psi$ are
$$\mathrm{spec}\left((\psi\psi^*)^{red}\right)= \{x_1>\eta_1> \cdots >\eta_{q-1}> x_q > 0 \geq \eta_q\},$$
where the eigenvalues $x_i$ have multiplicities $m_ik-1$, the eigenvalues $\eta_i$ are simple and the null eigenvalue has multiplicity $(n-r)k$. The eigenvalues $\eta_i$ are the $q$ real solutions of the equation $F_x(\lambda)=0$, where
$$F_x(\lambda) := \sum_{i=1}^q \frac{m_i x_i}{x_i-\lambda} - 1.$$
Finally, if $r>1$, then $\eta_q < 0$.
\end{theorem}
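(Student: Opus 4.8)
The plan is to compute the reduction operator $M := (\psi\psi^*)^{red} = (\psi\psi^*)_A \otimes I_k - \psi\psi^*$ explicitly and diagonalize it using the product structure of the Schmidt basis. The partial trace over the second factor gives $(\psi\psi^*)_A = \sum_{i=1}^r x_i\, e_i e_i^*$, so I would first complete $(e_i)_{i=1}^r$ and $(f_i)_{i=1}^r$ to orthonormal bases of $\mathbb C^n$ and $\mathbb C^k$ and evaluate $M$ on the product basis $\{e_a\otimes f_b\}$. A direct computation yields $M(e_a\otimes f_b)=x_a\,e_a\otimes f_b$ whenever $a\le r$ and $b\ne a$, and $M(e_a\otimes f_b)=0$ whenever $a>r$. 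Hence the $(n-r)k$ vectors with $a>r$ account for the null eigenvalue with multiplicity $(n-r)k$, while the off-diagonal vectors $e_a\otimes f_b$ ($a\le r$, $b\ne a$) are eigenvectors with eigenvalue $x_a$, contributing $m_i(k-1)$ copies of each distinct value $x_i$. Only the $r$-dimensional diagonal subspace $\mathrm{span}\{e_a\otimes f_a : a\le r\}$ is genuinely mixed, and there $M$ acts as the rank-one perturbation $D-vv^*$ with $D=\mathrm{diag}(x_1,\dots,x_r)$ and $v=(\sqrt{x_1},\dots,\sqrt{x_r})^T$.

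Next I would diagonalize this block via the matrix determinant lemma: its characteristic polynomial is $\det(D-\lambda I)\,(1-v^*(D-\lambda I)^{-1}v)=-\prod_{i=1}^q(x_i-\lambda)^{m_i}\,F_x(\lambda)$, after grouping equal Schmidt coefficients so that $v^*(D-\lambda I)^{-1}v=\sum_{i=1}^q \frac{m_i x_i}{x_i-\lambda}$. Factoring $(x_i-\lambda)^{m_i-1}$ out of this polynomial shows that each $x_i$ remains an eigenvalue of the block with multiplicity $m_i-1$; concretely, any vector of the $x_i$-eigenspace of $D$ orthogonal to $v$ is fixed by the perturbation, and such vectors span an $(m_i-1)$-dimensional space because $v$ has nonzero component $\sum_{a:x_a=x_i}\sqrt{x_a}$ there. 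The remaining $q$ eigenvalues of the block are the roots $\eta_i$ of $F_x$, each simple since $F_x$ has a pole at every $x_i$. Adding the $m_i(k-1)$ off-diagonal copies to these $m_i-1$ copies gives total multiplicity $m_ik-1$ for each $x_i$, and $\sum_{i=1}^q(m_ik-1)+q+(n-r)k=nk$ confirms the count.

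To locate the $\eta_i$ I would analyze $F_x$ on the real axis. Since $F_x'(\lambda)=\sum_i \frac{m_ix_i}{(x_i-\lambda)^2}>0$, the function is strictly increasing between consecutive poles $x_1>\dots>x_q>0$; the limits $F_x(x_{i+1}^+)=-\infty$ and $F_x(x_i^-)=+\infty$ place exactly one root $\eta_i$ in each interval $(x_{i+1},x_i)$ for $i=1,\dots,q-1$, the limits $F_x(x_q^-)=+\infty$ and $F_x(-\infty)=-1$ place $\eta_q$ in $(-\infty,x_q)$, and on $(x_1,\infty)$ the function stays negative so no root is lost. This yields the claimed interlacing $x_1>\eta_1>\dots>\eta_{q-1}>x_q>\eta_q$.

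Finally, the concluding claim follows by evaluating $F_x(0)=\sum_{i=1}^q m_i-1=r-1$. If $r>1$ this is strictly positive, and since both $0$ and $\eta_q$ lie in the interval $(-\infty,x_q)$ on which $F_x$ is strictly increasing, $F_x(\eta_q)=0<F_x(0)$ forces $\eta_q<0$ (for $r=1$ one instead gets $F_x(0)=0$, i.e.\ $\eta_q=0$, matching the positive reduction of a product state). I expect the main obstacle to be the multiplicity bookkeeping in the degenerate case $m_i>1$: one must argue rigorously that the rank-one update lowers the multiplicity of each $x_i$ by exactly one and introduces precisely the $q$ simple eigenvalues $\eta_i$, after which the interlacing and the final inequality are routine consequences of the monotonicity of $F_x$.
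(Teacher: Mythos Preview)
Your proposal is correct and follows essentially the same route as the paper: decompose $(\psi\psi^*)^{red}$ relative to the Schmidt product basis into the null block, the off-diagonal eigenvectors $e_a\otimes f_b$ with $b\neq a$, and the $r\times r$ diagonal block $D-vv^*$, then diagonalize the latter via the determinant/secular-equation argument and read off the interlacing from the monotonicity of $F_x$. The paper packages the rank-one perturbation analysis into a separate lemma but the computations and the use of $F_x(0)=r-1$ for the sign of $\eta_q$ are the same.
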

\begin{proof}
First, compute
$$\tau:=(\psi\psi^*)^{red} = \sum_{i=1}^rx_ie_ie_i^* \otimes I_k - \psi\psi^*$$
and observe immediately that $\tau$ has support included in $\mathrm{span}(e_i)_{i=1}^r \otimes \mathbb C^k$, so the null space of $\tau$ has dimension at least $(n-r)k$. Moreover, notice that for all $(i,j) \in [r]\times [k]$, $i\neq j$ we have that $\tau(e_i \otimes f_j) = x_i e_i \otimes f_j$, so each of the eigenvalues $x_i$ has multiplicity $k-1$ (here, we consider the Schmidt coefficients $\{x_i\}_{i=1}^r$ with multiplicities). The above discussion completely describes the action of $\tau$ on the space $\left(\mathrm{span}(e_i \otimes f_i)_{i=1}^r\right)^\perp$.

The action of $\tau$ on $\mathrm{span}(e_i \otimes f_i)_{i=1}^r$ has the following matrix in the ``canonical'' basis $(e_i \otimes f_i)_{i=1}^r$:
$$\forall i,j \in [r], \quad M_\tau(i,j) = x_i \delta_{ij} - \sqrt{x_ix_j}.$$
The claim follows now from Lemma \ref{MXlemma}.
\end{proof}

\begin{lemma}\label{MXlemma}For $i\in[r]$, let $x_i>0$, ordered in such a way that the sets $\{x_j|j\in[r]\}$ and $\{x_1 > x_2 > \cdots > x_q\}$ equal each other, and $x_i$ has multiplicity $m_i~(i\in[q])$ . Define the matrix $M\in M_r(\mathbb R)$ with entries $M_{ij}:=x_i\delta_{ij}-\sqrt{x_ix_j}~(i,j\in[r])$. Then:
\begin{enumerate}
\item The eigenvalues of $M$ are
\begin{align*}
\forall i \in [q], \quad \lambda=x_i, \quad &\text{with multiplicity } m_i -1,\\
\forall i \in [q], \quad \lambda=\eta_i, \quad &\text{with multiplicity } 1,
\end{align*}
where $\eta_1> \cdots > \eta_q$ are the $q$ real solutions of the equation $\sum_{i=1}^q \frac{m_i x_i}{x_i-\eta}=1.$
\item\label{MXlemmatrace0}It is $x_1>\eta_1>x_2>\eta_2>\cdots>\eta_{q-1}>x_q>0\geq\eta_q=-\sum_{i=1}^{q-1}\eta_i-\sum_{i=1}^q(m_i-1)x_i$.
\item It is $\|M\|=-\eta_q\leq\frac{r-1}{r}\sum_{i=1}^rx_i$, with equality if and only if $x_i=x_j~\forall i,j\in[r]$.
\end{enumerate}
\end{lemma}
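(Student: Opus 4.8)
The plan is to recognize $M$ as a rank-one perturbation of a diagonal matrix and to exploit this structure throughout. Writing $D=\mathrm{diag}(x_1,\ldots,x_r)$ and $v=(\sqrt{x_1},\ldots,\sqrt{x_r})^T\in\mathbb R^r$, one has $M=D-vv^T$, a real symmetric matrix, and I would note at once that $\mathrm{Tr}\,M=0$ since every diagonal entry $M_{ii}=x_i-x_i$ vanishes. This trace identity will carry most of the weight in parts (2) and (3).

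For part (1) I would split the eigenvalues into two groups. Inside the $m_i$-dimensional eigenspace of $D$ for the value $x_i$, the projection of $v$ spans a one-dimensional subspace, so its orthogonal complement within that eigenspace (of dimension $m_i-1$) is annihilated by $vv^T$ and hence consists of eigenvectors of $M$ with eigenvalue $x_i$; this yields $x_i$ with multiplicity at least $m_i-1$. For the remaining eigenvalues I would invoke the standard rank-one determinant identity $\det(M-\lambda I)=\det(D-\lambda I)\bigl(1-v^T(D-\lambda I)^{-1}v\bigr)$, valid for $\lambda\notin\{x_i\}$, which shows they are exactly the solutions of $\sum_{i=1}^q\frac{m_ix_i}{x_i-\lambda}=1$. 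Setting $f(\lambda):=\sum_{i=1}^q\frac{m_ix_i}{x_i-\lambda}$, differentiation gives $f'(\lambda)=\sum_i\frac{m_ix_i}{(x_i-\lambda)^2}>0$, so $f$ is strictly increasing on each interval cut out by its poles $x_1>\cdots>x_q$. Inspecting the boundary behaviour on $(x_{i+1},x_i)$ (from $-\infty$ to $+\infty$, one root each), on $(-\infty,x_q)$ (from $0^+$ to $+\infty$, one root), and on $(x_1,\infty)$ (from $-\infty$ to $0^-$, no root), I would conclude there are exactly $q$ simple roots $\eta_1>\cdots>\eta_q$, interlaced as $x_1>\eta_1>x_2>\cdots>x_q>\eta_q$. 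A dimension count $(r-q)+q=r$ confirms these exhaust the spectrum and that each $x_i$ has multiplicity exactly $m_i-1$.

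Part (2) is then immediate: the interlacing comes from the root localization above, each $\eta_i$ with $i<q$ lies in $(x_{i+1},x_i)\subseteq(0,\infty)$, and $\mathrm{Tr}\,M=0$ gives $\sum_{i=1}^q\eta_i+\sum_{i=1}^q(m_i-1)x_i=0$, which rearranges to the stated formula for $\eta_q$; since the $\eta_i$ with $i<q$ and all $x_i$ are positive, this forces $\eta_q\leq0$.

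Part (3) is where the main work lies. First I would establish $\|M\|=-\eta_q$: because every eigenvalue other than $\eta_q$ is nonnegative and $\mathrm{Tr}\,M=0$, the sum of those eigenvalues equals $-\eta_q$, which is therefore at least the largest positive eigenvalue, so $\|M\|=\max(\lambda_{\max}(M),-\eta_q)=-\eta_q$. For the upper bound I would prove $\eta_q\geq-\frac{r-1}{r}S$, where $S:=\sum_{i=1}^rx_i$, by testing the candidate $c:=-\frac{r-1}{r}S$ against the increasing function $f$ on $(-\infty,x_q)$: since $f(\eta_q)=1$, it suffices to show $f(c)\leq1$. Writing $a=\frac{r-1}{r}S$ and using $\frac{m_ix_i}{x_i+a}=m_i-\frac{m_ia}{x_i+a}$, the inequality $f(c)\leq1$ reduces to $\sum_i\frac{m_i}{x_i+a}\geq\frac{r}{S}$, which follows from Jensen's inequality applied to the strictly convex map $t\mapsto\frac{1}{t+a}$ with weights $m_i/r$ of total mass one and weighted mean $\sum_i\frac{m_i}{r}x_i=S/r$; the shift $a$ is chosen precisely so that $\frac{S}{r}+a=S$, making the Jensen estimate tight. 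Strict convexity then shows equality holds if and only if all $x_i$ coincide, finishing part (3). The conceptual obstacle throughout is part (3): first recognizing that the trace-zero structure pins the operator norm to the single negative eigenvalue, and then discovering that the sharp test point $c=-\frac{r-1}{r}S$ is exactly the one at which the Jensen bound is saturated.
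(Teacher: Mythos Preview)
Your argument is correct. Parts (1) and (2) proceed essentially as in the paper: both identify $M=D-vv^T$ as a rank-one perturbation, derive the secular equation $\sum_i\frac{m_ix_i}{x_i-\lambda}=1$ via the determinant identity, obtain interlacing from the monotonicity of $f$ on each interval between poles, and read off the formula for $\eta_q$ from $\mathrm{Tr}\,M=0$.

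Part (3) is where your route genuinely diverges. For $\|M\|=-\eta_q$, the paper splits into cases on whether $m_1=1$ or $m_1\geq2$ and then invokes the trace formula; your observation that the sum of all nonnegative eigenvalues equals $-\eta_q$, hence $\lambda_{\max}(M)\leq-\eta_q$, is cleaner and case-free. For the inequality $-\eta_q\leq\frac{r-1}{r}S$, the paper bounds the quadratic form $w^TMw$ directly by two applications of Cauchy--Schwarz, obtaining $w^TMw\geq-\frac{r-1}{r}S\,w^Tw$ for all $w$. You instead stay on the secular-equation side: you test the candidate $c=-\frac{r-1}{r}S$ and show $f(c)\leq1$ by rewriting $f(c)=r-a\sum_i\frac{m_i}{x_i+a}$ and applying Jensen to the strictly convex map $t\mapsto(t+a)^{-1}$, with the key identity $\frac{S}{r}+a=S$ making the bound exact. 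Both arguments yield the same equality characterization (all $x_i$ equal) via the equality case of Cauchy--Schwarz resp.\ Jensen. The paper's approach is perhaps more self-contained (it does not reuse the secular equation), while yours has the appeal of exploiting the machinery already set up in part (1) and making transparent why the particular constant $\frac{r-1}{r}$ emerges.
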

\begin{proof}(1) Let $D\in M_r(\mathbb R)$ be the positive definite diagonal matrix with entries $D_{ij}=x_i\delta_{ij}$, and $v\in\mathbb R^{r}$ be the vector with entries $v_i=\sqrt{x_i}$. The characteristic polynomial of $M$ is then $P(X)=\det[D-XI_r-vv^*]$. The matrix $(D-xI_r)$ is invertible for $x\in\mathbb R\setminus\{x_1,\ldots,x_r\}$ with Hermitian inverse, and the characteristic polynomial evaluated at $x$ is therefore:
\begin{align*}
P(x)&=\det\left[D-xI_r-vv^*\right]\\
&=\det\left[(D-xI_r)^{1/2}\left(I_r-(D-xI_r)^{-1/2}vv^*(D-xI_r)^{-1/2}\right)(D-xI_r)^{1/2}\right]\\
&=\det\left[(D-xI_r)^{1/2}\right]\det\left[I_r-(D-xI_r)^{-1/2}vv^*(D-xI_r)^{-1/2}\right]\det\left[(D-xI_r)^{1/2}\right]\\
&=\left(1-v^*(D-xI_r)^{-1}v\right)\det\left[D-xI_r\right]\\
&=\left(1-\sum_{i=1}^r\frac{x_i}{x_i-x}\right)\prod_{j=1}^r(x_j-x)~=~\prod_{i=1}^r (x_i - x) - \sum_{i=1}^r x_i \prod_{j \neq i} (x_j - x).
\end{align*}
Here, we used that $\det[I_r-ww^*]=1-w^*w$ for $w\in\mathbb R^r$. Due to continuity, this last line gives $P(x)$ actually for all $x\in\mathbb R$. The claim about the eigenvalues and their multiplicities follows now immediately. The above method for computing the eigenvalues of a rank-one perturbation to a diagonal matrix is well-known \cite{golubreview,and}, but has been repeated here for convenience.

(2) The interlacing of the eigenvalues $x_i$ and $\eta_i$ follows from the fact that the function $f(\eta):=\sum_{i=1}^q \frac{m_i x_i}{x_i-\eta}$ is strictly increasing on each of the intervals of its domain and from the following relations: $\lim_{\eta \to x_i^{-}} f(\eta) = +\infty$, $\lim_{\eta \to x_i^{+}} f(\eta) = -\infty$, $\lim_{\eta \to \pm\infty} f(\eta) = 0$, and $f(0)=r\geq1$. The expression for $\eta_q$ follows from the fact that ${\rm Tr}[M]=0$ equals the sum of all eigenvalues of $M$ (including multiplicities).

(3) For the Hermitian matrix $M$, it follows from the previous items that
$$\|M\|=
\begin{cases}
\max(\eta_1,-\eta_q)& \text{ if } m_1=1,\\
\max(x_1,-\eta_q)&  \text{ if } m_1\geq2.
\end{cases}
$$
In either case, the expression for $\eta_q$ from item (2) shows then $\|M\|=-\eta_q$.

To prove the inequality, let $w\in\mathbb R^r$ be any vector having components $w_i~(i\in[r])$. Then two applications of the Cauchy-Schwarz inequality give:
\begin{align*}
w^*Mw&=\sum_{i=1}^rx_iw_i^2-\sum_{i,j=1}^rw_i\sqrt{x_i}\sqrt{x_j}w_j=\frac{1}{r}\sum_{i=1}^r(\sqrt{x_i}w_i)^2\,\sum_{j=1}^r1-\left(\sum_{i=1}^r\sqrt{x_i}w_i\right)^2\\
&\geq\frac{1}{r}\left(\sum_{i=1}^r\sqrt{x_i}w_i\cdot1\right)^2-\left(\sum_{i=1}^r\sqrt{x_i}w_i\right)^2=-\left(1-\frac{1}{r}\right)\left(\sum_{i=1}^r\sqrt{x_i}w_i\right)^2\\
&\geq-\frac{r-1}{r}\left(\sum_{i=1}^rx_i\right)\sum_{j=1}^rw_j^2=-\frac{r-1}{r}\left(\sum_{i=1}^rx_i\right)w^*w.
\end{align*}
This shows $\eta_q\geq-\frac{r-1}{r}\sum_{i=1}^rx_i$. The equality statement follows from the equality cases in the Cauchy-Schwarz inequality.
\end{proof}

We now state the above Theorem \ref{thm:red-pure} in a form that allows for a uniform treatment of degenerate and possibly non-positive Schmidt coefficients. It is a simple restatement of results shown in the proofs of Theorem \ref{thm:red-pure} and Lemma \ref{MXlemma}.

\begin{corollary}\label{simplestatementofpurereduction}
For a vector $\psi \in \mathbb C^n \otimes \mathbb C^k$ with non-negative (but possibly non-positive) Schmidt coefficients $\{x_i\}_{i=1}^r$ (w.l.o.g.\ ordered non-increasingly), the eigenvalues of the reduced matrix $(\psi\psi^*)^{red}$ are
\begin{equation*}\label{eq:spect}
\mathrm{spec}\left((\psi\psi^*)^{red}\right) = (\underbrace{x_1, \ldots, x_1}_{k-1 \text{ times}}, \eta_1 ,  \underbrace{x_2, \ldots, x_2}_{k-1 \text{ times}}, \ldots, \eta_{r-1}, \underbrace{x_r, \ldots, x_r}_{k-1 \text{ times}}, \underbrace{0, \ldots, 0}_{(n-r)k \text{ times}},\eta_r) \in \mathbb R^{nk},
\end{equation*}
where $x_i\geq\eta_i\geq x_{i+1}$ for $i\in[r-1]$ and $\eta_r=-\sum_{i=1}^{r-1}\eta_i\leq0$. The set $\{\eta_i\}_{i=1}^r\setminus\{x_i\}_{i=1}^r$ equals the set of solutions $\eta\in\mathbb R\setminus\{x_i\}_{i=1}^r$ to the equation $\sum_{i=1}^r \frac{x_i}{x_i-\eta}=1$.
\end{corollary}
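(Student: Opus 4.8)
The plan is to read the statement as a pure bookkeeping reformulation of Theorem~\ref{thm:red-pure} and Lemma~\ref{MXlemma}, converting the ``distinct eigenvalues with multiplicities'' description into the ``full list with interlaced values'' description. First I recall from the proof of Theorem~\ref{thm:red-pure} the exact action of $\tau:=(\psi\psi^*)^{red}$: on the orthogonal complement of $\mathrm{span}(e_i\otimes f_i)_{i=1}^r$ it is diagonal, producing each Schmidt coefficient $x_i$ exactly $k-1$ times together with the null eigenvalue of multiplicity $(n-r)k$; on $\mathrm{span}(e_i\otimes f_i)_{i=1}^r$ it is unitarily equivalent to the matrix $M$ of Lemma~\ref{MXlemma}. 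The whole spectrum is the disjoint union of these two parts, so everything reduces to re-indexing the eigenvalues of $M$ alongside the repeated copies.

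The core step is a multiplicity count inside each degenerate block. Write the distinct positive Schmidt values as $y_1>\cdots>y_q$ with multiplicities $m_1,\dots,m_q$, and let $\zeta_1>\cdots>\zeta_q$ be the roots of $F_x$ supplied by Lemma~\ref{MXlemma}; listing the coefficients with multiplicity as $x_1\geq\cdots\geq x_r$, a value $y_j$ occupies a consecutive block of length $m_j$. I then define the interlaced sequence by setting $\eta_i:=x_i$ whenever slot $i$ lies strictly inside a block, and $\eta_i:=\zeta_j$ at the $j$-th block boundary (with $\eta_r:=\zeta_q$). The point is the split $m_jk-1=m_j(k-1)+(m_j-1)$: the block of $y_j$ supplies $m_j(k-1)$ explicit copies from the diagonal part and $m_j-1$ internal slots equal to $y_j$, giving exactly the multiplicity $m_jk-1$ of $y_j$ from Theorem~\ref{thm:red-pure}, while the $q$ boundary slots absorb the simple roots $\zeta_j$. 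The inequalities $x_i\geq\eta_i\geq x_{i+1}$ are then immediate: inside a block they are equalities, and at a boundary they are the strict interlacing $y_j>\zeta_j>y_{j+1}$ from part~(2) of Lemma~\ref{MXlemma}.

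The remaining assertions follow directly. Since $\sum_{i=1}^r\frac{x_i}{x_i-\eta}=\sum_{j=1}^q\frac{m_jy_j}{y_j-\eta}=F_x(\eta)+1$, the equation $\sum_{i=1}^r\frac{x_i}{x_i-\eta}=1$ is equivalent to $F_x(\eta)=0$, and by construction the $\eta_i$ split into those equal to some coefficient (the internal slots) and the genuine roots lying off $\{x_i\}$, so that $\{\eta_i\}_{i=1}^r\setminus\{x_i\}_{i=1}^r$ is precisely the root set outside $\{x_i\}$. For the trace identity I compare $\mathrm{Tr}[\tau]=k\sum_i x_i-\mathrm{Tr}[\psi\psi^*]=(k-1)\sum_i x_i$ with the sum $(k-1)\sum_i x_i+\sum_{i=1}^r\eta_i$ of the listed eigenvalues, which forces $\sum_{i=1}^r\eta_i=0$, i.e.\ $\eta_r=-\sum_{i=1}^{r-1}\eta_i$, with $\eta_r=\zeta_q\leq0$ by part~(2) of Lemma~\ref{MXlemma}.

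I expect the genuinely delicate points to be twofold. The first is the degenerate bookkeeping itself: keeping straight which interlaced slots are ``internal'' and which are ``boundary'', and verifying that the above split lines up slot-by-slot with the sorted spectrum. The second is the case of a vanishing coefficient, which is not covered by Theorem~\ref{thm:red-pure} (it assumes strictly positive coefficients): there the corresponding row and column of $M$ vanish, so the coefficient contributes an extra null eigenvalue and drops out of the defining equation, and a root of $F_x$ may merge with the value $0$ rather than staying strictly separated. A clean way to absorb both difficulties uniformly is a continuity argument: establish the claim first for strictly decreasing positive coefficients, where Theorem~\ref{thm:red-pure} applies verbatim with $q=r$, and then pass to the general non-increasing, non-negative case by perturbing the coefficients to be strictly decreasing and positive and letting the perturbation tend to $0$, using continuity of the spectrum of $\tau$ and of the roots of $F_x$; the strict interlacing then degenerates automatically to the weak inequalities $x_i\geq\eta_i\geq x_{i+1}$.
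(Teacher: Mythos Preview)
Your proposal is correct and follows the same approach as the paper, which does not give a separate proof at all but simply declares the corollary to be ``a simple restatement of results shown in the proofs of Theorem~\ref{thm:red-pure} and Lemma~\ref{MXlemma}.'' You have carried out precisely that restatement in detail: the block-by-block multiplicity count $m_jk-1=m_j(k-1)+(m_j-1)$, the assignment of the $\eta_i$ to internal versus boundary slots, and the continuity argument to cover zero or repeated coefficients are exactly the bookkeeping the paper leaves implicit, so your write-up is in fact more thorough than the paper's own treatment.
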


We record the following important definition and notation for later use:
\begin{definition}[``\emph{hat operation}'' $x\mapsto\hat x$]\label{def:hat-xi}\rm
Given  $n,k\geq 2$ and a vector $x \in \mathbb R^r_+$ with $r \leq \min(n,k)$, we associate to $x$ the vector $\psi \in \mathbb C^n \otimes \mathbb C^k$ given by
$$\psi = \sum_{i=1}^r \sqrt{x_i} e_i \otimes f_i,$$
where $(e_i)_{i=1}^n$, $(f_i)_{i=1}^k$ are fixed orthonormal families in $\mathbb C^n$ resp.\ $\mathbb C^k$.
We then define $\hat x$ to be the vector of eigenvalues of the reduction $(\psi\psi^*)^{red}$ of the quantum state $\psi\psi^*\in D_{n,k}$, taken with multiplicities as in Corollary \ref{simplestatementofpurereduction}:
$$\hat x ~:=~ (\underbrace{x_1, \ldots, x_1}_{k-1 \text{ times}}, \eta_1 ,  \underbrace{x_2, \ldots, x_2}_{k-1 \text{ times}}, \ldots, \eta_{r-1}, \underbrace{x_r, \ldots, x_r}_{k-1 \text{ times}}, \underbrace{0, \ldots, 0}_{(n-r)k \text{ times}},\eta_r)\,\in\,\mathbb R^{nk}.$$
We point out that vectors $x$ with repeating or null coordinates are allowed in the above construction.
Note that the ``hat operation'' $x\mapsto\hat x$ does depend on the dimensions $n$ and $k$ and also on the convention that the reduction map is applied to the second tensor factor (corresponding to $\mathbb C^k$ here), but we will leave this dependence implicit most of the time when there is no room for confusion.
\end{definition}

\begin{remark}
Note that, if $\psi$ is entangled (i.e.\ $r>1$), then the matrix $(\psi\psi^*)^{red}$ is not positive since $\eta_r<0$. Hence, the reduction criterion detects pure entanglement.
\end{remark}

\begin{remark}\label{remarktraceremark}
From the definition of the reduction criterion, it follows that
$$\sum_{i=1}^{nk} \hat x_i = \mathrm{Tr} \left[ (\psi\psi^*)^{red} \right]= (k-1)\|\psi\|^2\,,$$
which equals $(k-1)$ if $\psi$ was a normalized vector. More generally, the reduction map $R$ applied to a $k$-dimensional (sub)-system (see Section \ref{preliminariessection}) scales the trace of any matrix by a factor of $(k-1)$.
\end{remark}

We now relate the spectrum of reduced pure states $(\psi\psi^*)^{red}$, as found in Theorem \ref{thm:red-pure} above, to the \emph{entanglement of disturbance} $Q_{D_1,\{\Pi_B\}}(\psi\psi^*)$, which was recently introduced by Piani {\it{et al.}}\ in \cite{marcopiani} for any pure state $\psi\in\mathbb C^n\otimes\mathbb C^k$ as follows:
\begin{align*}
Q_{D_1,\{\Pi_B\}}(\psi\psi^*)~:=&~\min_{(g_j)}\frac{1}{2}\Big\|\psi\psi^*-\sum_j(I_n\otimes g_jg^*_j)\psi\psi^*(I_n\otimes g_jg^*_j)\Big\|_1\,,
\end{align*}
where the minimum is taken over all orthonormal bases $(g_j)$ of $\mathbb C^k$. The entanglement of disturbance was shown to be a bona fide entanglement measure for bipartite pure states \cite{marcopiani}. Here, we relate it to the reduction map:
\begin{proposition}\label{marcoremark}For any normalized pure state $\psi\in\mathbb C^n\otimes\mathbb C^k$ we have:
\begin{align*}
Q_{D_1,\{\Pi_B\}}(\psi\psi^*)~=~\frac{\|(\psi\psi^*)^{red}\|_1\,-\,(k-1)}{2}~=~-\lambda_{min}\big((\psi\psi^*)^{red}\big)\,.
\end{align*}
\end{proposition}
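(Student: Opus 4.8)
The plan is to prove the two equalities separately. The rightmost one is immediate from Theorem \ref{thm:red-pure}: writing $\tau:=(\psi\psi^*)^{red}$, the only eigenvalue of $\tau$ that can be negative is $\eta_q=\lambda_{min}(\tau)$, every other eigenvalue being non-negative. Hence the sum of the absolute values of the negative eigenvalues of $\tau$ equals $-\lambda_{min}(\tau)$, and since $\mathrm{Tr}[\tau]=k-1$ by Remark \ref{remarktraceremark}, the Hermitian identity $\|\tau\|_1=\mathrm{Tr}[\tau]+2\sum_{\lambda<0}|\lambda|$ yields $\|\tau\|_1=(k-1)-2\lambda_{min}(\tau)$, which is exactly $\frac{\|\tau\|_1-(k-1)}{2}=-\lambda_{min}(\tau)$.

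It then remains to show $Q_{D_1,\{\Pi_B\}}(\psi\psi^*)=-\lambda_{min}(\tau)$. Abbreviating the pinching in an orthonormal basis $(g_j)$ of $\mathbb C^k$ by $\Phi_{(g_j)}(X):=\sum_j(I_n\otimes g_jg_j^*)X(I_n\otimes g_jg_j^*)$, the defining minimization reads $\min_{(g_j)}\frac12\|\psi\psi^*-\Phi_{(g_j)}(\psi\psi^*)\|_1$. Since $\Phi_{(g_j)}$ is trace-preserving, $H_{(g_j)}:=\psi\psi^*-\Phi_{(g_j)}(\psi\psi^*)$ is traceless and Hermitian, for which $\frac12\|H_{(g_j)}\|_1=\max_{0\le P\le I}\mathrm{Tr}[PH_{(g_j)}]$. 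For the upper bound I would evaluate the pinching in the Schmidt basis $(f_i)$ of $\psi$ (completed to a basis of $\mathbb C^k$): a direct computation shows that $H_{(f_i)}$ is supported on $\mathrm{span}(e_i\otimes f_i)_{i=1}^r$, where it has matrix $\sqrt{x_ix_j}-x_i\delta_{ij}=-M_{ij}$ with $M$ the matrix of Lemma \ref{MXlemma}. Using the eigenvalues listed there together with $\mathrm{Tr}[M]=0$ gives $\|M\|_1=-2\eta_q$, hence $\frac12\|H_{(f_i)}\|_1=-\eta_q=-\lambda_{min}(\tau)$, and therefore $Q_{D_1,\{\Pi_B\}}(\psi\psi^*)\le -\lambda_{min}(\tau)$.

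The crux is the matching lower bound $\frac12\|H_{(g_j)}\|_1\ge -\lambda_{min}(\tau)$ for every basis $(g_j)$. The key structural fact is that $\Phi_{(g_j)}(\psi\psi^*)\in\mathrm{RED}_{n,k}$ for every basis: writing $\psi\psi^*=\sum_{j,j'}A_{jj'}\otimes g_jg_{j'}^*$ as a block matrix, the pinching keeps the diagonal blocks, preserves the first marginal $\rho_A:=(\mathrm{id}_n\otimes\mathrm{Tr})(\psi\psi^*)=\sum_jA_{jj}$, and gives $\big(\Phi_{(g_j)}(\psi\psi^*)\big)^{red}=\sum_j\big(\sum_{j'\ne j}A_{j'j'}\big)\otimes g_jg_j^*\ge 0$, since each diagonal block $A_{jj}=(I_n\otimes g_j^*)\psi\psi^*(I_n\otimes g_j)\ge 0$. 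I would then test $H_{(g_j)}$ against the rank-one projector $P=vv^*$ onto a minimal eigenvector $v$ of $\tau$. Positivity of the reduction gives $v^*\Phi_{(g_j)}(\psi\psi^*)v\le v^*(\rho_A\otimes I_k)v$, so that $\mathrm{Tr}[PH_{(g_j)}]=v^*\psi\psi^*v-v^*\Phi_{(g_j)}(\psi\psi^*)v\ge v^*\psi\psi^*v-v^*(\rho_A\otimes I_k)v=-v^*\tau v=-\lambda_{min}(\tau)$. Combined with $\frac12\|H_{(g_j)}\|_1\ge\mathrm{Tr}[PH_{(g_j)}]$ and minimizing over $(g_j)$, this closes the proof. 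The main obstacle is precisely this structural observation that every $B$-pinching of $\psi\psi^*$ lands in $\mathrm{RED}_{n,k}$, which is what converts the abstract trace-norm minimization into the clean spectral quantity $-\lambda_{min}(\tau)$.
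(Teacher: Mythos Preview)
Your proof is correct, and it shares with the paper's argument the same decisive structural observation: for every orthonormal basis $(g_j)$ of $\mathbb C^k$, the $B$-pinching satisfies $\Phi_{(g_j)}(\psi\psi^*)\le\rho_A\otimes I_k$ (equivalently, $\Phi_{(g_j)}(\psi\psi^*)\in\mathrm{RED}_{n,k}$; equivalently, the paper's off-diagonal remainder $\sum_{i\ne j}(I_n\otimes g_jg_i^*)\psi\psi^*(I_n\otimes g_ig_j^*)$ is positive-semidefinite). The packaging of this fact into a lower bound differs. The paper first rewrites $\tfrac12\|H_{(g_j)}\|_1$ as $\lambda_{\max}(H_{(g_j)})$, invoking from \cite{marcopiani} that $H_{(g_j)}$ is traceless with at most one positive eigenvalue, and then applies monotonicity of $\lambda_{\max}$ to the operator inequality $-\tau\le H_{(g_j)}$. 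You instead use the variational identity $\tfrac12\|H\|_1=\max_{0\le P\le I}\mathrm{Tr}[PH]$ for traceless Hermitian $H$ and test with the rank-one projector onto a minimal eigenvector of $\tau$. Your route is slightly more self-contained, since it avoids the external ``at most one positive eigenvalue'' input; the paper's operator-inequality formulation, on the other hand, makes the role of the Schmidt basis (where the remainder term has support orthogonal to $H_{(f_i)}$) a bit more transparent. For the upper bound the two arguments again coincide in spirit: you compute $\|H_{(f_i)}\|_1=\|M\|_1=-2\eta_q$ directly from Lemma~\ref{MXlemma}, while the paper reaches the same value via the orthogonal-support observation giving $\lambda_{\max}(H_{(f_i)})=\lambda_{\max}(-\tau)$. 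The treatment of the second displayed equality is identical in both proofs.
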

\begin{proof}From the definition of the reduction map it is easy to see that, for pure states, $\lambda_{min}\big((\psi\psi^*)^{red})\leq0$ and that $(\psi\psi^*)^{red}$ has at most one negative eigenvalue (both facts are also apparent from Theorem \ref{thm:red-pure}). Thus, we have $$\|(\psi\psi^*)^{red}\|_1-{\rm Tr}\left[(\psi\psi^*)^{red}\right]=-2\lambda_{min}\big((\psi\psi^*)^{red}),$$ which together with Remark \ref{remarktraceremark} implies the second equality. Furthermore, Theorem \ref{thm:red-pure} shows that, for $x_i$ the Schmidt coefficients of $\psi$, $c=-\lambda_{min}\big((\psi\psi^*)^{red})$ is the unique non-negative root of $\sum_ix_i/(x_i+c)=1$. This agrees with the formula for $Q_{D_1,\{\Pi_B\}}(\psi\psi^*)$ derived in \cite[Theorem\ III.3]{marcopiani} and shows the remaining equality.

We now offer a more direct proof of the non-trivial fact $Q_{D_1,\{\Pi_B\}}(\psi\psi^*)=-\lambda_{min}\big((\psi\psi^*)^{red}\big),$ not using the implicit formula for either quantity. For this, note first that
\begin{align*}
Q_{D_1,\{\Pi_B\}}(\psi\psi^*)~=~\min_{(g_j)}\lambda_{max}\Big(\psi\psi^*-\sum_j(I_n\otimes g_jg^*_j)\psi\psi^*(I_n\otimes g_jg^*_j)\Big)\,,
\end{align*}
which follows from the fact the expression under the norm sign in the defining equation of $Q_{D_1,\{\Pi_B\}}(\psi\psi^*)$ is traceless with at most one positive eigenvalue \cite{marcopiani}. Furthermore, for any fixed orthonormal basis $(g_j)$ of $\mathbb C^k$, we can write
\begin{align*}
-(\psi\psi^*)^{red}~&=~\psi\psi^*-({\rm id}_n\otimes{\rm Tr})(\psi\psi^*)\,\otimes\,I_k\\
&=~\psi\psi^*-\sum_{i=1}^k(I_n\otimes g_i^*)\psi\psi^*(I_n\otimes g_i)\cdot\left(I_n\,\otimes\,\sum_{j=1}^kg_jg_j^*\right)\\
&=~\psi\psi^*-\sum_{j=1}^k(I_n\otimes g_jg_j^*)\psi\psi^*(I_n\otimes g_jg_j^*)\,-\,\sum_{i\neq j}^k(I_n\otimes g_jg_i^*)\psi\psi^*(I_n\otimes g_ig_j^*)\,.
\end{align*}
Since the last term $(-\sum_{i\neq j})$ is negative-semidefinite, we have $$-(\psi\psi^*)^{red}\leq\psi\psi^*-\sum_j(I_n\otimes g_jg^*_j)\psi\psi^*(I_n\otimes g_jg^*_j),$$ for any orthonormal basis $(g_j)$ of $\mathbb C^k$, showing that $\lambda_{max}\big(-(\psi\psi^*)^{red}\big)\leq Q_{D_1,\{\Pi_B\}}(\psi\psi^*)$.

On the other hand, when choosing $(g_j)\equiv(f_j)$ to be the orthonormal basis occurring in the Schmidt decomposition of $\psi$ (see Theorem \ref{thm:red-pure}), one easily sees the support of the term $(-\sum_{i\neq j})$ to be orthogonal to the support of $\psi\psi^*-\sum_j(I_n\otimes g_jg^*_j)\psi\psi^*(I_n\otimes g_jg^*_j)$, which is basically the observation from the first part of the proof of Theorem \ref{thm:red-pure}. This choice for $(g_j)$ thus shows $Q_{D_1,\{\Pi_B\}}(\psi\psi^*)\leq\lambda_{max}\big(-(\psi\psi^*)^{red}\big)$, and we finally get $$Q_{D_1,\{\Pi_B\}}(\psi\psi^*)=\lambda_{max}\big(-(\psi\psi^*)^{red}\big)=-\lambda_{min}\big((\psi\psi^*)^{red}\big).$$
\end{proof}

In \cite{marcopiani} also other properties of $\lambda_{min}\big((\psi\psi^*)^{red}\big)$ are derived, such as its Schur convexity as a function of the Schmidt coefficients $\{x_i\}$ of $\psi$ and upper and lower bounds depending on the largest Schmidt coefficient(s).

\section{Spectral criterion}\label{Sec:spectral}

In this section, we give a description of the set $\mathrm{ARED}_{n,k}$. We start with a technical, but easy, lemma:

\begin{lemma}\label{lem:duality-Gamma-and-red}
The partial transpose and the reduction maps are selfadjoint, which means that for both $\varphi=\Gamma$ and $\varphi=\text{red}$ we have:
$$\forall X, Y \in M_n(\mathbb C) \otimes M_k(\mathbb C), \quad \mathrm{Tr}\left[(X^\varphi)^*\,Y\right] = \mathrm{Tr}\left[X^*\,Y^\varphi\right].$$
\end{lemma}
\begin{proof}
Since both expressions are antilinear in $X$ and linear in $Y$, one can consider the case of simple tensors, $X=X_1 \otimes X_2$ and $Y=Y_1 \otimes Y_2$. With this notation, the conclusion follows by direct computation: in the case $\varphi = \Gamma$, both traces are equal to $\operatorname{Tr}\left[X_1^*Y_1\right]\operatorname{Tr}\left[X_2^*Y_2^T\right]$, while in the case $\varphi=red$, both traces are equal to $\operatorname{Tr}\left[X_1^*Y_1\right]\left(\operatorname{Tr}\left[X_2^*\right]\operatorname{Tr}\left[Y_2\right]-\operatorname{Tr}\left[X_2^*\,Y_2\right]\right)$.
\end{proof}

We now state the main result of this paper, the characterization of the set $\mathrm{ARED}_{n,k}$. The theorem follows from the rank-one case discussed in the previous section (Corollary \ref{simplestatementofpurereduction}) in a similar way as in the characterization of the APPT states \cite{hil}.

\begin{theorem}\label{thm:ared}
We have
\begin{align}\label{eq:ared}
\mathrm{ARED}_{n,k}\,=\, \{\rho \in D_{n,k} \,:\, \forall x\in \Delta_{\min(n,k)}, \, \langle \lambda_\rho^\downarrow,\hat x^\uparrow \rangle \geq 0 \},
\end{align}
where $\lambda_\rho^\downarrow$ is the vector of eigenvalues of $\rho$ ordered decreasingly and $\hat x^\uparrow$ is the increasingly ordered version of $\hat x$ that has been introduced in Definition \ref{def:hat-xi}.
\end{theorem}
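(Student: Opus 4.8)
The plan is to strip the defining condition of $\mathrm{ARED}_{n,k}$ down to a family of scalar inequalities by first dualizing the reduction onto pure states and then optimizing over the unitary orbit of $\rho$. Recall that $\rho\in\mathrm{ARED}_{n,k}$ means $(U\rho U^*)^{red}\geq0$ for every $U\in\mathcal U_{nk}$. First I would unfold positive-semidefiniteness pointwise: $(U\rho U^*)^{red}\geq0$ holds iff $\langle\phi,(U\rho U^*)^{red}\phi\rangle=\mathrm{Tr}[\phi\phi^*\,(U\rho U^*)^{red}]\geq0$ for every unit vector $\phi\in\mathbb C^n\otimes\mathbb C^k$. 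Applying the self-adjointness of the reduction map (Lemma \ref{lem:duality-Gamma-and-red}) with $X=\phi\phi^*$ and $Y=U\rho U^*$, and using that $\phi\phi^*$ and $(\phi\phi^*)^{red}$ are Hermitian, this rewrites as
$$\mathrm{Tr}\big[(\phi\phi^*)^{red}\,U\rho U^*\big]\geq0.$$
Hence $\rho\in\mathrm{ARED}_{n,k}$ if and only if this inequality holds for all unit vectors $\phi$ and all unitaries $U$, i.e.\ the joint minimum over $\phi$ and $U$ is non-negative.

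Second, I would decouple the two minimizations. For a fixed $\phi$, the matrix $A:=(\phi\phi^*)^{red}$ is a fixed Hermitian matrix, while $U\rho U^*$ ranges over the entire isospectral orbit of $\rho$ as $U$ varies. The minimum of $\mathrm{Tr}[A\,U\rho U^*]$ over $U\in\mathcal U_{nk}$ is governed by the (well-known) von Neumann trace inequality, whose extremal value pairs the increasingly-ordered eigenvalues of $A$ against the decreasingly-ordered eigenvalues of $\rho$:
$$\min_{U\in\mathcal U_{nk}}\mathrm{Tr}[A\,U\rho U^*]=\big\langle\lambda_A^\uparrow,\,\lambda_\rho^\downarrow\big\rangle.$$
This is the single nontrivial analytic ingredient, and it is exactly the same device used in Hildebrand's treatment of $\mathrm{APPT}$.

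Third, I would identify the spectrum of $A=(\phi\phi^*)^{red}$ via the rank-one analysis. By Corollary \ref{simplestatementofpurereduction}, if $x\in\Delta_{\min(n,k)}$ denotes the vector of squared Schmidt coefficients of $\phi$, then the eigenvalue vector of $(\phi\phi^*)^{red}$ is precisely $\hat x$ of Definition \ref{def:hat-xi}, so $\lambda_A^\uparrow=\hat x^\uparrow$. As $\phi$ sweeps all unit vectors, its Schmidt vector $x$ sweeps all of $\Delta_{\min(n,k)}$, including boundary points with vanishing coordinates (lower Schmidt rank), which are admissible in the hat construction. Minimizing over $\phi$ thus becomes minimizing over $x\in\Delta_{\min(n,k)}$, and the condition $\rho\in\mathrm{ARED}_{n,k}$ collapses to $\langle\lambda_\rho^\downarrow,\hat x^\uparrow\rangle\geq0$ for all $x\in\Delta_{\min(n,k)}$, which is \eqref{eq:ared}.

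The main obstacle will not be any single estimate but rather the structural tension that the reduction map is tied to the fixed bipartition $M_{nk}\cong M_n\otimes M_k$, whereas the global unitary $U$ does not respect this tensor structure; one cannot simply commute the reduction past $U$. The dualization of the first step is precisely what resolves this: after transferring the reduction onto the pure-state side through Lemma \ref{lem:duality-Gamma-and-red}, all residual dependence on $U$ sits in the unstructured conjugation $U\rho U^*$, where the trace inequality applies verbatim. Two points then warrant care: confirming that the minimizations over $\phi$ and over $U$ genuinely separate, and verifying that the extremal ordering yields the \emph{opposite} pairing $\langle\lambda_\rho^\downarrow,\hat x^\uparrow\rangle$ rather than the aligned one. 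The latter is what singles out the stated inequality and is the place where a sign or ordering slip would be easy to make.
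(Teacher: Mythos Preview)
Your proposal is correct and follows essentially the same route as the paper: dualize the reduction onto the pure-state side via Lemma~\ref{lem:duality-Gamma-and-red}, then invoke the von~Neumann trace inequality $\min_{U}\mathrm{Tr}[AUBU^*]=\langle\lambda_A^\downarrow,\lambda_B^\uparrow\rangle$ to optimize over the unitary orbit, and finally parametrize the spectrum of $(\phi\phi^*)^{red}$ by the Schmidt vector $x\in\Delta_{\min(n,k)}$ via Corollary~\ref{simplestatementofpurereduction}. Your closing remarks on the separation of the two minimizations and on the opposite (rather than aligned) eigenvalue pairing are exactly the points the paper's chain of equivalences encodes.
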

\begin{proof}
A quantum state $\rho \in D_{nk}$ having ordered eigenvalues $\lambda_\rho^\downarrow$ is an element of $\mathrm{ARED}$ if and only if the following chain of equivalent statements is true:
\begin{align*}
\forall U \in \mathcal U_{nk}: \qquad &(U\rho U^*)^{red} \geq 0\\
\forall U \in \mathcal U_{nk},\, \forall \psi \in \mathbb C^{nk}, \|\psi\|=1: \qquad &\operatorname{Tr}\left[(U\rho U^*)^{red} \, \psi\psi^* \right]\geq 0\\
\forall U \in \mathcal U_{nk},\, \forall \psi \in \mathbb C^{nk}, \|\psi\|=1: \qquad &\operatorname{Tr}\left[U\rho U^* \, (\psi\psi^*)^{red} \right]\geq 0\\
\forall \psi \in \mathbb C^{nk}, \|\psi\|=1: \qquad &\min_{U \in \mathcal U_{nk}}\operatorname{Tr}\left[U\rho U^* \, (\psi\psi^*)^{red} \right]\geq 0\\
\forall x \in \Delta_{\min(n,k)}: \qquad &\langle \lambda_\rho^\downarrow,\hat x^\uparrow \rangle\geq 0,
\end{align*}
and the proof is complete. We have used Lemma \ref{lem:duality-Gamma-and-red} with $\varphi=red$, and for the last equivalence we have employed the fact (see \cite{bhatia}) that, for any selfadjoint matrices $A,B$,
$$\min_{U \in \mathcal U_{nk}}\operatorname{Tr}[AUBU^*] = \langle \lambda_A^\downarrow, \lambda_B^\uparrow \rangle,$$
where $\lambda_{A}$, $\lambda_{B}$ denote the spectra of $A$ and $B$, respectively, together with the fact that the eigenvalues of $(\psi\psi^*)^{red}$ only depend on the Schmidt coefficient vector $x\in\Delta_{\min(n,k)}$ of $\psi$ (see Corollary \ref{simplestatementofpurereduction} and Definition \ref{def:hat-xi}).
\end{proof}

\begin{remark}\label{exampleconstraintremark}
For any vector $x$ of unit rank, the condition $\langle \lambda_\rho^\downarrow,\hat x^\uparrow \rangle \geq 0$ is automatically satisfied, since, in that case, the vector $\hat x$ is positive. Hence, in the characterization Eq.\ \eqref{eq:ared} above, one can assume the vectors $x$ to have at least two non-zero components.
\end{remark}

\begin{corollary}\label{cor.ared}
A necessary condition for $\rho\in D_{n,k}$ with eigenvalue vector $\lambda\equiv\lambda_\rho$ to be an element of $\mathrm{ARED}_{n,k}$ is:
\begin{align*}
(r-1)\lambda^\downarrow_1~\leq~\sum_{i=(n-r)k+2}^{nk}\lambda^\downarrow_i\qquad\forall r~\text{with}~1\leq r\leq\min(n,k)~.
\end{align*}
\end{corollary}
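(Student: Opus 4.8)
The plan is to derive each inequality, for a fixed $r$ with $1\leq r\leq\min(n,k)$, by specializing the characterization in Theorem~\ref{thm:ared} to a single, carefully chosen spectrum $x$. The natural candidate is the maximally entangled vector of Schmidt rank $r$, namely the vector $x\in\Delta_{\min(n,k)}$ with $r$ coordinates equal to $1/r$ and the remaining ones equal to zero. Its reduced spectrum $\hat x$ has the simplest possible shape: a single block of repeated positive values, a block of zeros, and exactly one negative eigenvalue, which is what makes the resulting inner product transparent.

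First I would compute $\hat x$ for this uniform vector using Theorem~\ref{thm:red-pure} (equivalently Corollary~\ref{simplestatementofpurereduction}) in the fully degenerate case $q=1$, $m_1=r$, $x_1=1/r$. Here the defining equation $F_x(\lambda)=0$ becomes $\frac{r\cdot(1/r)}{(1/r)-\lambda}=1$, whose unique root is $\eta=-(r-1)/r$. Consequently $\hat x$ consists of the value $1/r$ with multiplicity $rk-1$, the value $0$ with multiplicity $(n-r)k$, and the single negative value $-(r-1)/r$, for a total of $nk$ entries.

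Next I would order $\hat x$ increasingly, obtaining $\hat x^\uparrow=(-(r-1)/r,0,\ldots,0,1/r,\ldots,1/r)$, with the negative entry in position $1$, the $(n-r)k$ zeros in positions $2$ through $(n-r)k+1$, and the $rk-1$ copies of $1/r$ in the final positions. Evaluating the inner product $\langle\lambda_\rho^\downarrow,\hat x^\uparrow\rangle$, the negative entry multiplies $\lambda_1^\downarrow$, the zeros contribute nothing regardless of pairing, and the copies of $1/r$ multiply exactly the smallest $rk-1$ eigenvalues $\lambda_{(n-r)k+2}^\downarrow,\ldots,\lambda_{nk}^\downarrow$. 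The condition $\langle\lambda_\rho^\downarrow,\hat x^\uparrow\rangle\geq0$ guaranteed by Theorem~\ref{thm:ared} then reads $-\tfrac{r-1}{r}\lambda_1^\downarrow+\tfrac{1}{r}\sum_{i=(n-r)k+2}^{nk}\lambda_i^\downarrow\geq0$, which upon clearing the factor $1/r$ is precisely the claimed inequality. Since Theorem~\ref{thm:ared} demands $\langle\lambda_\rho^\downarrow,\hat x^\uparrow\rangle\geq0$ for every admissible $x$, letting $r$ range over $\{1,\ldots,\min(n,k)\}$ produces the whole family of necessary conditions.

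The argument presents no genuine obstacle; the only point requiring care is the bookkeeping of ordered positions, i.e.\ confirming that the unique negative eigenvalue of $\hat x$ lands in the slot paired with $\lambda_1^\downarrow$ while the harmless zeros occupy the middle block. As a consistency check, the case $r=1$ degenerates to the trivially true statement $0\leq\sum_{i=(n-1)k+2}^{nk}\lambda_i^\downarrow$, in agreement with Remark~\ref{exampleconstraintremark}.
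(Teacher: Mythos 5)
Your proposal is correct and follows exactly the paper's own argument: specialize Theorem \ref{thm:ared} to the rank-$r$ vector with all Schmidt coefficients equal to $1/r$, whose reduced spectrum $\hat x$ (via Lemma \ref{MXlemma}) consists of $-(r-1)/r$ once, $0$ with multiplicity $(n-r)k$, and $1/r$ with multiplicity $rk-1$. The bookkeeping of positions and the resulting inequality match the paper's proof precisely.
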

\begin{proof}
For a vector $x$ of rank $r$ $(1\leq r\leq\min(n,k))$ with degenerate non-zero entries $x_1=\cdots=x_r=1/r$ it is {\rm(}see in particular Lemma \ref{MXlemma}{\rm)}: $\hat x^\uparrow_1=-(r-1)/r$, $\hat x^\uparrow_i=1/r$ for $(n-r)k+2\leq i\leq nk$, and $\hat x^\uparrow_i=0$  for the other values of $i$. The conclusion follows then from Theorem \ref{thm:ared}.
\end{proof}

We end this section with a general remark about the description of the set $\mathrm{ARED}$ obtained in Theorem \ref{thm:ared}. Consider any subset $Z \subseteq \mathbb R^{nk}$ and define the following set of quantum states (cf.\ Definition \ref{def:ARED} together with Theorem \ref{thm:ared}, and recall the notation $Z^\circ\subseteq\mathbb R^{nk}$ for the polar of the set $Z$):
$$\mathrm{A}Z^\circ := \{\rho \in D_{n,k} \,:\, \forall z \in Z, \, \langle \lambda_\rho^\downarrow,z^\uparrow \rangle \geq 0 \}.$$
Obviously, the characterization \eqref{eq:ared} is of this form, with $Z = \{\hat x \, : \, x \in \Delta_{\min(n,k)}\}$. This is also the case for the set $\mathrm{APPT}$, with $Z=\{E(x) \, : \, x \in \Delta_{\min(n,k)}\}$, see \cite[Section II]{hil}. Any such set $\mathrm{A}Z^\circ$ satisfies the following:
\begin{lemma}\label{lemmaAZcirc}
Let $Z \subseteq \mathbb R^{nk}$. If $\rho\in\mathrm{A}Z^\circ$ majorizes $\sigma\in D_{n,k}$, i.e.\ if $\sigma\prec\rho$, then $\sigma\in\mathrm{A}Z^\circ$.
\end{lemma}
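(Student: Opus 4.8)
The plan is to reduce the claim to a Schur-concavity property of a single scalar functional, exploiting that $\mathrm{A}Z^\circ$ is cut out by the family of inequalities $\langle \lambda_\rho^\downarrow,z^\uparrow\rangle\geq 0$ indexed by $z\in Z$. An arbitrary intersection of majorization-invariant sets is again majorization-invariant, so it suffices to fix a single $z\in Z$ and show that the set $\{\rho\in D_{n,k}\,:\,\langle\lambda_\rho^\downarrow,z^\uparrow\rangle\geq 0\}$ is closed under passing to states that it majorizes. Thus the whole statement for general $Z$ reduces to understanding one reordered inner product.

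For fixed $z$, I would study the scalar function $\phi(\lambda):=\langle\lambda^\downarrow,z^\uparrow\rangle$ on $\mathbb R^{nk}$. The key step is to represent $\phi$ as a minimum of linear functionals: by the rearrangement inequality (equivalently, by the identity $\min_{U\in\mathcal U_{nk}}\operatorname{Tr}[AUBU^*]=\langle\lambda_A^\downarrow,\lambda_B^\uparrow\rangle$ already used in the proof of Theorem \ref{thm:ared}, see \cite{bhatia}), one has $\phi(\lambda)=\min_\pi\sum_i\lambda_i z_{\pi(i)}$, the minimum ranging over all permutations $\pi$ of $[nk]$. As a pointwise minimum of finitely many linear functions, $\phi$ is concave, and it is manifestly symmetric under permutations of its argument. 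Since a symmetric concave function is Schur-concave, $\lambda_\sigma\prec\lambda_\rho$ implies $\phi(\lambda_\sigma)\geq\phi(\lambda_\rho)$. Assembling the pieces: if $\rho\in\mathrm{A}Z^\circ$ and $\sigma\prec\rho$, then for every $z\in Z$ we get $\langle\lambda_\sigma^\downarrow,z^\uparrow\rangle=\phi(\lambda_\sigma)\geq\phi(\lambda_\rho)=\langle\lambda_\rho^\downarrow,z^\uparrow\rangle\geq 0$, whence $\sigma\in\mathrm{A}Z^\circ$.

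The main obstacle is precisely the nonlinearity introduced by the decreasing reordering: the functional $\rho\mapsto\langle\lambda_\rho^\downarrow,z^\uparrow\rangle$ is neither linear nor convex in $\rho$, so one cannot naively invoke convexity of $\mathrm{A}Z^\circ$ together with the Birkhoff--Uhlmann decomposition $\sigma=\sum_j p_j U_j\rho U_j^*$ as was done in Lemma \ref{lemmaAC}. The point of the argument is to \emph{linearize} the ordering, and the min-over-permutations (equivalently, min-over-unitaries) representation does exactly this, converting the reordered inner product into a concave, permutation-symmetric functional for which Schur-concavity is directly available.

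I would also record the alternative route, which makes the promised connection to Lemma \ref{lemmaAC} explicit: writing $\mathrm{A}Z^\circ=\bigcap_{U\in\mathcal U_{nk}}U C U^*$ with the convex set $C:=\{X\in D_{n,k}\,:\,\operatorname{Tr}[X B_z]\geq 0\ \forall z\in Z\}$, where $B_z$ is any fixed Hermitian matrix with spectrum $z$. Verifying this identity again uses the fact $\min_U\operatorname{Tr}[\rho\,U B_z U^*]=\langle\lambda_\rho^\downarrow,z^\uparrow\rangle$, so that $\langle\lambda_\rho^\downarrow,z^\uparrow\rangle\geq 0$ is equivalent to $\operatorname{Tr}[\rho\,U B_z U^*]\geq 0$ for all $U$; once the identity is established, Lemma \ref{lemmaAC} applies verbatim since $C$ is convex. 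Either presentation settles the statement, and I would favor the Schur-concavity argument as the cleaner one since it works directly with the linear-inequality characterization $\mathrm{A}Z^\circ$.
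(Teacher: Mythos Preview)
Your proof is correct and is essentially the paper's own argument in slightly more abstract dress: the paper writes $\lambda_\sigma^\downarrow=\sum_j p_j P_j\lambda_\rho^\downarrow$ via Birkhoff and then invokes the rearrangement inequality $\langle P_j\lambda_\rho^\downarrow,z^\uparrow\rangle\geq\langle\lambda_\rho^\downarrow,z^\uparrow\rangle$, which is exactly the computation underlying ``symmetric $+$ concave $\Rightarrow$ Schur-concave'' applied to your $\phi$. Your alternative route through Lemma~\ref{lemmaAC} is also valid and makes explicit the relation the paper alludes to when it calls this ``a different proof (using the dual picture)''.
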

\begin{proof}
By the definition of matrix majorization and by Birkhoff's Theorem \cite{bhatia}, there exist permutation matrices $P_j\in M_{nk}(\mathbb R)$ and a probability distribution $\{p_j\}$ such that $\lambda_\sigma^\downarrow=\sum_j p_jP_j\lambda_\rho^\downarrow$. The claim follows with the commutative version of a fact used already in the proof of Theorem \ref{thm:ared}, namely $\langle P_j\lambda_\rho^\downarrow,z^\uparrow\rangle\geq\langle\lambda_\rho^\downarrow,z^\uparrow\rangle$ for all $z\in\mathbb R^{nk}$ \cite{bhatia}.
\end{proof}

This gives a different proof (using the dual picture) than via Lemma \ref{lemmaAC} that the sets $\mathrm{ARED}_{n,k}$ and $\mathrm{APPT}_{n,k}$ are ``majorization-invariant''.

\section{Qubit-qudit systems}\label{qubitsection}

In this section, we consider the simplest non-trivial systems, where one of the subsystems is a qubit.

Let us start with the case when the second subsystem, i.e.\ the one on which the reduction map acts, is a qubit ($k=2$). Although the following explicit characterization of $\mathrm{ARED}_{n,2}$ is a consequence of two known results about qudit-qubit systems, i.e. $\mathbb{C}^n\otimes\mathbb{C}^2$ quantum systems (see below the proof), we derive it directly using the results in this paper.

\begin{proposition}\label{firstqubitproposition}
Let $\rho \in M_n(\mathbb C) \otimes M_2(\mathbb C)$ be a quantum state having eigenvalues $\lambda_1 \geq \cdots \geq \lambda_{2n} \geq 0$. The following are equivalent:
\begin{enumerate}
\item $\rho \in \mathrm{ASEP}_{n,2}$;
\item $\rho \in \mathrm{APPT}_{n,2}$;
\item $\rho \in \mathrm{ARED}_{n,2}$;
\item $\lambda_1 \leq \lambda_{2n-1} + 2 \sqrt{\lambda_{2n-2}\lambda_{2n}}$.
\end{enumerate}
\end{proposition}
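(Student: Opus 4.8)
The plan is to prove the only substantive equivalence, (3) $\Leftrightarrow$ (4), directly from Theorem \ref{thm:ared}, and to read off the remaining ones from facts already recorded. Indeed, (2) $\Leftrightarrow$ (3) is immediate: since $\mathrm{RED}_{n,2}=\mathrm{PPT}_{n,2}$ (Section \ref{preliminariessection}), intersecting over all global unitaries yields $\mathrm{ARED}_{n,2}=\mathrm{APPT}_{n,2}$; and (1) $\Leftrightarrow$ (2) is the identity $\mathrm{ASEP}_{n,2}=\mathrm{APPT}_{n,2}$ from \cite{joh}. Thus the real task is to show that membership in $\mathrm{ARED}_{n,2}$ is controlled exactly by the single inequality in (4).

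First I would specialize Theorem \ref{thm:ared} to $k=2$, where $\min(n,k)=2$, so the test vectors run over $x=(x_1,x_2)\in\Delta_2$; by Remark \ref{exampleconstraintremark} only the rank-two vectors matter, so I may take $x_1,x_2>0$ with $x_1+x_2=1$. Using Corollary \ref{simplestatementofpurereduction}, the two non-trivial eigenvalues $\eta$ solve $\frac{x_1}{x_1-\eta}+\frac{x_2}{x_2-\eta}=1$, which reduces to $\eta^2=x_1x_2$, giving $\eta=\pm\sqrt{x_1x_2}$. Since the construction is symmetric in $x_1,x_2$, I may assume $x_1\geq x_2$ and obtain, after ordering increasingly,
$$\hat x^\uparrow=\Big(-\sqrt{x_1x_2},\,\underbrace{0,\ldots,0}_{2n-4},\,x_2,\,\sqrt{x_1x_2},\,x_1\Big).$$
Pairing this against $\lambda_\rho^\downarrow=(\lambda_1,\ldots,\lambda_{2n})$ collapses the defining constraint $\langle\lambda_\rho^\downarrow,\hat x^\uparrow\rangle\geq0$ to
$$\lambda_{2n}\,x_1+\lambda_{2n-2}\,x_2+(\lambda_{2n-1}-\lambda_1)\sqrt{x_1x_2}\geq0.$$

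The decisive step is then the substitution $p=\sqrt{x_1}$, $q=\sqrt{x_2}$, which turns the left-hand side into the quadratic form $g(p,q)=\lambda_{2n}\,p^2-(\lambda_1-\lambda_{2n-1})\,pq+\lambda_{2n-2}\,q^2$, to be tested on the first-quadrant unit arc $\{p,q\geq0,\ p^2+q^2=1\}$. Hence $\rho\in\mathrm{ARED}_{n,2}$ holds if and only if the minimum of $g$ over this arc is non-negative. I claim this minimum equals the smallest eigenvalue of the symmetric matrix
$$Q=\begin{pmatrix}\lambda_{2n} & -\tfrac{1}{2}(\lambda_1-\lambda_{2n-1})\\ -\tfrac{1}{2}(\lambda_1-\lambda_{2n-1}) & \lambda_{2n-2}\end{pmatrix}.$$
Granting this claim, and noting that the non-negative diagonal entries $\lambda_{2n},\lambda_{2n-2}$ make $Q\succeq0$ equivalent to $\det Q\geq0$, membership reduces to $4\,\lambda_{2n}\lambda_{2n-2}\geq(\lambda_1-\lambda_{2n-1})^2$, i.e.\ to inequality (4).

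The one point needing care --- the main obstacle --- is precisely the claim that the minimum of $g$ on the first-quadrant arc coincides with the global minimum $\lambda_{\min}(Q)$, and not something strictly larger. This is where the ordering of the eigenvalues is used: the off-diagonal entry $-\tfrac{1}{2}(\lambda_1-\lambda_{2n-1})$ is non-positive because $\lambda_1\geq\lambda_{2n-1}$, and for such a matrix the eigenvector belonging to $\lambda_{\min}(Q)$ can be chosen with both coordinates non-negative (since $\lambda_{\min}(Q)\leq\min(\lambda_{2n},\lambda_{2n-2})$ forces the relevant ratio to be non-negative). Thus the constrained and unconstrained minima agree, and $g\geq0$ on the arc is genuinely equivalent to $\lambda_{\min}(Q)\geq0$. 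Alternatively, one can compute the minimum in closed form as $\tfrac{1}{2}(\lambda_{2n}+\lambda_{2n-2})-\tfrac{1}{2}\sqrt{(\lambda_{2n}-\lambda_{2n-2})^2+(\lambda_1-\lambda_{2n-1})^2}$ and verify its non-negativity against (4) by squaring. No separate discussion of the degenerate case $x_1=x_2$ is needed, as Corollary \ref{simplestatementofpurereduction} already permits repeated Schmidt coefficients and every quantity above is continuous in $x$.
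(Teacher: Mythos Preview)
Your argument is correct and follows essentially the same route as the paper: both specialize Theorem~\ref{thm:ared} to $k=2$, compute $\hat x^\uparrow$ explicitly for a rank-two Schmidt vector, and reduce membership in $\mathrm{ARED}_{n,2}$ to a one-parameter family of scalar inequalities whose tightest member yields~(4). The only difference is cosmetic, in the final minimization: the paper rewrites the constraint as $\lambda_1-\lambda_{2n-1}\le\sqrt{(1-a)/a}\,\lambda_{2n-2}+\sqrt{a/(1-a)}\,\lambda_{2n}$ and invokes the AM--GM inequality to obtain the minimum $2\sqrt{\lambda_{2n-2}\lambda_{2n}}$, whereas you recast the inequality as non-negativity of the quadratic form associated with $Q$ and reduce to $\det Q\ge0$; these are equivalent computations.

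One small remark: after assuming $x_1\ge x_2$ your parameters actually range over the half-arc $p\ge q\ge0$, not the full first-quadrant arc. This is harmless here, since $\lambda_{2n}\le\lambda_{2n-2}$ forces $g(q,p)\le g(p,q)$ whenever $q>p$, so the infimum over the half-arc coincides with the infimum over the full arc (and, as you note, with $\lambda_{\min}(Q)$).
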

\begin{proof}
We shall establish only the equivalence of the last two statements, since the equivalence of (1) and (2) has been shown recently in \cite{joh} and the equivalence between (2) and (3) follows from the unitary equivalence of the reduction and transposition maps on qubit systems \cite{horodeckireduction,cerf} (see also Section \ref{preliminariessection}).

Consider an arbitrary unit vector $x \in \mathbb C^n \otimes \mathbb C^2$. Its Schmidt coefficient vector is then $x = (a, 1-a)$, with $a \in [1/2,1]$. A direct computation using the formulas in Theorem \ref{thm:red-pure} gives
$$\hat x^\uparrow = (-\sqrt{a(1-a)}, \underbrace{0, \ldots, 0}_{2(n-2) \text{ times}}, 1-a, \sqrt{a(1-a)}, a).$$
From Theorem \ref{thm:ared}, it follows that $\rho \in \mathrm{ARED}_{n,2}$ if and only if
$$\forall a \in [1/2,1], \quad -\sqrt{a(1-a)}\lambda_1 + (1-a)\lambda_{2n-2} + \sqrt{a(1-a)}\lambda_{2n-1} + a \lambda_{2n} \geq 0.$$
This, in turn, is equivalent to
$$\forall a \in [1/2,1], \quad \lambda_1 - \lambda_{2n-1} \leq \sqrt{\frac{1-a}{a}}\lambda_{2n-2} + \sqrt{\frac{a}{1-a}}\lambda_{2n}.$$
Classical analysis shows that the right-hand side above achieves a minimum value of $2\sqrt{\lambda_{2n-2}\lambda_{2n}}$, finishing the proof.
\end{proof}
The explicit characterization (4) of $\mathrm{ARED}_{n,2}$ follows also from the equality $\mathrm{ARED}_{n,2}=\mathrm{APPT}_{n,2}$ \cite{horodeckireduction,cerf} together with the explicit characterization of $\mathrm{APPT}_{n,2}$ \cite{hil} (see also \cite{verstraeteaudenaert} for the cases $n=2,3$).

The qubit-qudit case ($\mathbb{C}^2\otimes\mathbb{C}^k$ quantum systems) does not follow from previous results:
\begin{proposition}
Let $\rho \in M_2(\mathbb C) \otimes M_k(\mathbb C)$ be a quantum state having eigenvalues $\lambda_1 \geq \cdots \geq \lambda_{2k} \geq 0$. Then, $\rho \in \mathrm{ARED}_{2,k}$ if and only if
$$\lambda_1 \leq \lambda_{k+1} + 2 \sqrt{(\lambda_2 + \cdots + \lambda_k)(\lambda_{k+2} + \cdots + \lambda_{2k})}.$$
\end{proposition}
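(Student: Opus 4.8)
The plan is to apply the general spectral criterion of Theorem~\ref{thm:ared} in the special case $n=2$, where the Schmidt rank can be at most $\min(2,k)=2$, so every relevant $x\in\Delta_{\min(2,k)}=\Delta_2$ has the form $x=(a,1-a)$ with $a\in[1/2,1]$ (we may take $a\geq 1/2$ since the hat operation is symmetric in the Schmidt coefficients). As in the proof of Proposition~\ref{firstqubitproposition}, by Remark~\ref{exampleconstraintremark} the rank-one vectors are harmless, so only these rank-two $x$ produce a nontrivial constraint. First I would compute $\hat x$ explicitly for $x=(a,1-a)$ using Theorem~\ref{thm:red-pure} (equivalently Corollary~\ref{simplestatementofpurereduction}): the two nonzero Schmidt coefficients $a,1-a$ each appear with multiplicity $k-1$, there are $(n-r)k=0$ null eigenvalues, and the interlacing roots $\eta_1,\eta_2$ of $F_x(\lambda)=\frac{a}{a-\lambda}+\frac{1-a}{1-a-\lambda}-1=0$ must be found. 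A short calculation should give $\eta_1=+\sqrt{a(1-a)}$ and $\eta_2=-\sqrt{a(1-a)}$ (consistent with $\eta_1+\eta_2=0$ from Lemma~\ref{MXlemma}(\ref{MXlemmatrace0}), since here $\sum(m_i-1)x_i=0$). Thus the increasingly ordered vector is
$$\hat x^\uparrow=(-\sqrt{a(1-a)},\ \underbrace{1-a,\ldots,1-a}_{k-1},\ \sqrt{a(1-a)},\ \underbrace{a,\ldots,a}_{k-1}).$$

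Next I would substitute this into the inner-product inequality $\langle\lambda_\rho^\downarrow,\hat x^\uparrow\rangle\geq 0$ from Theorem~\ref{thm:ared}. Pairing the decreasingly ordered eigenvalues $\lambda_1\geq\cdots\geq\lambda_{2k}$ with $\hat x^\uparrow$ above, the most negative entry $-\sqrt{a(1-a)}$ multiplies $\lambda_1$, the $(k-1)$ copies of $1-a$ multiply $\lambda_2,\ldots,\lambda_k$, the entry $\sqrt{a(1-a)}$ multiplies $\lambda_{k+1}$, and the $(k-1)$ copies of $a$ multiply $\lambda_{k+2},\ldots,\lambda_{2k}$. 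This yields the condition, for all $a\in[1/2,1]$,
$$-\sqrt{a(1-a)}\,\lambda_1+(1-a)\sum_{i=2}^{k}\lambda_i+\sqrt{a(1-a)}\,\lambda_{k+1}+a\sum_{i=k+2}^{2k}\lambda_i\geq 0,$$
which, writing $S:=\lambda_2+\cdots+\lambda_k$ and $T:=\lambda_{k+2}+\cdots+\lambda_{2k}$ and dividing by $\sqrt{a(1-a)}>0$, rearranges to
$$\lambda_1-\lambda_{k+1}\leq\sqrt{\tfrac{1-a}{a}}\,S+\sqrt{\tfrac{a}{1-a}}\,T\qquad\forall a\in[1/2,1].$$

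Finally I would minimize the right-hand side over $a$. Setting $t=\sqrt{(1-a)/a}\in(0,1]$ as $a$ ranges over $[1/2,1]$, the expression becomes $g(t)=St+T/t$, whose unconstrained minimum on $(0,\infty)$ is $2\sqrt{ST}$, attained at $t_*=\sqrt{T/S}$. The condition then collapses to $\lambda_1-\lambda_{k+1}\leq 2\sqrt{ST}=2\sqrt{(\lambda_2+\cdots+\lambda_k)(\lambda_{k+2}+\cdots+\lambda_{2k})}$, which is the claimed inequality. The one point requiring care — the main (minor) obstacle — is the optimization endpoint: the minimizer $t_*=\sqrt{T/S}$ must lie in the admissible range $t\in(0,1]$, i.e.\ one must check $T\leq S$. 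This holds automatically because the eigenvalues are decreasingly ordered, so each term $\lambda_{k+1+j}$ in $T$ is dominated by the corresponding $\lambda_{1+j}$ in $S$ for $j=1,\ldots,k-1$, giving $T\leq S$ and hence $t_*\leq 1$; when $S=0$ the right-hand side is identically $0$ and the inequality degenerates gracefully (forcing $T=0$ as well, so $\lambda_1\leq\lambda_{k+1}$). With this verified, the minimum $2\sqrt{ST}$ is genuinely attained inside the allowed range and the equivalence is established.
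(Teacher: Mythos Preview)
Your proof is correct and follows essentially the same route as the paper's: compute $\hat x^\uparrow$ for $x=(a,1-a)$, insert into the criterion of Theorem~\ref{thm:ared}, and minimize $\sqrt{\tfrac{1-a}{a}}\,S+\sqrt{\tfrac{a}{1-a}}\,T$ over $a\in[1/2,1]$. You are in fact more careful than the paper, which simply refers back to the analysis in Proposition~\ref{firstqubitproposition}; your explicit verification that the minimizer $t_*=\sqrt{T/S}$ lies in $(0,1]$ (via $T\leq S$ from the ordering of the eigenvalues) and your treatment of the degenerate case $S=0$ fill in details the paper leaves implicit.
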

\begin{proof}
The proof is almost identical to the one of the $ n \times 2$ case. For an arbitrary unit vector $x \in \mathbb C^2 \otimes \mathbb C^k$ having Schmidt coefficients $(a, 1-a)$, with $a \in [1/2,1]$, we have
$$\hat x^\uparrow = (-\sqrt{a(1-a)}, \underbrace{1-a, \ldots, 1-a}_{k-1 \text{ times}}, \sqrt{a(1-a)}, \underbrace{a, \ldots, a}_{k-1 \text{ times}}).$$
The conclusion follows from the same analysis as before.
\end{proof}

\begin{remark}
Let $n \geq 2$. Since
$$\lambda_{2n-1} + 2 \sqrt{\lambda_{2n-2}\lambda_{2n}} \leq \lambda_{n+1} + 2 \sqrt{(\lambda_2 + \cdots + \lambda_n)(\lambda_{n+2} + \cdots + \lambda_{2n})},$$
we have
$\mathrm{ARED}_{n,2} \subseteq \mathrm{ARED}_{2,n}$. See Corollary \ref{cor:nk-kn} for a more general statement.
\end{remark}

\section{Pseudo-pure states}\label{examplesection}

In this section, we study a special class of quantum states on $\mathbb C^n\otimes\mathbb C^k$, namely those lying on the segment between the ``central point'' of the set of states, $I_{nk}/(nk)$ and an extremal point of $D_{n,k}$, a pure state $v$. This family of states has been termed the ``pseudo-pure states'' following their introduction in NMR quantum computing in \cite{coryetal}. For $v\in\mathbb C^n\otimes \mathbb C^k$ with $\|v\|=1$ and $\mu\in[0,1]$, we evaluate here whether or not the state $$\rho_{v,\mu}:=\mu I_{nk}/(nk)+(1-\mu)vv^*$$ is in $\mathrm{(A)RED}$ or $\mathrm{(A)PPT}$, obtaining explicit criteria in all cases.

\begin{proposition}\label{exampleprop}
Let $n,k\geq2$, $v\in\mathbb C^n\otimes \mathbb C^k$ with $\|v\|=1$, and $\mu\in[0,1]$. Define $$\rho_{v,\mu}:=\mu I_{nk}/(nk)+(1-\mu)vv^*, \; r:=\min(n,k),$$ and denote by $\nu_1\geq\nu_2\geq\cdots\geq\nu_r\geq0$ the Schmidt coefficients of $v$ {\rm(}with the convention $\sum_{i=1}^r\nu_i=1${\rm)}. Then:
\begin{enumerate}
\item $\rho_{v,\mu}\in\mathrm{ARED}_{n,k} \quad \Leftrightarrow \quad \mu\geq\left(\frac{k-1}{nk}\frac{r}{r-1}+1\right)^{-1}$.
\item $\rho_{v,\mu}\in\mathrm{PPT}_{n,k} \quad \Leftrightarrow \quad \rho_{v,\mu}\in\mathrm{SEP}_{n,k} \quad \Leftrightarrow\quad \mu\geq\left(\frac{1}{nk\sqrt{\nu_1\nu_2}}+1\right)^{-1}$.
\item $\rho_{v,\mu}\in\mathrm{APPT}_{n,k} \quad \Leftrightarrow \quad  \rho_{v,\mu}\in\mathrm{ASEP}_{n,k} \quad \Leftrightarrow\quad \mu\geq\left(\frac{2}{nk}+1\right)^{-1}$.
\item $\rho_{v,\mu}\in\mathrm{RED}_{n,k} \quad \Leftrightarrow \quad \sum_{i=1}^r\left(\frac{1}{\nu_i}\frac{\mu}{1-\mu}\frac{k-1}{nk}+1\right)^{-1}\leq1$.
\end{enumerate}
\end{proposition}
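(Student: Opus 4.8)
The plan is to compute the reduction $\rho_{v,\mu}^{red}$ explicitly and reduce the positivity question to a single scalar inequality on its smallest eigenvalue. Since the reduction map $R$ is linear,
$$\rho_{v,\mu}^{red} = \frac{\mu}{nk}\,I_{nk}^{red} + (1-\mu)(vv^*)^{red}.$$
A direct computation gives $I_{nk}^{red} = k\,I_n\otimes I_k - I_{nk} = (k-1)I_{nk}$ (consistent with the trace-scaling recorded in Remark \ref{remarktraceremark}), so that
$$\rho_{v,\mu}^{red} = \frac{\mu(k-1)}{nk}\,I_{nk} + (1-\mu)(vv^*)^{red}.$$
Because the first summand is a scalar multiple of the identity, $\rho_{v,\mu}^{red}$ shares its eigenvectors with $(vv^*)^{red}$, and its eigenvalues are exactly $\frac{\mu(k-1)}{nk} + (1-\mu)\hat\nu_j$, where $\hat\nu_j$ ranges over the eigenvalues of $(vv^*)^{red}$ described by Corollary \ref{simplestatementofpurereduction} applied to the Schmidt coefficient vector $(\nu_1,\ldots,\nu_r)$ of $v$.

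The next step is to identify the binding constraint. By Corollary \ref{simplestatementofpurereduction} every eigenvalue of $(vv^*)^{red}$ is non-negative except possibly $\eta_r \leq 0$; since $1-\mu \geq 0$, the smallest eigenvalue of $\rho_{v,\mu}^{red}$ is therefore $\frac{\mu(k-1)}{nk} + (1-\mu)\eta_r$. Writing $c := -\eta_r = -\lambda_{min}\big((vv^*)^{red}\big) \geq 0$, positivity of $\rho_{v,\mu}^{red}$ is thus equivalent, for $\mu<1$, to
$$c \leq \frac{\mu}{1-\mu}\,\frac{k-1}{nk} =: t.$$
The boundary cases serve as a consistency check: for $\mu = 1$ the state is maximally mixed and lies in $\mathrm{RED}$, matching the right-hand side all of whose terms then vanish; for $\mu = 0$ the inequality $c\leq 0$ forces $c=0$, i.e.\ $v$ product, again matching.

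Finally, I would convert $c \leq t$ into the stated form. Recall from the proof of Proposition \ref{marcoremark} (equivalently Corollary \ref{simplestatementofpurereduction}) that $c$ is the unique non-negative solution of $\sum_{i=1}^r \frac{\nu_i}{\nu_i + c} = 1$. Setting $g(s) := \sum_{i=1}^r \frac{\nu_i}{\nu_i + s}$, the function $g$ is strictly decreasing on $[0,\infty)$, so $c \leq t$ holds if and only if $g(t) \leq g(c) = 1$. Using the identity $\frac{\nu_i}{\nu_i + t} = \big(\tfrac{1}{\nu_i}t + 1\big)^{-1}$ (with the convention that a vanishing $\nu_i$ contributes a vanishing term) and substituting $t = \frac{\mu}{1-\mu}\frac{k-1}{nk}$ yields exactly the claimed criterion $\sum_{i=1}^r\big(\tfrac{1}{\nu_i}\tfrac{\mu}{1-\mu}\tfrac{k-1}{nk} + 1\big)^{-1} \leq 1$.

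I expect the only points needing genuine care to be the monotonicity argument that globally inverts the implicit definition of $c$ (turning $c\leq t$ into $g(t)\leq 1$) and the bookkeeping of the boundary cases $\mu\in\{0,1\}$ together with vanishing Schmidt coefficients; the reduction of positivity to the single smallest eigenvalue, which is the conceptual heart of the argument, is itself a short linear computation once $I_{nk}^{red}=(k-1)I_{nk}$ is observed.
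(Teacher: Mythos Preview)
Your proposal addresses only part~(4) of the proposition; parts (1)--(3) concerning $\mathrm{ARED}$, $\mathrm{PPT}/\mathrm{SEP}$, and $\mathrm{APPT}/\mathrm{ASEP}$ are not treated at all. In the paper, (1) is obtained from Theorem~\ref{thm:ared} together with the extremal bound $\eta_r\geq -(r-1)/r$ from Lemma~\ref{MXlemma}(3) (attained at the maximally entangled Schmidt vector); (2) is obtained by citing Hildebrand's formula for the smallest eigenvalue of $(vv^*)^\Gamma$ and the Vidal--Tarrach separability result for pseudo-pure states; and (3) follows by optimizing the condition in (2) over the unitary orbit of $v$, the worst case being $\omega_1=\omega_2=1/2$. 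None of these ingredients appears in your outline.

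For part~(4) itself, your argument is correct and essentially identical to the paper's. Both compute $\rho_{v,\mu}^{red}=\frac{\mu(k-1)}{nk}I_{nk}+(1-\mu)(vv^*)^{red}$, identify the smallest eigenvalue as $\frac{\mu(k-1)}{nk}+(1-\mu)\eta_r$, and convert the resulting inequality $\eta_r\geq -\frac{\mu}{1-\mu}\frac{k-1}{nk}$ into the stated criterion via the monotonicity of the secular function defining $\eta_r$. Your $g(s)=\sum_i\nu_i/(\nu_i+s)$ on $[0,\infty)$ is precisely $F_v(-s)+1$ in the paper's notation, so the two monotonicity arguments coincide up to the change of variable $s=-\eta$.
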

\begin{proof}(1)  For $x\in\Delta_{\min(n,k)}$, we use the notation from Definition \ref{def:hat-xi} with  $r=\min(n,k)$, i.e.
$$\hat x ~:=~ (\underbrace{x_1, \ldots, x_1}_{k-1 \text{ times}}, \eta_1 ,  \underbrace{x_2, \ldots, x_2}_{k-1 \text{ times}}, \ldots, \eta_{r-1}, \underbrace{x_r, \ldots, x_r}_{k-1 \text{ times}}, \underbrace{0, \ldots, 0}_{(n-r)k \text{ times}},\eta_r),$$
where
$x_1\geq\eta_1\geq x_2\geq\eta_2\geq\cdots\geq\eta_{r-1}\geq x_r\geq0\geq\eta_r$. By Theorem \ref{thm:ared}, we have $\rho_{v,\mu}\in\mathrm{ARED}_{n,k}$ if and only if
\begin{align*}
\sum_{i=1}^rx_i(k-1)\frac{\mu}{nk}+\sum_{i=1}^{r-1}\eta_i\frac{\mu}{nk}+\eta_r\left(\frac{\mu}{nk}+1-\mu\right)\geq0\quad\forall x\in \Delta_r.
\end{align*}
Now, by Corollary \ref{simplestatementofpurereduction}, we have $\eta_r+\sum_{i=1}^{r-1}\eta_i=0$.
Also,
$x\in \Delta_r$
implies $\sum_{i=1}^rx_i=1$.
Therefore, making the dependence of $\eta_r=\eta_r(x)$ on $x$ explicit, we can simplify:
\begin{align}
\rho_{v,\mu}\in\mathrm{ARED}_{n,k}\quad\Leftrightarrow\quad& (k-1)\frac{\mu}{nk}+(1-\mu)\eta_r(x)\geq0\quad\forall x \in \Delta_r
\nonumber\\
\Leftrightarrow\quad& \mu\geq\left(\frac{k-1}{-nk\eta_r(x)}+1\right)^{-1}\quad\forall x \in \Delta_r.
\label{ARED1largeEV}
\end{align}
The right-hand side of the last condition is monotonically decreasing in $\eta_r(x)\leq0$. Thus, the strongest constraint on $\mu$ is obtained for the smallest possible value of $\eta_r(x)$, which is $\eta_r=-\frac{r-1}{r}$ by Lemma \ref{MXlemma} (3), occurring exactly if $x_i=\frac{1}{r}~\forall i=1,\ldots,r$. Plugging this into Eq.\ (\ref{ARED1largeEV}) above gives the claim. Note that the optimal vectors $x$ correspond exactly to the maximally entangled states on $\mathbb C^n\otimes \mathbb C^k$.

(2) According to \cite[Lemma III.3]{hil}, the smallest eigenvalue of $$(\rho_{v,\mu})^\Gamma=\mu I_{nk}/(nk)+(1-\mu)(vv^*)^\Gamma$$ is $\frac{\mu}{nk}+(1-\mu)(-\sqrt{\nu_1\nu_2})$. This is non-negative iff the condition from the proposition holds. The condition for $\mathrm{SEP}_{n,k}$ follows from \cite[Appendix B]{vidaltarrach}.

(3) The orbit $\left\{U\rho_{v,\mu}U^*\,|\,U \in \mathcal U(nk)\right\}$ is exactly the set $$\left\{\mu I_{nk}/(nk)+(1-\mu)ww^*\,|\,w\in\mathbb C^n\otimes \mathbb C^k,\,\|w\|=1\right\}.$$ By the previous item, these states are all $\mathrm{PPT}$ if and only if $\mu\geq\left(\frac{1}{nk\sqrt{\omega_1\omega_2}}+1\right)^{-1}$ for all $1\geq\omega_1\geq\omega_2\geq0$ with $\omega_1+\omega_2\leq1$ (i.e.\ these are the largest two Schmidt coefficients of any one of the previous vectors $w$). The strongest constraint on $\mu$ is obtained for $\omega_1=\omega_2=1/2$, giving the desired result for $\mathrm{APPT}$. The statement for $\mathrm{ASEP}$ follows from the previous point.

(4) We have $(\rho_{v,\mu})^{red}=\mu I_{nk}(k-1)/(nk)+(1-\mu)(vv^*)^{red}$. The smallest eigenvalue of this matrix is, by Theorem \ref{thm:red-pure}, $\mu\frac{k-1}{nk}+(1-\mu)\eta_r(v)$, where $\eta_r(v)$ denotes the smallest solution of the equation $F_v(\eta)=0$ (cf.\ statement of Theorem \ref{thm:red-pure}). Therefore, $(\rho_{v,\mu})^{red}\geq0$ if and only if $\eta_r(v)\geq-\frac{\mu}{1-\mu}\frac{k-1}{nk}$. This is equivalent to $F_v(-\frac{\mu}{1-\mu}\frac{k-1}{nk})\leq0$, since $F_v(\eta)$ is strictly monotonically increasing in the interval $\eta\in(-\infty,0]$ from the negative value $-1$ at $\eta=-\infty$ to a non-negative value at $\eta=0$. Writing out the explicit form of $F_v$ yields the claim.
\end{proof}

\begin{remark}
Examining the condition of Proposition \ref{exampleprop}(4) for all choices of $v$, one sees (by the method of Lagrange multipliers enforcing normalization $\sum_{i=1}^r\nu_i=1$) that the strongest constraint is obtained for $\nu_i=1/r$ for all $i=1,\ldots,r$, i.e.\ when $v$ corresponds to a maximally entangled state. Plugging this in, one obtains the condition in Proposition \ref{exampleprop}(1), yielding another proof for it. This proof method is similar to our proof of (3) via (2), but the difference is that the ``most constraining state'' for the APPT condition (3) was a maximally entangled state on a \emph{2-dimensional} subspace (see also \cite{largestseparableball}).
\end{remark}

\begin{corollary}\label{inclusioncorollary}
Under the condition $n,k\geq2$, we have:
\begin{enumerate}
\item $\mathrm{ARED}_{n,k}\not\subseteq\mathrm{APPT}_{n,k}\quad\Leftrightarrow\quad k\geq3$.
\item $\mathrm{ARED}_{n,k}=\mathrm{APPT}_{n,k}=\mathrm{ASEP}_{n,k}\subseteq\mathrm{SEP}_{n,k}\subseteq\mathrm{PPT}_{n,k}\quad\Leftrightarrow\quad k=2$.
\end{enumerate}
\end{corollary}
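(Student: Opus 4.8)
The plan is to reduce both equivalences to two ingredients already at our disposal: the qubit identification of Proposition~\ref{firstqubitproposition}, which settles the case $k=2$, and the explicit pseudo-pure thresholds of Proposition~\ref{exampleprop}, which produce the separating examples when $k\geq 3$. Throughout, I use that $n,k\geq 2$ forces $r:=\min(n,k)\geq 2$, so all the quantities below are well defined.

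First I would dispose of the $k=2$ side of both claims. For $k=2$, Proposition~\ref{firstqubitproposition} gives the equalities $\mathrm{ARED}_{n,2}=\mathrm{APPT}_{n,2}=\mathrm{ASEP}_{n,2}$, and the two trailing inclusions $\mathrm{ASEP}_{n,2}\subseteq\mathrm{SEP}_{n,2}\subseteq\mathrm{PPT}_{n,2}$ are immediate (take $U$ equal to the identity in the definition of $\mathrm{ASEP}$, and use the PPT criterion for separability). This proves the backward implication of (2). It simultaneously proves the backward implication of (1) in contrapositive form: since $\mathrm{ARED}_{n,2}=\mathrm{APPT}_{n,2}$ entails $\mathrm{ARED}_{n,2}\subseteq\mathrm{APPT}_{n,2}$, the non-inclusion of (1) is impossible when $k=2$, so non-inclusion forces $k\geq 3$.

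Second, for the forward implication of (1) (the case $k\geq 3$) I would exhibit a separating state inside the pseudo-pure family. By Proposition~\ref{exampleprop}, for any fixed $v$ one has $\rho_{v,\mu}\in\mathrm{ARED}_{n,k}$ exactly when $\mu\geq t_R:=\big(\tfrac{k-1}{nk}\tfrac{r}{r-1}+1\big)^{-1}$, and $\rho_{v,\mu}\in\mathrm{APPT}_{n,k}$ exactly when $\mu\geq t_P:=\big(\tfrac{2}{nk}+1\big)^{-1}$; crucially, neither threshold depends on the Schmidt coefficients of $v$, so $v$ may be chosen arbitrarily. Since $s\mapsto(s+1)^{-1}$ is strictly decreasing, the inequality $t_R<t_P$ is equivalent to $\tfrac{(k-1)r}{r-1}>2$, i.e.\ to $k>3-\tfrac{2}{r}$, which holds for every $k\geq 3$ because $3-\tfrac 2r<3$. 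Hence one may pick $\mu$ with $t_R\leq\mu<t_P$, and for this value $\rho_{v,\mu}\in\mathrm{ARED}_{n,k}\setminus\mathrm{APPT}_{n,k}$, establishing $\mathrm{ARED}_{n,k}\not\subseteq\mathrm{APPT}_{n,k}$.

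Finally, the forward implication of (2) follows formally from (1): if the whole chain of (2) holds, then in particular $\mathrm{ARED}_{n,k}=\mathrm{APPT}_{n,k}$, so $\mathrm{ARED}_{n,k}\subseteq\mathrm{APPT}_{n,k}$, which by part~(1) is incompatible with $k\geq 3$; as $k\geq 2$, this yields $k=2$. The only genuinely computational point is the threshold comparison $t_R<t_P$ in the third paragraph, and I expect that arithmetic to be the main thing to get right rather than any conceptual obstacle; in particular it is worth recording that the boundary case $k=2$, $r=2$ gives equality $t_R=t_P$, consistent with there being no gap and hence no separating pseudo-pure state when $k=2$.
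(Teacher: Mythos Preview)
Your proposal is correct and follows essentially the same route as the paper: the $k=2$ case via Proposition~\ref{firstqubitproposition}, and for $k\geq3$ a pseudo-pure separating state via the threshold comparison from Proposition~\ref{exampleprop}. The only cosmetic difference is that the paper fixes $\mu=t_R$ and a specific $v$ with $\nu_1=\nu_2=1/2$ (invoking part~(2) rather than part~(3) of Proposition~\ref{exampleprop}, which for this $v$ gives the same bound $t_P$), whereas you invoke part~(3) directly and allow any $\mu\in[t_R,t_P)$; your treatment of the remaining implications is also slightly more explicit than the paper's.
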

\begin{proof}
Let first $k\geq3$ and $n\geq2$. Define $r:=\min(n,k)$, choose a unit vector $v\in\mathbb C^n\otimes\mathbb C^k$ with the Schmidt coefficients $\nu_1=\nu_2=1/2$ (cf.\ statement of Proposition \ref{exampleprop}), and let $\mu:=\left(\frac{k-1}{nk}\frac{r}{r-1}+1\right)^{-1}$. Then $\rho_{v,\mu}\in\mathrm{ARED}_{n,k}\setminus\mathrm{APPT}_{n,k}$ by Proposition \ref{exampleprop}, since
\begin{align*}
\mu=\left(\frac{k-1}{nk}\frac{r}{r-1}+1\right)^{-1}<\left(\frac{1}{nk\sqrt{\nu_1\nu_2}}+1\right)^{-1}=\left(\frac{2}{nk}+1\right)^{-1},
\end{align*}
as one easily verifies.

The equality $\mathrm{ARED}_{n,k}=\mathrm{APPT}_{n,k}=\mathrm{ASEP}_{n,k}$ for $k=2$ holds by Proposition \ref{firstqubitproposition}.
\end{proof}

\section{Intermission: $\mathrm{APPT}$, $\mathrm{GER}$, $\mathrm{ASEP}$ and $\mathrm{SEPBALL}$}
We continue our treatment of $\mathrm{ARED}$ in the next section, where we will introduce simple polyhedral upper and lower bounds on it. But here we pause to first discuss in more detail the sets $\mathrm{APPT}$ (see especially  \cite{hil}) and $\mathrm{ASEP}$ (see in particular \cite{largestseparableball}) coming from the partial transposition criterion and from separability itself, and sets $\mathrm{GER}$ and $\mathrm{SEPBALL}$ which will turn out to be lower approximations to them. Let us now define the latter two sets and make their meaning clear afterwards.
\begin{definition}\label{definesepball} \rm Given $n,k\geq2$, denote $r:=\min(n,k)$, and define the following two sets:
\begin{align*}\mathrm{SEPBALL}_{n,k}~&:=~\Big\{\rho\in D_{n,k}\,\Big|\,{\rm Tr}[\rho^2]\leq\frac{1}{nk-1}\Big\},\\
\mathrm{GER}_{n,k}~&:=~\Big\{\lambda\in\Delta_{nk}\,\Big|\,\sum_{i=1}^{r-1}\lambda^\downarrow_i~\leq~2\lambda^\downarrow_{nk}+\sum_{i=1}^{r-1}\lambda^\downarrow_{nk-i}\,\Big\}.\end{align*}
Note that, as described in Section \ref{preliminariessection}, we will freely identify $\mathrm{GER}_{n,k}$ as the subset of $D_{n,k}$ consisting of those quantum states with spectrum in $\mathrm{GER}_{n,k}$, and conversely $\mathrm{SEPBALL}_{n,k}$ as a subset of $\Delta_{nk}$.
\end{definition}
It has been proven in \cite{largestseparableball} that all states in $\mathrm{SEPBALL}_{n,k}$ are separable and that this set is in fact the largest Euclidean ball inside $D_{n,k}$ (Ref.\ \cite{zycz} already implies  that this is the largest ball of PPT states).
In fact, since its characterization depends only on spectral information, we even get the lower approximation $\mathrm{SEPBALL}_{n,k}\subseteq\mathrm{ASEP}_{n,k}$. The fact that there cannot be a larger ball of separable states inside $D_{n,k}$ can be understood by noting that $\mathrm{SEPBALL}_{n,k}$ contains states on the boundary of $D_{n,k}$, namely all states $\rho$ with spectrum $\lambda_\rho=(1,1,\ldots,1,0)/(nk-1)$. Below we will show that these states are actually the only rank-deficient states (i.e.\ are on the boundary of $D_{n,k}$) in $\mathrm{ASEP}_{n,k}$.

The designation $\mathrm{GER}$ in the foregoing definition alludes to the ``Gershgorin circle theorem'' \cite{hornjohnson}. The defining equation of $\mathrm{GER}$ is exactly the sufficient condition provided by Gershgorin's theorem for all of Hildebrand's $\mathrm{APPT}$ matrix inequalities \cite{hil} to be satisfied, as we show in the next theorem. We thus obtain an easily-checkable sufficient condition for membership in $\mathrm{APPT}$, which is in particular simpler than Hildebrand's condition \cite{hil} that involves checking the positivity of an exponential number (in $\min(n,k)$) of Hermitian matrices, but on the other hand is sufficient \emph{and} necessary.
\begin{theorem}\label{simplesufftheoremforAPPT}Let $\rho\in D_{n,k}$ (for $n,k\geq2$) with decreasingly ordered eigenvalue vector $\lambda$, and denote $r:=\min(n,k)$. Then: $\rho\in\mathrm{APPT}_{n,k}$ whenever
\begin{align}
\sum_{i=1}^{r-1}\lambda_i~\leq~2\lambda_{nk}+\sum_{i=1}^{r-1}\lambda_{nk-i}~.\label{necessaryforAPPTeqn}
\end{align}
In other words: $\mathrm{GER}_{n,k}\subseteq\mathrm{APPT}_{n,k}$.
\end{theorem}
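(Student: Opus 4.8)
The plan is to recall Hildebrand's characterization \cite{hil} of $\mathrm{APPT}_{n,k}$ as a finite family of linear matrix inequalities in the ordered eigenvalues $\lambda_1 \geq \cdots \geq \lambda_{nk}$, and then to apply the Gershgorin circle theorem to each of those matrices to extract a single, uniform sufficient condition. Recall that \cite[Section II]{hil} writes $\mathrm{APPT}_{n,k}$ (with $r = \min(n,k)$) as the set of spectra for which a certain collection of symmetric $r \times r$ matrices is positive-semidefinite; each such matrix $M$ has diagonal entries built from the \emph{smallest} eigenvalues of $\rho$ and off-diagonal entries of the form $-\tfrac12(\lambda_i - \lambda_j)$ for the \emph{largest} eigenvalues $\lambda_i$. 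The key structural point I would exploit is that the combinatorics of Hildebrand's construction is such that the diagonal entries involve the tail eigenvalues $\lambda_{nk}, \lambda_{nk-1}, \ldots$ while each row's off-diagonal mass is controlled by the head eigenvalues $\lambda_1, \lambda_2, \ldots$.

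First I would fix one of Hildebrand's matrices $M$ and bound, for each row $i$, the sum of absolute values of the off-diagonal entries $\sum_{j \neq i} |M_{ij}|$. Because each off-diagonal entry is (up to sign) $\tfrac12(\lambda_a - \lambda_b)$ for head eigenvalues, this row-sum is dominated by $\tfrac12 \sum_{i=1}^{r-1} \lambda_i$ after accounting for the factor-$\tfrac12$ and the number of off-diagonal terms. Gershgorin's theorem then guarantees $M \geq 0$ as soon as each diagonal entry $M_{ii}$ exceeds its off-diagonal row-sum, i.e.\ $M_{ii} \geq \sum_{j\neq i} |M_{ij}|$. Next I would show that the worst (smallest) diagonal entry across all of Hildebrand's matrices, balanced against the worst off-diagonal row-sum, collapses precisely to the inequality \eqref{necessaryforAPPTeqn}: the diagonal contributes the term $2\lambda_{nk} + \sum_{i=1}^{r-1}\lambda_{nk-i}$ from the tail eigenvalues, and the off-diagonal row-sum contributes $\sum_{i=1}^{r-1}\lambda_i$ from the head. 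Verifying that \eqref{necessaryforAPPTeqn} uniformly implies the Gershgorin condition $M_{ii} \geq \sum_{j\neq i}|M_{ij}|$ for \emph{every} matrix $M$ in Hildebrand's family is the crux; I would argue this by checking that the monotonicity $\lambda_1 \geq \cdots \geq \lambda_{nk}$ makes \eqref{necessaryforAPPTeqn} the single most stringent among the conditions Gershgorin produces, so that one inequality dominates all of them.

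The main obstacle I anticipate is bookkeeping: one must read off from \cite{hil} the exact index patterns appearing in the diagonal and off-diagonal entries of each matrix in the family, and then verify that the Gershgorin disc condition for \emph{every} row of \emph{every} matrix is implied by the single inequality \eqref{necessaryforAPPTeqn}. The combinatorial content lies in confirming that the particular partition of eigenvalues into ``head'' (off-diagonal) and ``tail'' (diagonal) roles always yields row-sums bounded by $\tfrac12\sum_{i=1}^{r-1}\lambda_i$ and diagonal entries bounded below by the right-hand side of \eqref{necessaryforAPPTeqn}; this requires careful but elementary use of the ordering of the $\lambda_i$. Once the indexing is pinned down, the implication is a routine comparison, and the inclusion $\mathrm{GER}_{n,k} \subseteq \mathrm{APPT}_{n,k}$ follows since membership in both sets depends only on spectral data, as recorded in Section \ref{preliminariessection}.
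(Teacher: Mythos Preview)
Your overall strategy---apply the Gershgorin circle theorem to each matrix in Hildebrand's family and show that the single inequality \eqref{necessaryforAPPTeqn} dominates every resulting row condition---is exactly the paper's approach. However, your description of the structure of Hildebrand's matrices is off in a way that would derail the bookkeeping if followed literally. The off-diagonal entries are not of the form $-\tfrac12(\lambda_a-\lambda_b)$ with both indices among the \emph{head} eigenvalues; rather (see e.g.\ the $r=3$ display \eqref{hildebrandmatrices3n}), each off-diagonal entry is a difference $\lambda_{j}-\lambda_{j'}$ pairing one \emph{tail} index with one \emph{head} index, with no factor $\tfrac12$. Correspondingly, each diagonal entry is simply $2\lambda_{j_0}$ for a single tail index $j_0$, not the full expression $2\lambda_{nk}+\sum_{i=1}^{r-1}\lambda_{nk-i}$.

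The correct accounting is therefore: for any row of any Hildebrand matrix there are $2r-1$ pairwise distinct indices $j_0,\ldots,j_{2r-2}\in[nk]$ such that the Gershgorin quantity (diagonal minus off-diagonal row-sum) equals $2\lambda_{j_0}-\sum_{i=1}^{r-1}|\lambda_{j_i}-\lambda_{j_{i+r-1}}|$. Bounding $\lambda_{j_0}\geq\lambda_{nk}$ and $|\lambda_{j_i}-\lambda_{j_{i+r-1}}|\leq\lambda_i-\lambda_{nk-i}$ (the largest possible gap among $r-1$ disjoint pairs avoiding $j_0$) gives the uniform lower bound $2\lambda_{nk}-\sum_{i=1}^{r-1}(\lambda_i-\lambda_{nk-i})$, whose nonnegativity is precisely \eqref{necessaryforAPPTeqn}. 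Once you fix the structural description, your monotonicity argument goes through exactly as you outline.
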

\begin{proof}
Let $j_0,j_1,\ldots ,j_{2r-2}$ be $(2r-1)$ pairwise distinct elements of the set $\{1,2,\ldots,nk\}$. As $\lambda$ is assumed to be decreasingly ordered, we have
\begin{align}\label{diagonalelementminussumofabsolutevalues}
2\lambda_{j_0}-\sum_{i=1}^{r-1}|\lambda_{j_{i}}-\lambda_{j_{i+r-1}}|~\geq~2\lambda_{nk}+\sum_{i=1}^{r-1}(\lambda_{nk-i}-\lambda_{i})~\geq0
\end{align}
by Eq.\ (\ref{necessaryforAPPTeqn}). Now, for any matrix occurring in Hildebrand's $\mathrm{APPT}$ criterion \cite[Lemma III.3]{hil}, the difference between any diagonal element and the sum of absolute values of the other entries in the same row equals the left-hand side of (\ref{diagonalelementminussumofabsolutevalues}) for some choice of pairwise distinct $j_0,\ldots,j_{2r-2}$ (for illustration in the case $k=3\leq n$, see the matrices displayed in Eq.\ (\ref{hildebrandmatrices3n})). By the Gershgorin circle theorem \cite{hornjohnson}, the non-negativity of all such differences ensures the positive-semidefiniteness of all these (Hermitian) matrices. Hildebrand's result \cite[Lemma III.3]{hil} thus gives $\rho\in\mathrm{APPT}$.
\end{proof}

Note that the condition (\ref{necessaryforAPPTeqn}) cannot be sharpened by the above proof technique which relies on a combination of only Hildebrand's criterion with the Gershgorin circle theorem. This is because one of the rows in a matrix of Hildebrand's criterion will always be given by the assignment $j_i=nk-i$ for $0\leq i\leq r-1$ and $j_i=i-r+1$ for $r\leq i\leq 2r-2$ \cite[Lemma III.10]{hil}, and for this the Gershgorin condition is exactly Eq.\ (\ref{necessaryforAPPTeqn}).

Using this assignment in Hildebrand's criterion, we further obtain the following:
\begin{proposition}[see also {\cite[Proposition 1]{johnstonmay2014}}]\label{boundaryproposition}A state $\rho\in\mathrm{APPT}_{n,k}$ is rank-deficient  if and only if it has the following spectrum:
$${\rm spec}(\rho)=\Big(\frac{1}{nk-1},\frac{1}{nk-1},\ldots,\frac{1}{nk-1},0\Big).$$
As $\mathrm{SEPBALL}_{n,k}\subseteq\mathrm{ASEP}_{n,k}\subseteq\mathrm{APPT}_{n,k}$, this means that these states are also the only rank-deficient states in $\mathrm{ASEP}_{n,k}$.
\end{proposition}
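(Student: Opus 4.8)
The plan is to prove the non-trivial forward implication (rank-deficiency forces the displayed spectrum) by exploiting the rigidity of positive-semidefinite matrices with a vanishing diagonal entry, applied to the one distinguished row of Hildebrand's criterion singled out in the remark preceding the statement.

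First I would fix the decreasingly ordered eigenvalue vector $\lambda$ of $\rho\in\mathrm{APPT}_{n,k}$ and assume $\rho$ is rank-deficient, so that $\lambda_{nk}=0$. Since $\rho\in\mathrm{APPT}_{n,k}$, all the Hermitian matrices appearing in Hildebrand's criterion \cite[Lemma III.3]{hil} are positive-semidefinite. Among these there is, by \cite[Lemma III.10]{hil} (the fact already used in the remark above), a matrix possessing the row attached to the assignment $j_i=nk-i$ for $0\leq i\leq r-1$ and $j_i=i-r+1$ for $r\leq i\leq 2r-2$, with $r=\min(n,k)$. As recorded in the proof of Theorem \ref{simplesufftheoremforAPPT}, the diagonal entry of this row equals $2\lambda_{j_0}$ while the off-diagonal entries in the row have absolute values $|\lambda_{j_i}-\lambda_{j_{i+r-1}}|$ for $i=1,\ldots,r-1$. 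For the present assignment $j_0=nk$, so this diagonal entry is $2\lambda_{nk}=0$.

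Next I would invoke the elementary fact that a positive-semidefinite matrix with a zero diagonal entry has the entire corresponding row (and column) equal to zero, as one sees from the $2\times2$ principal minors: if $\left(\begin{smallmatrix}0 & \bar b\\ b & d\end{smallmatrix}\right)\geq0$ then $b=0$. Applied to our row, this forces all its off-diagonal entries to vanish, that is, $|\lambda_{j_i}-\lambda_{j_{i+r-1}}|=0$ for $i=1,\ldots,r-1$. Unwinding the assignment (here $j_{i+r-1}=i$ and $j_i=nk-i$) yields $\lambda_i=\lambda_{nk-i}$ for every $i=1,\ldots,r-1$. In particular $\lambda_1=\lambda_{nk-1}$; since $\lambda_1\geq\lambda_2\geq\cdots\geq\lambda_{nk-1}$, this equality squeezes all the intermediate eigenvalues to a common value $c:=\lambda_1=\cdots=\lambda_{nk-1}$, while $\lambda_{nk}=0$. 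The normalization $\sum_i\lambda_i=1$ then gives $(nk-1)c=1$, i.e.\ $c=1/(nk-1)$, which is precisely the claimed spectrum.

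The converse is immediate: a state with this spectrum has a zero eigenvalue and is therefore rank-deficient; moreover such states genuinely occur in $\mathrm{APPT}_{n,k}$, since $\mathrm{Tr}[\rho^2]=(nk-1)\cdot(nk-1)^{-2}=1/(nk-1)$ places them in $\mathrm{SEPBALL}_{n,k}\subseteq\mathrm{ASEP}_{n,k}\subseteq\mathrm{APPT}_{n,k}$, which simultaneously delivers the final sentence about $\mathrm{ASEP}_{n,k}$. The main obstacle here is not computational but one of bookkeeping: one must correctly transcribe the diagonal and off-diagonal entries of the relevant Hildebrand matrix from \cite{hil}, and verify that the $2r-1$ indices in the distinguished assignment are pairwise distinct (which holds because $nk\geq r^2>2r-2$ for $r\geq2$), so that the row indeed records the differences $\lambda_i-\lambda_{nk-i}$. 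Once this structure is in place, the positive-semidefinite rigidity argument closes the proof essentially for free.
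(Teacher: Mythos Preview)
Your proof is correct and follows essentially the same route as the paper: identify the Hildebrand matrix whose first row has diagonal entry $2\lambda_{nk}=0$, use positive-semidefiniteness to force the off-diagonal entries $\lambda_{nk-i}-\lambda_i$ to vanish, squeeze via the ordering and normalize, and handle the converse through $\mathrm{SEPBALL}_{n,k}$. The paper's write-up is slightly terser (it records the row directly and omits the $2\times2$-minor justification and the index-distinctness check that you spell out), but the argument is the same.
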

\begin{proof}Let $\lambda$ be the decreasingly-ordered eigenvalue vector of a rank-deficient state $\rho\in\mathrm{APPT}_{n,k}$. Thus, we have $\lambda_{nk}=0$. As noted in the paragraph preceding the statement of the present proposition, one of the $r\times r$-matrices (where $r:=\min(n,k)$), for which Hildebrand's criterion \cite{hil} ensures positive-semidefiniteness, has the following first row:
$$(\lambda_{nk}=0,\lambda_{nk-1}-\lambda_1,\lambda_{nk-2}-\lambda_2,\ldots,\lambda_{nk-r+1}-\lambda_{r-1}).$$
As the diagonal element $\lambda_{nk}=0$ of this matrix vanishes, positive-semidefiniteness of the matrix enforces the entire corresponding row to vanish, so that in particular $\lambda_{nk-1}=\lambda_1$. Since $\lambda$ was assumed to be decreasingly ordered and normalized, we get that $$\lambda_1=\cdots=\lambda_{nk-1}=\frac{1}{nk-1}\,.$$ The fact that all such states are contained in $\mathrm{SEPBALL}_{n,k}\subseteq\mathrm{APPT}_{n,k}$, follows from the definition of $\mathrm{SEPBALL}_{n,k}$.
\end{proof}

\begin{remark}
Note that $\mathrm{GER}_{n,k}\subseteq\mathrm{APPT}_{n,k}$ is a polyhedral subset of $\Delta_{nk}$ containing the extreme points from Proposition \ref{boundaryproposition}, as is easily seen from its definition. $\mathrm{SEPBALL}_{n,k}\subseteq\mathrm{APPT}_{n,k}$ contains these boundary states as well, but the set is ``round'' due to its definition via Euclidean distances which is quadratic in the eigenvalues -- thus it has a unique supporting hyperplane at these boundary points, which coincides with a facet of $D_{n,k}$. Both these facts together imply that $\mathrm{SEPBALL}_{n,k}\not\subseteq\mathrm{GER}_{n,k}$, which can also be seen by explicit examples of states. Furthermore, it is $\mathrm{GER}_{n,k}\not\subseteq\mathrm{SEPBALL}_{n,k}$, which will for example follow from Proposition \ref{propsomegeometricquantities}. But since both $\mathrm{GER}_{n,k}$ and $\mathrm{SEPBALL}_{n,k}\subseteq\mathrm{SEP}_{n,k}$ are contained in $\mathrm{APPT}_{n,k}$, we have the following lower approximation to $\mathrm{APPT}_{n,k}$ which has the benefit of being much easier than the exact characterization given in \cite{hil}:
$$\mathrm{GER}_{n,k}\cup\mathrm{SEPBALL}_{n,k}\subseteq\mathrm{APPT}_{n,k}.$$
\end{remark}

\section{A family of intermediate criteria}\label{sec:A family of intermediate criteria}

For arbitrary $p \in [nk]$, let us introduce the sets of eigenvalue vectors for which the largest eigenvalue is less or equal than the sum of the $p$ smallest:
$$\mathrm{LS}_p:=\{\lambda \in \Delta_{nk} \, : \, \lambda_1^\downarrow \leq \lambda_{nk-p+1}^\downarrow + \lambda_{nk-p+2}^\downarrow + \cdots +\lambda_{nk}^\downarrow\}.$$
Obviously,  for $p <q$, $\mathrm{LS}_p \subseteq \mathrm{LS}_q$. Furthermore, one has $\mathrm{LS}_1 = \{\mathbf{1}_{nk}/(nk)\}$ and $\mathrm{LS}_{nk} = \Delta_{nk}$.

Let us now consider how these simple sets $\mathrm{LS}_q$ are positioned with respect to the sets $\mathrm{APPT}$ and $\mathrm{ARED}$:
\begin{theorem}\label{intermediatecriteriathm}
For $n,k\geq3$, we have
$$\mathrm{APPT} \subseteq \mathrm{LS}_3 \subseteq\mathrm{LS}_k \subseteq \mathrm{ARED}_{n,k} \subseteq \mathrm{LS}_{2k-1}.$$
More exactly, the following inclusions hold:
\begin{enumerate}
\item\label{APPTinASS3}For $n,k\geq2$: $\mathrm{APPT}\subseteq\mathrm{LS}_3$.
\item\label{APPT2in-notinAPPT2}For $\min(n,k)\in\{2,3\}$: $\mathrm{LS}_2\subseteq\mathrm{APPT}$.\\For $\min(n,k)\geq4$: $\mathrm{LS}_2\not\subseteq\mathrm{APPT}$.
\item\label{APPTnotinASS2}For $n,k\geq2$: $\mathrm{APPT}\not\subseteq\mathrm{LS}_2$.
\item\label{ASSkinARED}For $n,k\geq2$: $\mathrm{LS}_k\subseteq\mathrm{ARED}_{n,k}$.
\item\label{AREDinASS2kminus1}For $n,k\geq2$: $\mathrm{ARED}_{n,k}\subseteq\mathrm{LS}_{2k-1}$.
\end{enumerate}
\end{theorem}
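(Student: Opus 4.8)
The plan is to prove the five itemized inclusions separately; the displayed chain for $n,k\ge3$ then follows by combining (1), (4), (5) with the trivial monotonicity $\mathrm{LS}_3\subseteq\mathrm{LS}_k$ (valid since $3\le k$). Throughout I would work with the two dual descriptions already at hand: $\mathrm{ARED}_{n,k}=\{\rho:\forall x\in\Delta_{\min(n,k)},\ \langle\lambda_\rho^\downarrow,\hat x^\uparrow\rangle\ge0\}$ from Theorem \ref{thm:ared}, and the analogous $\mathrm{APPT}_{n,k}=\{\rho:\forall x,\ \langle\lambda_\rho^\downarrow,E(x)^\uparrow\rangle\ge0\}$, where $E(x)$ is the spectrum of the partial transpose of a pure state with Schmidt vector $x$ (see the remark after Theorem \ref{thm:ared} and \cite{hil}). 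The easy endpoint comes first: item (5), $\mathrm{ARED}_{n,k}\subseteq\mathrm{LS}_{2k-1}$, is nothing but Corollary \ref{cor.ared} with $r=2$, since the right-hand side there is exactly the sum of the $2k-1$ smallest eigenvalues.

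For item (1) I would feed a single test vector into the APPT description: take $x=(1/2,1/2,0,\dots,0)$, the maximally entangled state on a two-dimensional subspace. Its partial transpose is, up to the embedding, $\tfrac12\,\mathrm{SWAP}$ on a $4$-dimensional space, so $E(x)^\uparrow=(-\tfrac12,0,\dots,0,\tfrac12,\tfrac12,\tfrac12)$, and $\langle\lambda^\downarrow,E(x)^\uparrow\rangle\ge0$ is verbatim the $\mathrm{LS}_3$ condition $\lambda_1\le\lambda_{nk-2}+\lambda_{nk-1}+\lambda_{nk}$. The same idea powers the negative half of item (2): for $\min(n,k)\ge4$ I would use the rank-$4$ maximally entangled $x=(1/4,1/4,1/4,1/4)$, whose partial transpose is $\tfrac14\,\mathrm{SWAP}$ with antisymmetric (negative) multiplicity $6$ and symmetric multiplicity $10$, giving the APPT constraint $\sum_{i=1}^{6}\lambda_i\le\sum_{\text{10 smallest}}\lambda_i$; the explicit spectrum with six entries $a$ and the rest $b$, where $a=2b$ and $b=1/(nk+6)$, satisfies $\mathrm{LS}_2$ (with equality) but violates this constraint, so it lies in $\mathrm{LS}_2\setminus\mathrm{APPT}$. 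The positive half of item (2) ($\min(n,k)\in\{2,3\}$) and all of item (3) I would route through $\mathrm{GER}_{n,k}$ of Theorem \ref{simplesufftheoremforAPPT}: a short eigenvalue estimate gives $\mathrm{LS}_2\subseteq\mathrm{GER}\subseteq\mathrm{APPT}$ when $\min(n,k)\le3$, while the spectrum $\lambda=(3t,t,\dots,t)$ with $t=1/(nk+2)$ saturates the defining $\mathrm{GER}$ inequality (hence lies in $\mathrm{APPT}$) yet fails $\mathrm{LS}_2$, yielding $\mathrm{APPT}\not\subseteq\mathrm{LS}_2$ for all $n,k\ge2$.

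The heart of the theorem, and the step I expect to be hardest, is item (4), $\mathrm{LS}_k\subseteq\mathrm{ARED}_{n,k}$, because the ``uniform Schmidt vector is worst'' heuristic is \emph{false} here — already the qubit case of Proposition \ref{firstqubitproposition} shows the minimizing $x$ is not $(1/s,\dots,1/s)$ — so Corollary \ref{cor.ared} alone does not suffice and one must control $\langle\lambda^\downarrow,\hat x^\uparrow\rangle$ for \emph{every} $x$. My plan is a direct accounting based on the exact pairing dictated by the sorting. Writing $s$ for the rank of $x$ and recalling from Corollary \ref{simplestatementofpurereduction} that $\hat x$ has a single negative entry $\eta_s=-\sum_{i=1}^{s-1}\eta_i$ with $x_i\ge\eta_i\ge x_{i+1}\ge0$, the only negative term in the inner product is $\lambda_1\eta_s$, and the positive entries pair with the smallest $sk-1$ eigenvalues in blocks of size $k$. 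Grouping these as $B_i$ (the $k-1$ eigenvalues multiplying the copies of $x_i$) and $\mu_i$ (the single eigenvalue multiplying $\eta_i$) gives
$$\langle\lambda^\downarrow,\hat x^\uparrow\rangle=\sum_{i=1}^{s}x_iB_i+\sum_{i=1}^{s-1}\eta_i(\mu_i-\lambda_1).$$
Using $\eta_i\le x_i$ and $\mu_i\le\lambda_1$ to replace $\eta_i$ by $x_i$ in the non-positive second sum and regrouping yields
$$\langle\lambda^\downarrow,\hat x^\uparrow\rangle\ \ge\ \sum_{i=1}^{s-1}x_i\big(B_i+\mu_i-\lambda_1\big)+x_sB_s,$$
and the crux is that $B_i+\mu_i$ is precisely a sum of $k$ consecutive eigenvalues, hence at least the sum of the $k$ smallest, which by the $\mathrm{LS}_k$ hypothesis is $\ge\lambda_1$; every bracket is therefore non-negative and $x_sB_s\ge0$, closing the argument. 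The delicate points I would verify are that the index bookkeeping is exactly right — each $B_i+\mu_i$ for $i\le s-1$ really being a block of $k$ consecutive eigenvalues, with the top block $i=s$ carrying no $\mu$ term — and that the interlacing bounds $0\le\eta_i\le x_i$ persist in the degenerate cases where some $x_i$ coincide or vanish.
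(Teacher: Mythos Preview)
Your proposal is correct and follows essentially the same approach as the paper: item (5) via Corollary \ref{cor.ared} with $r=2$, item (1) via the rank-2 maximally entangled test vector, item (4) via the identical substitution $\eta_i\to x_i$ followed by the observation that each $B_i+\mu_i$ is a block of $k$ consecutive eigenvalues dominating the smallest $k$, and items (2)--(3) via Gershgorin (the paper applies it directly to Hildebrand's matrices while you factor through the named set $\mathrm{GER}$, and uses a rank-$k$ rather than rank-$4$ test vector for the counterexample, but these are cosmetic). Your index bookkeeping in (4) is right, and the interlacing $0\le\eta_i\le x_i$ for $i<s$ does hold in the degenerate cases by Corollary \ref{simplestatementofpurereduction}.
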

\begin{proof}Ad (\ref{APPTinASS3}): Let us start with the first inclusion and consider a pure vector $x \in \mathbb C^n \otimes \mathbb C^k$, having two non-zero Schmidt coefficients, both equal to $1/2$. By Hildebrand's criterion \cite{hil}, it follows that any $\lambda \in \mathrm{APPT}$ must satisfy
$$\langle \lambda^\downarrow, (-1/2, \underbrace{0, \ldots, 0}_{nk-4 \text{ times}}, 1/2,1/2,1/2) \rangle \geq 0,$$
which is exactly the condition $\lambda \in \mathrm{LS}_{3}$.

Ad (\ref{APPT2in-notinAPPT2}): Here we may w.l.o.g.\ assume $\min(n,k)=k$. For $k=2$, the assertion follows from the following inequality together with Proposition \ref{firstqubitproposition}:
$$\lambda_1 \leq \lambda_{2n-1} + \lambda_{2n} \leq \lambda_{2n-1} + 2\sqrt{\lambda_{2n-2}\lambda_{2n}}.$$

For $k=3\leq n$ and $\lambda\in\mathrm{LS}_2$ (with decreasingly ordered components), we show that $\lambda\in\mathrm{APPT}$, which by the criterion given in \cite[Corollary V.3]{hil} is equivalent to the following two matrices being positive-semidefinite:
\begin{equation}\label{hildebrandmatrices3n}
\begin{pmatrix}2\lambda_{3n}&\lambda_{3n-1}-\lambda_1&\lambda_{3n-2}-\lambda_2\\
\lambda_{3n-1}-\lambda_1&2\lambda_{3n-3}&\lambda_{3n-4}-\lambda_3\\
\lambda_{3n-2}-\lambda_2&\lambda_{3n-4}-\lambda_3&2\lambda_{3n-5}\end{pmatrix},
~~~\begin{pmatrix}2\lambda_{3n}&\lambda_{3n-1}-\lambda_1&\lambda_{3n-3}-\lambda_2\\
\lambda_{3n-1}-\lambda_1&2\lambda_{3n-2}&\lambda_{3n-4}-\lambda_3\\
\lambda_{3n-3}-\lambda_2&\lambda_{3n-4}-\lambda_3&2\lambda_{3n-5}\end{pmatrix}.
\end{equation}

A sufficient condition for a Hermitian matrix to be positive-semidefinite is, by the Gershgorin circle theorem \cite{hornjohnson}, for each diagonal element to be at least as large as the sum of the absolute values of the other entries in the same row. By the assumed ordering of the entries of $\lambda$, this condition is the most constraining for the first row of the first of both matrices. Considering this row, we have:
\begin{align*}\label{hildebrandmatrices3n}
2\lambda_{3n}-|\lambda_{3n-1}-\lambda_1|-|\lambda_{3n-2}-\lambda_2|&=(\lambda_{3n}+\lambda_{3n-1}-\lambda_1)+(\lambda_{3n}+\lambda_{3n-2}-\lambda_2)\\
&\geq2(\lambda_{3n}+\lambda_{3n-1}-\lambda_1)\geq0,
\end{align*}
where the last inequality follows from $\lambda\in\mathrm{LS}_2$, and finally implies $\lambda\in\mathrm{APPT}$.

For the case $k\geq4$ consider the following element, with $a:=1/[nk+k(k-1)/2]>0$:
$$\lambda = (\underbrace{2a, \ldots, 2a}_{k(k-1)/2 \text{ times}}, a, \ldots, a ) \in \Delta_{nk}.$$
Obviously, $\lambda \in \mathrm{LS}_2$. However, for a maximally entangled vector $x=\sum_{i=1}^ke_i\otimes  f_i/\sqrt{k} \in \mathbb C^n \otimes \mathbb C^k$, the partially-transposed operator $(xx^*)^{\Gamma}$ has eigenvalues $(-1/k)$ with multiplicity $k(k-1)/2$ and $(1/k)$ with multiplicity $k(k+1)/2$ and $0$ otherwise. Thus, $\lambda\in\mathrm{APPT}$ would imply \cite{hil}:
$$\lambda \in \mathrm{APPT} \implies 0\leq-\frac{1}{k}\frac{k(k-1)}2 2a +\frac{1}{k} \frac{k(k+1)}2 a,$$
which is false for $k >3$, finishing the proof.

Ad (\ref{APPTnotinASS2}): This will follow from Proposition \ref{propsomegeometricquantities} (from the statement $\Lambda(\mathrm{APPT})>\Lambda(\mathrm{LS}_2)$).

Ad (\ref{ASSkinARED}): For arbitrary vectors $\lambda = \lambda^\downarrow \in \Delta_{nk}$ and $x  =x^\downarrow \in \Delta_r$, where $r:=\min(n,k)$, we use the bound $\eta_i \leq x_i$ (for $i \in [r-1]$), the equality $-\eta_{r}=\eta_1+\ldots+\eta_{r-1}$ (see Lemma \ref{MXlemma}(\ref{MXlemmatrace0}) and Definition \ref{def:hat-xi}), and the ordering of the vectors $\lambda$ and $x$ to get
\begin{align*}
\langle \lambda^\downarrow,\hat x^\uparrow \rangle &= x_1 (\lambda_{nk} + \cdots + \lambda_{nk-k+2}) + \eta_1 \lambda_{nk-k+1} \\
&\quad+ x_2 (\lambda_{(n-1)k} + \cdots + \lambda_{(n-1)k-k+2}) + \eta_2 \lambda_{(n-1)k-k+1} + \cdots \\
&\quad+ x_r (\lambda_{(n-r+1)k} + \cdots + \lambda_{(n-r+1)k-k+2}) + 0 \cdot (\lambda_{(n-r)k+1} + \cdots + \lambda_2) + \eta_r \lambda_1\\
&\geq x_1 (\lambda_{nk} + \cdots + \lambda_{nk-k+2}) + x_1 \lambda_{nk-k+1} \\
&\quad+ x_2 (\lambda_{(n-1)k} + \cdots + \lambda_{(n-1)k-k+2}) + x_2 \lambda_{(n-1)k-k+1} + \cdots \\
&\quad+ x_r (\lambda_{(n-r+1)k} + \cdots + \lambda_{(n-r+1)k-k+2}) + 0 \cdot (\lambda_{(n-r)k+1} + \cdots + \lambda_2) + \eta_r' \lambda_1,
\end{align*}
where $\eta_r'$ is chosen as follows:
$$\eta_r' = -(x_1 + x_2 + \cdots + x_{r-1}) = x_r -1.$$
We continue towards a concise lower bound for $\langle \lambda^\downarrow,\hat x^\uparrow \rangle$, using $\sum_{i=1}^rx_i=1$:
\begin{align*}
\langle \lambda^\downarrow,\hat x^\uparrow \rangle &\geq x_1 (\lambda_{nk} + \cdots + \lambda_{nk-k+1})  \\
&\quad+ x_2 (\lambda_{(n-1)k} + \cdots + \lambda_{(n-1)k-k+1}) + \cdots\\
&\quad+ x_r (\lambda_{(n-r+1)k} + \cdots + \lambda_{(n-r+1)k-k+2} + \lambda_1)  - \lambda_1\\
&\geq (\lambda_{nk} + \cdots + \lambda_{nk-k+1}) - \lambda_1.
\end{align*}
For any fixed $\lambda \in \mathrm{LS}_k$, this last expression is non-negative, implying that $\langle \lambda^\downarrow,\hat x^\uparrow \rangle\geq0$ for any $x  =x^\downarrow \in \Delta_r$, so that $\lambda\in\mathrm{ARED}_{n,k}$ by Theorem \ref{thm:ared}.

Ad (\ref{AREDinASS2kminus1}): This is simply the constraint for $r=2$ from Corollary \ref{cor.ared}.
\end{proof}

\bigskip
Now we look at some geometrical quantities associated to the various sets used and defined earlier in this section. In particular, for any subset $A \subseteq \Delta_{nk}$, we define
$$\Lambda(A) ~:=~ \sup_{\lambda \in A} \lambda_1^\downarrow.$$
When $A$ is identified with the set of spectra of a set $A\subseteq D_{n,k}$, then $\Lambda(A)=\sup_{\rho\in A}\|\rho\|_\infty$.
Furthermore, since the function $\lambda\mapsto\lambda_1^\downarrow$
is convex (similarly, $\|\rho\|_\infty$ and ${\rm Tr}[\rho^2]$), when $A$ is convex the supremum in the definitions of $\Lambda(A)$ is attained at an extreme point of $A$.
Note also that, when $\Lambda(A)\geq 2/(nk)$,  $\Lambda(A)-1/(nk)$ is the radius of the smallest operator-norm ball around $I_{nk}/(nk)$ containing $A$.

\begin{proposition}[Some geometrical quantities]\label{propsomegeometricquantities}
Let $n,k\geq2$. Then we have:
\begin{align*}
\Lambda(\mathrm{ARED})~&=~
\begin{cases}
\frac{k+1}{k(n+1)}&\quad\text{ if }~~k\leq n\\
\frac{1}{n}&\quad\text{ if }~~k\geq n
\end{cases}\\
\Lambda(\mathrm{LS}_p)~&=~ \frac{p}{nk+p-1}\\
\Lambda(\mathrm{ASEP})~=~\Lambda(\mathrm{APPT})~&=~\frac{3}{2+nk}\\
\Lambda(\mathrm{GER})~&=~\frac{3}{2+nk}\\
\Lambda(\mathrm{SEPBALL})~&=~\frac{2}{nk}~.
\end{align*}
\end{proposition}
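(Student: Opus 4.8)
The plan is to treat all five quantities uniformly. Each $\Lambda(A)=\sup_{\lambda\in A}\lambda_1^\downarrow$ is the largest top-eigenvalue achievable in $A$, and since $\lambda\mapsto\lambda_1^\downarrow$ is convex while every set here is permutation-invariant, I expect the extremizer to be a ``flat'' state $\lambda^*=(t,s,\dots,s)$ with $s=(1-t)/(nk-1)$ and $t$ as large as the membership condition permits. For each set I would therefore (i) derive a scalar upper bound on $t=\lambda_1^\downarrow$ from the defining inequality, and (ii) check that the flat state with that value of $t$ actually lies in $A$. The recurring tool for (i) is the elementary estimate that, for decreasingly ordered $\lambda\in\Delta_{nk}$, the sum of any $m$ of the smallest entries is at most $m\,(1-\lambda_1^\downarrow)/(nk-1)$, because the $m$ smallest have average no larger than the average $(1-\lambda_1^\downarrow)/(nk-1)$ of all non-maximal entries.

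I would start with $\mathrm{LS}_p$, which is cleanest: its defining inequality reads $\lambda_1^\downarrow\le\sum_{i=nk-p+1}^{nk}\lambda_i^\downarrow$, and feeding in the averaging estimate gives $t\le p(1-t)/(nk-1)$, i.e. $t\le p/(nk+p-1)$, saturated by the flat state. For $\mathrm{ARED}$ I would invoke the necessary condition of Corollary~\ref{cor.ared} with $r=\min(n,k)$: for $k\le n$ this is $(k-1)\lambda_1^\downarrow\le\sum(\text{the }k^2-1\text{ smallest})$, and the same averaging step yields $\lambda_1^\downarrow\le(k+1)/\bigl(k(n+1)\bigr)$; for $k\ge n$ it is $(n-1)\lambda_1^\downarrow\le1-\lambda_1^\downarrow$, giving $\lambda_1^\downarrow\le1/n$. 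For the matching lower bound I would evaluate $\min_{x\in\Delta_r}\langle\lambda^\downarrow,\hat x^\uparrow\rangle$ on the flat state: since only the coordinate $\eta_r$ of $\hat x$ is negative and $\sum_i\hat x_i=k-1$ (Remark~\ref{remarktraceremark}), this inner product equals $\eta_r(t-s)+s(k-1)$, which is minimized by the maximally entangled $x$ where $\eta_r=-(r-1)/r$ (Lemma~\ref{MXlemma}(3)); imposing non-negativity reproduces exactly the two thresholds above, so by Theorem~\ref{thm:ared} the flat states belong to $\mathrm{ARED}$.

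For $\mathrm{SEPBALL}$ I would argue directly from $\mathrm{Tr}[\rho^2]\le1/(nk-1)$: fixing $\lambda_1^\downarrow=t$, the minimal value of $\sum_{i\ge2}\lambda_i^2$ under $\sum_{i\ge2}\lambda_i=1-t$ is $(1-t)^2/(nk-1)$ (Cauchy-Schwarz, attained by a flat tail), so feasibility forces $t^2+(1-t)^2/(nk-1)\le1/(nk-1)$, which simplifies to $nk\,t^2-2t\le0$ and hence $t\le2/(nk)$, attained by the flat state. For $\mathrm{GER}$ I would first reduce its defining inequality to $\lambda_1^\downarrow\le\lambda_{nk-1}^\downarrow+2\lambda_{nk}^\downarrow$ (discarding the manifestly non-negative terms $\lambda_i^\downarrow-\lambda_{nk-i}^\downarrow$ for $2\le i\le r-1$), then apply the averaging estimate to the two smallest entries to get $t\le3(1-t)/(nk-1)$, i.e. $t\le3/(nk+2)$, again saturated by the flat state; the same reduction incidentally shows $\mathrm{GER}\subseteq\mathrm{LS}_3$.

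Finally, for $\mathrm{APPT}$ and $\mathrm{ASEP}$ I would set up the sandwich $\mathrm{ASEP}\subseteq\mathrm{APPT}\subseteq\mathrm{LS}_3$ (Theorem~\ref{intermediatecriteriathm}(\ref{APPTinASS3})), so that $\Lambda(\mathrm{ASEP})\le\Lambda(\mathrm{APPT})\le\Lambda(\mathrm{LS}_3)=3/(nk+2)$. The decisive point, and the step I expect to be the main obstacle, is achievability for $\mathrm{ASEP}$: the natural flat witness with $t=3/(nk+2)$ has spectrum $\bigl(3,1,\dots,1\bigr)/(nk+2)$, which for $nk>4$ lies \emph{outside} $\mathrm{SEPBALL}$, so the separable-ball bound is useless here. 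The resolution I would use is that this state is exactly the pseudo-pure state $\rho_{v,\mu}=\mu I_{nk}/(nk)+(1-\mu)vv^*$ with $\mu=nk/(nk+2)$, sitting precisely at the $\mathrm{ASEP}$ threshold of Proposition~\ref{exampleprop}(3); hence it is absolutely separable, giving $\Lambda(\mathrm{ASEP})\ge3/(nk+2)$ and closing the sandwich. Since $\mathrm{GER}\subseteq\mathrm{APPT}$ (Theorem~\ref{simplesufftheoremforAPPT}) and the same flat state lies in $\mathrm{GER}$, this simultaneously confirms $\Lambda(\mathrm{GER})=3/(nk+2)$.
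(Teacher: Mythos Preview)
Your proposal is correct and follows essentially the same strategy as the paper: flat (pseudo-pure) states as extremizers, the maximally entangled Schmidt vector as the binding constraint for $\mathrm{ARED}$, and Proposition~\ref{exampleprop}(3) to place the critical pseudo-pure state in $\mathrm{ASEP}$. Your packaging via the single ``averaging estimate'' $\sum_{i>nk-m}\lambda_i^\downarrow\le m(1-\lambda_1^\downarrow)/(nk-1)$ streamlines several of the paper's upper-bound computations (notably the $\mathrm{ARED}$ case, which the paper handles by a longer direct estimate, and the $\mathrm{GER}$ case, which the paper routes through $\mathrm{GER}\subseteq\mathrm{APPT}\subseteq\mathrm{LS}_3$), but the underlying ideas coincide.
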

\begin{proof}
That $\Lambda(\mathrm{ARED})$ is at least the given expression follows from  Proposition \ref{exampleprop}(1) by setting $\mu=\left(\frac{k-1}{nk}\frac{r}{r-1}+1\right)^{-1}$ (here, $r=\min(n,k)$) and calculating the largest eigenvalue of $\rho_{v,\mu}$. The converse inequality follows from the constraint $\langle \lambda^\downarrow,\hat x^\uparrow \rangle \geq 0$ in Theorem \ref{thm:ared} for a maximally entangled vector $x$. Specifically, for $x$ with Schmidt coefficients $x_1=\cdots=x_r=1/r$ and for a normalized and decreasingly ordered eigenvalue vector $\lambda$, we have due to $\eta_i=\frac{1}{r}$ for $i=1,\ldots,r-1$ and $\eta_{r}=-\frac{r-1}{r}$ (cf.\ proof of Proposition \ref{exampleprop}(1)):
\begin{align*}
\langle \lambda,\hat x^\uparrow \rangle=-\frac{r-1}{r}\lambda_1+\frac{1}{r}\sum_{i=(n-r)k+2}^{nk}\lambda_i.
\end{align*}
The sum in the last expression is not greater than $$\sum_{i=(n-r)k+2}^{nk}\lambda_{(n-r)k+2}=(rk-1)\lambda_{(n-r)k+2},$$ and this sum is also never greater than $$1-\lambda_1-\sum_{i=2}^{(n-r)k+1}\lambda_i\leq1-\lambda_1-(n-r)k\lambda_{(n-r)k+2}.$$ The sum is therefore never greater than $$\max_{x\in[0,1]}\min\{(rk-1)x,1-\lambda_1-(n-r)kx\}=(1-\lambda_1)\frac{rk-1}{nk-1},$$ and thus:
\begin{align*}
\langle \lambda,\hat x^\uparrow \rangle\leq-\frac{r-1}{r}\lambda_1+\frac{1}{r}(1-\lambda_1)\frac{rk-1}{nk-1}=\frac{rk-1}{r(nk-1)}\left[1-\lambda_1\frac{nkr+kr-r-nk}{kr-1}\right].
\end{align*}
This last expression has to be non-negative if $\lambda\in \mathrm{ARED}_{n,k}$, and thus $\lambda_1\leq\frac{kr-1}{nkr+kr-r-nk}$, which is the announced result, depending on the value of $r$.

Let us now show the bound for $\mathrm{LS}_p$. First, we have the upper bound
$$\lambda_1 \leq \lambda_{nk-p+1} + \cdots +\lambda_{nk} \leq p \lambda_{nk-p+1}.$$
We also have
$$\lambda_1 \leq \lambda_{nk-p+1} + \cdots +\lambda_{nk} = 1 - \lambda_1 - \sum_{i=2}^{nk-p} \lambda_i \leq 1 - \lambda_1 - (nk-p-1)\lambda_{nk-p+1}.$$
Putting the two inequalities together, we get
$$\lambda_1 \leq \min\left\{p \lambda_{nk-p+1}, 1 - \lambda_1 - (nk-p-1)\lambda_{nk-p+1}\right\} \leq p\frac{1-\lambda_1}{nk-1},$$
which shows that $\Lambda(\mathrm{LS}_p)$ is at most $p/(nk+p-1)$. To show that the bound is attained, one needs to consider, for a suitable normalizing $a>0$, a vector of the form $(pa, a, \ldots, a)$.

To prove the statement for $\mathrm{APPT}$ and $\mathrm{GER}$, note that $\mathrm{GER}\subseteq\mathrm{APPT}\subseteq\mathrm{LS}_3$ by Theorems \ref{intermediatecriteriathm} and \ref{simplesufftheoremforAPPT}, which implies $\Lambda(\mathrm{GER})\leq\Lambda(\mathrm{APPT})\leq\Lambda(\mathrm{LS}_3)=3/(2+nk)$, as shown above. On the other hand, the state $\rho_{v,\mu}$ from the statement of Proposition \ref{exampleprop}(3) with $\mu=(\frac{2}{nk}+1)^{-1}$ is easily seen to be an element of $\mathrm{GER}$, since for $i=2,\ldots,nk$ it is $\lambda_1^\downarrow(\rho_{v,\mu})=3\lambda_i^\downarrow(\rho_{v,\mu})$. Thus, $\rho_{v,\mu}\in\mathrm{GER}$, so that one gets $\Lambda(\mathrm{GER})\geq\lambda_1^\downarrow(\rho_{v,\mu})=3/(2+nk)$.

From the previous paragraph, the maximum in the definition of $\Lambda(\mathrm{APPT})$ is attained at a pseudo-pure state. But any such state is APPT if and only if it is ASEP \cite[Appendix B]{vidaltarrach} (see also Remark \ref{ASEPoutsideSEPBALL}). This shows $\Lambda(\mathrm{ASEP})\geq\Lambda(\mathrm{APPT})$, which together with the trivial statement $\Lambda(\mathrm{ASEP})\leq\Lambda(\mathrm{APPT})$ gives the value of $\Lambda(\mathrm{ASEP})$.

To show the statement for $\mathrm{SEPBALL}$, consider $\lambda_1\in[0,1]$ to be fixed. Then, by strict convexity, the expression ${\rm Tr}[\rho^2]=\sum_{i=1}^{nk}\lambda_i^2$ is uniquely minimized under the normalization constraint $\sum_{i=2}^{nk}\lambda_i=1-\lambda_1$ by the assignment $\lambda_i:=(1-\lambda_1)/(nk-1)$ for $i=2,\ldots,nk$. Thus, $\Lambda(\mathrm{SEPBALL})$ equals the largest solution $\lambda_1$ of $\lambda_1^2+(nk-1)[(1-\lambda_1)/(nk-1)]^2=1/(nk-1),$ which is $\lambda_1=2/(nk)$.
\end{proof}

\begin{remark}\label{ASEPoutsideSEPBALL}Proposition \ref{propsomegeometricquantities} implies the existence of absolutely separable state outside of the largest separable ball when $nk\geq6$ and $n,k\geq2$, since in this case $\Lambda(\mathrm{ASEP}_{n,k})>\Lambda(\mathrm{SEPBALL})$.
\end{remark}

\begin{remark}Proposition \ref{propsomegeometricquantities} shows that $\mathrm{SEPBALL}\not\subseteq\mathrm{LS}_2$. Considering a state $\rho\in D_{n,k}$ with $3$ eigenvalues $2/(nk+3)$ and $(nk-3)$ eigenvalues $1/(nk+3)$ shows that $\mathrm{LS}_2\not\subseteq\mathrm{SEPBALL}$ if $nk\geq10$.\end{remark}

\section{Decompositions of different dimensions}\label{sectiondecompofdifferentdimensions}

As in Section $IV$ of \cite{hil}, we would like, for two different tensor decompositions
$$\mathbb C^d = \mathbb C^{n_1} \otimes \mathbb C^{k_1} = \mathbb C^{n_2} \otimes \mathbb C^{k_2},$$
to compare the sets $\mathrm{ARED}_{n_1,k_1}$ and $\mathrm{ARED}_{n_2,k_2}$.

\begin{theorem}\label{thm:different-dimensions}
Consider two different tensor decompositions of $\mathbb C^d$, given by $d=n_1k_1=n_2k_2$, such that $\min(n_1,k_1) \geq \min(n_2,k_2)$ and $k_1 \leq k_2$. Then,
$$\mathrm{ARED}_{n_1,k_1} \subseteq \mathrm{ARED}_{n_2,k_2}.$$
\end{theorem}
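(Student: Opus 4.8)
The plan is to reduce everything to the spectral characterization of Theorem \ref{thm:ared} and to compare, for one fixed Schmidt vector $x$, the two ``hat'' vectors produced by the two decompositions. Throughout I would write $\hat x^{(n,k)}$ for the vector $\hat x$ of Definition \ref{def:hat-xi} computed in the bipartition $\mathbb C^n\otimes\mathbb C^k$; since both decompositions share the same total dimension $d=n_1k_1=n_2k_2$, the vectors $\hat x^{(n_1,k_1)}$ and $\hat x^{(n_2,k_2)}$ both live in $\mathbb R^{d}$. The crucial structural observation, visible from Theorem \ref{thm:red-pure} and Corollary \ref{simplestatementofpurereduction}, is that the interlacing roots $\eta_1,\dots,\eta_r$ of $\sum_i x_i/(x_i-\eta)=1$ depend only on the positive part of $x$ and \emph{not} on $n$ or $k$; the dimensions enter only through the multiplicity $k-1$ of each Schmidt value $x_i$ and through the number $(n-r)k$ of vanishing eigenvalues. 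Because adjoining extra zero coordinates to $x$ does not change the underlying pure state $\psi$, and hence not the multiset $\hat x^{(n,k)}$, any $x\in\Delta_{\min(n_2,k_2)}$ may be freely regarded as an element of $\Delta_{\min(n_1,k_1)}$ too (this is where the hypothesis $\min(n_1,k_1)\ge\min(n_2,k_2)$ enters).

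The key lemma I would establish is the entrywise domination
$$\bigl(\hat x^{(n_2,k_2)}\bigr)^\uparrow_j~\ge~\bigl(\hat x^{(n_1,k_1)}\bigr)^\uparrow_j\qquad\text{for all }j\in[d],$$
valid for every fixed $x$. I would prove it via the standard equivalence for non-decreasing sequences of equal length: entrywise domination holds iff, for every threshold $t$, the number of entries strictly exceeding $t$ is at least as large for the dominating sequence. Writing $N_\ell(t)$ for the number of entries of $\hat x^{(n_\ell,k_\ell)}$ exceeding $t$, one checks that for $t\ge 0$ only the positive values $x_i$ (each of multiplicity $k_\ell-1$) and the positive roots $\eta_1,\dots,\eta_{r-1}$ can exceed $t$, so that $N_2(t)-N_1(t)=(k_2-k_1)\,\bigl|\{i:x_i>t\}\bigr|\ge0$, where $k_2\ge k_1$ is used; while for $t<0$ the only negative eigenvalue is $\eta_r$, which is identical in both decompositions, so $N_2(t)=N_1(t)$. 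As both spectra have exactly $d$ entries, this settles the domination.

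With this in hand the theorem follows quickly. Let $\rho\in\mathrm{ARED}_{n_1,k_1}$ have decreasingly ordered spectrum $\lambda_\rho^\downarrow$, whose entries are non-negative. For an arbitrary $x\in\Delta_{\min(n_2,k_2)}$, the domination lemma together with $\lambda_\rho^\downarrow\ge 0$ gives
$$\langle\lambda_\rho^\downarrow,(\hat x^{(n_2,k_2)})^\uparrow\rangle~\ge~\langle\lambda_\rho^\downarrow,(\hat x^{(n_1,k_1)})^\uparrow\rangle~\ge~0,$$
the last inequality being the membership $\rho\in\mathrm{ARED}_{n_1,k_1}$ applied to $x$ viewed in $\Delta_{\min(n_1,k_1)}$. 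Since $x$ was arbitrary, Theorem \ref{thm:ared} yields $\rho\in\mathrm{ARED}_{n_2,k_2}$, proving the inclusion. Rank-one vectors $x$ impose no constraint in either decomposition by Remark \ref{exampleconstraintremark}, so one may assume $r\ge2$ throughout.

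The main obstacle I anticipate is precisely the entrywise-domination lemma: one must isolate the fact that the roots $\eta_i$ are dimension-independent and organize the comparison so that the $k_2\ge k_1$ multiplicity gain is what drives the inequality, while carefully treating the negative regime $t<0$, where the two spectra share the single common eigenvalue $\eta_r$. Everything else is a routine application of Theorem \ref{thm:ared} and of the non-negativity of the spectrum of a density matrix.
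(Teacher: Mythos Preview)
Your proof is correct and follows essentially the same route as the paper: both arguments rest on the observation that the interlacing roots $\eta_1,\ldots,\eta_r$ are dimension-independent while only the multiplicities $k-1$ of the $x_i$ change, and both conclude by comparing $\langle\lambda_\rho^\downarrow,(\hat x^{(n_1,k_1)})^\uparrow\rangle$ with $\langle\lambda_\rho^\downarrow,(\hat x^{(n_2,k_2)})^\uparrow\rangle$. Your packaging via the entrywise-domination lemma $(\hat x^{(n_2,k_2)})^\uparrow_j\ge(\hat x^{(n_1,k_1)})^\uparrow_j$, proved by the threshold-counting characterization, is a cleaner and more rigorous abstraction of the paper's somewhat informal term-by-term inspection of the explicitly expanded inner products.
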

\begin{proof}
Denote  $r_i=\min(n_i,k_i)$, $i=1,2$. Let $\rho\in \mathrm{ARED}_{n_1,k_1}$ and let $x=(x_1,\ldots,x_{r_2}) \in \Delta_{r_2}$ be a vector of Schmidt rank $r$. Since $\Delta_{r_1} \supseteq \Delta_{r_2}\oplus 0^{r_2-r_1}$, all we have to check is that
$$\langle \lambda_{\rho}^{\downarrow},\hat y^\uparrow \rangle \leq \langle \lambda_{\rho}^{\downarrow},\hat x^\uparrow \rangle,$$
where $y=(x_1,\ldots,x_{r_2}, \underbrace{0,\ldots,0}_{r_1-r_2 \text{ times}}) \in \Delta_{r_1}$ and $\lambda_{\rho}^{\downarrow}$ is the vector of eigenvalues of $\rho$ ordered decreasingly.
By Definition \ref{def:hat-xi},  we have
$$\hat x ~=~ (\underbrace{x_1, \ldots, x_1}_{k_2-1 \text{ times}}, \eta_1 ,  \underbrace{x_2, \ldots, x_2}_{k_2-1 \text{ times}}, \ldots, \eta_{r-1}, \underbrace{x_r, \ldots, x_r}_{k_2-1 \text{ times}}, \underbrace{0, \ldots, 0}_{(n_2-r)k_2 \text{ times}},\eta_r)$$
and
$$\hat y ~=~ (\underbrace{x_1, \ldots, x_1}_{k_1-1 \text{ times}}, \eta_1 ,  \underbrace{x_2, \ldots, x_2}_{k_1-1 \text{ times}}, \ldots, \eta_{r-1}, \underbrace{x_r, \ldots, x_r}_{k_1-1 \text{ times}}, \underbrace{0, \ldots, 0}_{(n_1-r)k_1 \text{ times}},\eta_r).$$
We have then,
\begin{align*}
\langle \lambda_{\rho}^{\downarrow},\hat x^\uparrow \rangle &= x_1 (\lambda_{n_2k_2} + \cdots + \lambda_{n_2k_2-k_2+2}) \\
&\quad+ \eta_1 \lambda_{n_2k_2-k_2+1} \\
&\quad+ x_2 (\lambda_{(n_2-1)k_2} + \cdots + \lambda_{(n_2-1)k_2-k_2+2}) \\
&\quad+ \eta_2 \lambda_{(n_2-1)k_2-k_2+1} \\
&\quad +\ldots \\
&\quad+ x_r (\lambda_{(n_2-r+1)k_2} + \cdots + \lambda_{(n_2-r+1)k_2-k_2+2}) \\
&\quad+ 0 (\lambda_{(n_2-r)k_2+1} + \cdots + \lambda_2) \\
&\quad+ \eta_r \lambda_1
\end{align*}
and
\begin{align*}
\langle \lambda_{\rho}^{\downarrow},\hat y^\uparrow \rangle &= x_1 (\lambda_{n_1k_1} + \cdots + \lambda_{n_1k_1-k_1+2}) \\
&\quad+ \eta_1 \lambda_{n_1k_1-k_1+1} \\
&\quad+ x_2 (\lambda_{(n_1-1)k_1} + \cdots + \lambda_{(n_1-1)k_1-k_1+2}) \\
&\quad+ \eta_2 \lambda_{(n_1-1)k_1-k_1+1} \\
&\quad +\ldots \\
&\quad+ x_r (\lambda_{(n_1-r+1)k_1} + \cdots + \lambda_{(n_1-r+1)k_1-k_1+2}) \\
&\quad+ 0 (\lambda_{(n_1-r)k_1+1} + \cdots + \lambda_2) \\
&\quad+ \eta_r \lambda_1.
\end{align*}
Since the sum multiplying $x_i$ ($i=1,\ldots,r$) in the expression  $\langle \lambda_{\rho}^{\downarrow},\hat y^\uparrow \rangle$ contains at most the same number of non-negative terms as the one from $\langle \lambda_{\rho}^{\downarrow},\hat x^\uparrow \rangle$ and each of these terms corresponds to a greater one in the sum multiplying $x_i$ from the expression $\langle \lambda_{\rho}^{\downarrow},\hat x^\uparrow \rangle$, we obtain that  $\langle \lambda_{\rho}^{\downarrow},\hat y^\uparrow \rangle \leq \langle \lambda_{\rho}^{\downarrow},\hat x^\uparrow \rangle.$ Thus, since $\langle \lambda_{\rho}^{\downarrow},\hat y^\uparrow \rangle\geq 0$, we obtain the conclusion.
\end{proof}

\begin{corollary}\label{cor:nk-kn}
For any $n \geq k$, we have that $\mathrm{ARED}_{n,k} \subseteq \mathrm{ARED}_{k,n}$.
\end{corollary}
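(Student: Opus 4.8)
The plan is to obtain this immediately as the special case of Theorem \ref{thm:different-dimensions} corresponding to swapping the two tensor factors. Since $\mathbb C^n \otimes \mathbb C^k \cong \mathbb C^{nk} \cong \mathbb C^k \otimes \mathbb C^n$, I would apply the theorem with the two decompositions $(n_1,k_1) = (n,k)$ and $(n_2,k_2) = (k,n)$, and simply verify that the hypotheses are met. All that is needed is a careful bookkeeping of which index plays which role.

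First, the dimensions match: $d = n_1 k_1 = nk = kn = n_2 k_2$, so both are genuine factorizations of the same space $\mathbb C^d$. Next, using the standing assumption $n \geq k$, I compute $\min(n_1,k_1) = \min(n,k) = k$ and $\min(n_2,k_2) = \min(k,n) = k$, so the first hypothesis $\min(n_1,k_1) \geq \min(n_2,k_2)$ holds (in fact with equality). Finally, $k_1 = k \leq n = k_2$ is exactly the second hypothesis $k_1 \leq k_2$, again by $n \geq k$. Theorem \ref{thm:different-dimensions} then gives
$$\mathrm{ARED}_{n,k} = \mathrm{ARED}_{n_1,k_1} \subseteq \mathrm{ARED}_{n_2,k_2} = \mathrm{ARED}_{k,n},$$
which is the claim.

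Because the corollary is literally an instance of the theorem, there is no genuine obstacle here; the only point requiring attention is to assign the labels consistently so that the reduction map in both cases is applied to the \emph{second} factor, which is why the inequality $k_1 \leq k_2$ (not $\min$-versus-$\min$) is the constraint that forces the hypothesis $n \geq k$. This also specializes, for $k=2$, to the inclusion $\mathrm{ARED}_{n,2} \subseteq \mathrm{ARED}_{2,n}$ noted earlier after Proposition \ref{firstqubitproposition}.
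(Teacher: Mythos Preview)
Your proposal is correct and is exactly the intended argument: the paper states this as an immediate corollary of Theorem~\ref{thm:different-dimensions} without further proof, and your verification of the hypotheses with $(n_1,k_1)=(n,k)$, $(n_2,k_2)=(k,n)$ is precisely what is needed.
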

Intuitively, this result can be understood in the following way. We have $\mathrm{APPT}_{n,k}=\mathrm{APPT}_{k,n}$ (see Section \ref{preliminariessection}), whereas in computing $\mathrm{ARED}$ a completely positive map $X\mapsto I\cdot{\rm Tr}[X]-X^T$ is applied after application of the transposition $\Theta$, and this completely positive map ``messes up'' the entanglement test the more the larger the dimension is to which it is applied. That this intuition holds only on the level of $\mathrm{ARED}$ but not on the level of $\mathrm{RED}$ can be seen by Proposition \ref{propositionREDvsPPT} (\ref{inpropredredprimegeq3},\ref{inpropredredprimefrom2}).

Let us now turn to the case when $k_1>k_2$.

\begin{proposition}\label{p.k1>k2}
Consider two different tensor decomposition of $\mathbb{C}^{d}$ given by $d=n_1k_1=n_2k_2$. If $n_1\geq k_1\geq 2$, $n_2\geq k_2\geq 2$ and $k_1>k_2$, then
$$ \mathrm{ARED}_{n_1,k_1}\nsubseteq \mathrm{ARED}_{n_2,k_2}.$$
On the other hand, if in addition $k_1\geq 2 k_2-1$, then
$$ \mathrm{ARED}_{n_2,k_2}\subseteq \mathrm{ARED}_{n_1,k_1}.$$
\end{proposition}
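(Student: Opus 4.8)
The plan is to handle the two assertions by entirely different means. The non-inclusion will come from comparing the largest-eigenvalue functional $\Lambda(\cdot)$ computed in Proposition \ref{propsomegeometricquantities}, whereas the inclusion will be obtained by sandwiching both $\mathrm{ARED}$ sets between the polyhedral sets $\mathrm{LS}_p$ from Theorem \ref{intermediatecriteriathm}. A preliminary observation used throughout is that, since $d=n_1k_1=n_2k_2$, the simplex $\Delta_d$ and hence every set $\mathrm{LS}_p\subseteq\Delta_d$ is literally the same object for the two tensor decompositions; the sets $\mathrm{ARED}_{n_1,k_1}$ and $\mathrm{ARED}_{n_2,k_2}$ are likewise compared as subsets of this common $\Delta_d$, which is legitimate because membership in each is decided by spectral data alone.

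For the inclusion (assuming in addition $k_1\geq 2k_2-1$), I would simply chain
$$\mathrm{ARED}_{n_2,k_2}\subseteq\mathrm{LS}_{2k_2-1}\subseteq\mathrm{LS}_{k_1}\subseteq\mathrm{ARED}_{n_1,k_1}.$$
Here the first inclusion is Theorem \ref{intermediatecriteriathm}(\ref{AREDinASS2kminus1}) applied to $(n_2,k_2)$, the last is Theorem \ref{intermediatecriteriathm}(\ref{ASSkinARED}) applied to $(n_1,k_1)$, and the middle one is the trivial monotonicity $\mathrm{LS}_p\subseteq\mathrm{LS}_q$ for $p\leq q$, whose hypothesis $2k_2-1\leq k_1$ is exactly the extra assumption. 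Since items (\ref{ASSkinARED}) and (\ref{AREDinASS2kminus1}) are stated for all $n,k\geq 2$ and we have $n_i\geq k_i\geq 2$, nothing further is needed.

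For the non-inclusion, I would exploit that $n_i\geq k_i$ places us in the $k\leq n$ branch of Proposition \ref{propsomegeometricquantities}, so $\Lambda(\mathrm{ARED}_{n_i,k_i})=\frac{k_i+1}{k_i(n_i+1)}$. Substituting $n_i=d/k_i$ rewrites this as $\frac{k_i+1}{d+k_i}=g(k_i)$ with $g(t):=(t+1)/(d+t)$; since $g'(t)=(d-1)/(d+t)^2>0$ (recall $d\geq 4$), the function $g$ is strictly increasing, so $k_1>k_2$ yields $\Lambda(\mathrm{ARED}_{n_1,k_1})>\Lambda(\mathrm{ARED}_{n_2,k_2})$. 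By the definition of the supremum there is then a spectrum $\lambda\in\mathrm{ARED}_{n_1,k_1}$ whose top eigenvalue satisfies $\lambda_1^\downarrow>\Lambda(\mathrm{ARED}_{n_2,k_2})$; as every element of $\mathrm{ARED}_{n_2,k_2}$ has top eigenvalue at most $\Lambda(\mathrm{ARED}_{n_2,k_2})$, this $\lambda$ cannot lie in $\mathrm{ARED}_{n_2,k_2}$, witnessing $\mathrm{ARED}_{n_1,k_1}\nsubseteq\mathrm{ARED}_{n_2,k_2}$. Concretely, one may take the boundary pseudo-pure state from Proposition \ref{exampleprop}(1), with $v$ maximally entangled on $\mathbb C^{n_1}\otimes\mathbb C^{k_1}$, which realizes $\Lambda(\mathrm{ARED}_{n_1,k_1})$.

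I expect the inclusion half to be essentially immediate once the $\mathrm{LS}$ sandwich is in place; the only point to watch is that the relevant items of Theorem \ref{intermediatecriteriathm} already hold for $n,k\geq 2$, not merely for the $n,k\geq 3$ displayed chain. The conceptual crux lies in the non-inclusion: the main task is to recognize $\Lambda(\mathrm{ARED})$ as the correct monotone invariant. A crude comparison of the $\mathrm{LS}$ sets alone separates $\mathrm{ARED}_{n_1,k_1}$ from $\mathrm{ARED}_{n_2,k_2}$ only when $k_1\geq 2k_2$, and would fail in the intermediate regime $k_2<k_1<2k_2$; invoking the sharp radius from Proposition \ref{propsomegeometricquantities} instead covers the entire range $k_1>k_2$ uniformly.
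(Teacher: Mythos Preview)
Your proof is correct and essentially follows the paper's approach. The inclusion argument is identical to the paper's (the same chain $\mathrm{ARED}_{n_2,k_2}\subseteq\mathrm{LS}_{2k_2-1}\subseteq\mathrm{LS}_{k_1}\subseteq\mathrm{ARED}_{n_1,k_1}$), and for the non-inclusion the paper also uses pseudo-pure states as witnesses, only citing Proposition~\ref{exampleprop}(1) directly rather than the derived quantity $\Lambda(\mathrm{ARED})$ from Proposition~\ref{propsomegeometricquantities}; since the latter was computed precisely via those same pseudo-pure thresholds, the two arguments are the same computation in slightly different packaging.
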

\begin{proof}
Since $k_1>k_2$ and $n_1k_1=n_2 k_2$, it follows that $n_1<n_2$, which is equivalent to $\frac{n_1}{n_1+1}<\frac{n_2}{n_2+1}$. Hence, we can choose $\mu\in\left[\frac{n_1}{n_1+1}, \frac{n_2}{n_2+1} \right)$. Let
$$\rho_{v,\mu}:=\mu I_{d}/d+(1-\mu)vv^*,$$
where $v\in\mathbb{C}^{d}$ with $\|v\|=1$. A simple computation shows that
$$ \left(\frac{k_1-1}{n_1k_1}\frac{k_1}{k_1-1}+1\right)^{-1}\leq\mu<\left(\frac{k_2-1}{n_2k_2}\frac{k_2}{k_2-1}+1\right)^{-1}$$
and thus, by Proposition \ref{exampleprop}(1) it follows that $\rho_{v,\mu}\in \mathrm{ARED}_{n_1,k_1}\setminus \mathrm{ARED}_{n_2,k_2}.$ On the other hand, by Theorem \ref{intermediatecriteriathm}, we have
$$ \mathrm{ARED}_{n_2,k_2}\subseteq \mathrm{LS}_{2k_2-1}\subseteq \mathrm{LS}_{k_1}\subseteq \mathrm{ARED}_{n_1,k_1}, \text{ if } k_1\geq 2k_2-1.$$
\end{proof}
We leave the analogous inclusions not covered by Theorem \ref{thm:different-dimensions} or Proposition \ref{p.k1>k2} as open questions, see Problem \ref{problemHildebrandDecompositions} below.

If we consider $\mathrm{RED}$ instead of $\mathrm{ARED}$, then there is generally \emph{no} inclusion as in Corollary \ref{cor:nk-kn}, as we show now. For this statement, recall from Section \ref{preliminariessection} that $\mathrm{RED}'_{n,k}$ denotes the set defined similarly as $\mathrm{RED}_{n,k}$, but with the reduction map acting on the \emph{first} tensor factor (i.e.\ of Hilbert space dimension $n$). Note that, with analogous notation, Corollary \ref{cor:nk-kn} can be expressed by saying $\mathrm{ARED}_{n,k} \subseteq \mathrm{ARED}'_{n,k}$ for $k\leq n$.
\begin{proposition}\label{propositionREDvsPPT}The following relations hold:
\begin{enumerate}
\item\label{inpropredredprimegeq3}For $n\geq3$, $k\geq3$: $\mathrm{RED}'_{n,k}\not\subseteq\mathrm{RED}_{n,k}\not\subseteq\mathrm{RED}'_{n,k}$.
\item\label{inpropredredprimefrom2}For $n\geq3$, $k=2$: $\mathrm{RED}'_{n,k}\not\subseteq\mathrm{RED}_{n,k}=\mathrm{PPT}_{n,k}$.
\item For $n=2$, $k=2$: $\mathrm{RED}_{n,k}=\mathrm{RED}'_{k,n}=\mathrm{PPT}_{n,k}$.
\end{enumerate}
\end{proposition}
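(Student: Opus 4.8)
The plan is to treat the three items in decreasing order of difficulty, using a single computational engine (pseudo-pure states) together with one transport lemma (subspace restriction). The common thread for all the \emph{equalities} is the elementary fact that the reduction map on $M_2(\mathbb C)$ is transposition up to a fixed unitary conjugation: writing $Y$ for the Pauli matrix with entries $Y_{12}=-i$, $Y_{21}=i$ and zero diagonal, a direct $2\times2$ computation gives $R(X)=YX^TY$ for all $X\in M_2(\mathbb C)$. Hence for $\rho\in D_{n,2}$ one has $\rho^{red}=(I_n\otimes Y)\,\rho^\Gamma\,(I_n\otimes Y)$, so $\rho^{red}\geq0\Leftrightarrow\rho^\Gamma\geq0$ and therefore $\mathrm{RED}_{n,2}=\mathrm{PPT}_{n,2}$; symmetrically, using that $(\Theta\otimes\mathrm{id}_k)(\rho)$ has the same spectrum as $\rho^\Gamma$ (Section \ref{preliminariessection}), one gets $\mathrm{RED}'_{2,k}=\mathrm{PPT}_{2,k}$. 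This immediately settles item (3), where both factors are qubits so that $\mathrm{RED}_{2,2}=\mathrm{RED}'_{2,2}=\mathrm{PPT}_{2,2}$, as well as the equality $\mathrm{RED}_{n,2}=\mathrm{PPT}_{n,2}$ in item (2).

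For the strict non-inclusion in (2), and as the engine for (1), I would use pseudo-pure states $\rho_{v,\mu}=\mu I_{ab}/(ab)+(1-\mu)vv^*$ on $\mathbb C^a\otimes\mathbb C^b$, with $v$ maximally entangled on a two-dimensional subspace (Schmidt coefficients $(1/2,1/2)$). The crucial point, read off from Theorem \ref{thm:red-pure} and Lemma \ref{MXlemma}, is that the single negative eigenvalue $\eta_r$ of a reduced pure state depends \emph{only} on the Schmidt coefficients, not on which factor (of which dimension) the reduction acts on; here $\eta_r=-1/2$. Since $R$ rescales the trace by $(b-1)$ on the second factor and by $(a-1)$ on the first, one computes $\lambda_{min}(\rho_{v,\mu}^{red})=\tfrac{\mu(b-1)}{ab}-\tfrac{1-\mu}{2}$ and $\lambda_{min}(\rho_{v,\mu}^{red'})=\tfrac{\mu(a-1)}{ab}-\tfrac{1-\mu}{2}$ (consistent with Proposition \ref{exampleprop}). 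For $a>b$ the second quantity is the larger one, so choosing $\mu$ in the gap between the two sign-change thresholds $\mu=\tfrac{ab}{ab+2(a-1)}$ and $\mu=\tfrac{ab}{ab+2(b-1)}$ makes $\rho_{v,\mu}^{red'}\geq0$ while $\rho_{v,\mu}^{red}\not\geq0$. Specializing to $(a,b)=(n,2)$ with $n\geq3$, these thresholds are $\tfrac{n}{2n-1}<\tfrac{n}{n+1}$, so any $\mu\in[\tfrac{n}{2n-1},\tfrac{n}{n+1})$ yields a state in $\mathrm{RED}'_{n,2}\setminus\mathrm{RED}_{n,2}$, completing (2).

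For (1) I would transport the qubit-level examples into $\mathbb C^n\otimes\mathbb C^k$ via a \emph{subspace-restriction lemma}: if $\sigma\in D_{n',k'}$ is embedded into $D_{n,k}$ through subspaces $\mathbb C^{n'}\subseteq\mathbb C^n$ and $\mathbb C^{k'}\subseteq\mathbb C^k$, obtaining a state $\rho$ supported on $\mathbb C^{n'}\otimes\mathbb C^{k'}$, then $\rho^{red}\geq0\Leftrightarrow\sigma^{red}\geq0$ and $\rho^{red'}\geq0\Leftrightarrow\sigma^{red'}\geq0$. This is immediate once one checks that both $\rho^{red}=\rho_A\otimes I_k-\rho$ and $\rho^{red'}=I_n\otimes\rho_B-\rho$ are block-diagonal for the orthogonal decomposition induced by the subspaces, the block on $\mathbb C^{n'}\otimes\mathbb C^{k'}$ being exactly the corresponding reduction of $\sigma$ while the complementary blocks have the form $(\text{marginal})\otimes I$ or $I\otimes(\text{marginal})$ and are therefore positive-semidefinite. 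Embedding the $(a,b)=(n,2)$ example of the previous paragraph through a two-dimensional subspace of the \emph{second} factor then yields a state in $\mathrm{RED}'_{n,k}\setminus\mathrm{RED}_{n,k}$; running the same pseudo-pure computation with $(a,b)=(2,k)$, $k\geq3$ (where now $a<b$ reverses the inequality, with thresholds $\tfrac{k}{2k-1}<\tfrac{k}{k+1}$, giving $\mathrm{RED}_{2,k}\setminus\mathrm{RED}'_{2,k}\neq\emptyset$) and embedding through a two-dimensional subspace of the \emph{first} factor yields a state in $\mathrm{RED}_{n,k}\setminus\mathrm{RED}'_{n,k}$. Together these prove both non-inclusions of (1).

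The conceptual crux is the dimension-blindness of the negative reduction eigenvalue $\eta_r$: the two reductions of a pseudo-pure state differ only through the trace-rescaling factors $(a-1)$ and $(b-1)$, which is precisely what makes the non-inclusions possible and fixes their direction. The main obstacle is the diagonal case $n=k$ in (1): there the pseudo-pure states are symmetric under the two reductions and detect nothing, so one genuinely needs the embedding lemma to break the symmetry by importing a $\mathbb C^n\otimes\mathbb C^2$ (resp.\ $\mathbb C^2\otimes\mathbb C^k$) example. Verifying that restriction to a subspace leaves \emph{both} reduction conditions intact is the one step demanding care, but it reduces to the elementary block-diagonal bookkeeping indicated above.
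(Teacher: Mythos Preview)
Your argument is correct and, for the non-inclusions, takes a genuinely different route from the paper. The equalities (items (2) and (3)) are handled identically in both: the reduction map on $M_2(\mathbb C)$ is transposition conjugated by the Pauli $Y$, so $\mathrm{RED}_{n,2}=\mathrm{PPT}_{n,2}$ and $\mathrm{RED}'_{2,k}=\mathrm{PPT}_{2,k}$.

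For the non-inclusions, the paper simply writes down an explicit numerical $6\times6$ density matrix $\rho_{3,2}$, verifies numerically that $\rho_{3,2}^{red'}\geq I/20$ while $\rho_{3,2}^{red}\not\geq -I/20$, and then uses the same subspace-embedding idea you describe to transport this into arbitrary $\mathbb C^n\otimes\mathbb C^k$; the second non-inclusion in (1) is obtained by the swap isomorphism rather than by a separate construction. Your approach instead stays within the pseudo-pure family already developed in Proposition \ref{exampleprop}: the observation that the negative eigenvalue $\eta_r$ of a reduced pure state depends only on the Schmidt coefficients (Lemma \ref{MXlemma}), while the identity contribution scales with $(b-1)$ resp.\ $(a-1)$, gives explicit analytic thresholds $\mu=ab/(ab+2(b-1))$ and $\mu=ab/(ab+2(a-1))$, which differ exactly when $a\neq b$. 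This yields witnesses in $\mathrm{RED}'_{n,2}\setminus\mathrm{RED}_{n,2}$ and $\mathrm{RED}_{2,k}\setminus\mathrm{RED}'_{2,k}$ without any numerical check, and then the embedding lemma (which you state and justify correctly via the block-diagonal structure of $\rho_A\otimes I_k-\rho$ and $I_n\otimes\rho_B-\rho$) transports them into $\mathbb C^n\otimes\mathbb C^k$. Your remark that the embedding is genuinely needed in the diagonal case $n=k$ (where the pseudo-pure thresholds coincide) is apt and clarifies why one cannot work directly on the full space. The net effect is a cleaner, fully analytic proof that also explains \emph{why} the asymmetry arises, at the modest cost of having to verify the block-decomposition of the two reduced operators under subspace restriction.
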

\begin{proof}The equalities in items (2) and (3) follow from the fact that the reduction map on a subsystem of dimension 2 detects the same states as the transposition map (see Section \ref{preliminariessection}). To show the non-inclusion in item (2) for the case $n=3$, consider the following state $\rho_{3,2}\in D_{3,2}$:
\begin{align}\nonumber
\rho_{3,2}~:=~\frac{1}{1000}\left(
\begin{smallmatrix}
 110 & 30-39 i & 40-81 i & 48+37 i & 70-15 i & 12+i \\
 30+39 i & 128 & 66-i & 42-33 i & 134+5 i & 18-11 i \\
 40+81 i & 66+i & 174 & 28+73 i & 96+29 i & 30+47 i \\
 48-37 i & 42+33 i & 28-73 i & 188 & 110-13 i & 40+i \\
 70+15 i & 134-5 i & 96-29 i & 110+13 i & 226 & 48+47 i \\
 12-i & 18+11 i & 30-47 i & 40-i & 48-47 i & 174 \\
\end{smallmatrix}
\right)\,,
\end{align}
written here w.r.t.\ the product basis $\{|1,1\rangle,|1,2\rangle,|2,1\rangle,|2,2\rangle,|3,1\rangle,|3,2\rangle\}$. Then one finds numerically that $\rho_{3,2}^{red'}\geq I/20$ whereas $\rho_{3,2}^{red}\not\geq-I/20$, which implies $\rho_{3,2}\in\mathrm{RED}'_{3,2}\setminus\mathrm{RED}_{3,2}$. Now, for any given $n\geq3$ and $k\geq2$, one can simply embed the Hilbert space belonging to first subsystem of $\rho_{3,2}$ as a $3$-dimensional subspace into $\mathbb{C}^n$ and the Hilbert space belonging to the second subsystem of $\rho_{3,2}$ as a $2$-dimensional subspace into $\mathbb{C}^k$, and define the state $\rho_{n,k}\in D_{n,k}$ to agree with the action of $\rho_{3,2}$ on the tensor product of these two subspaces. By this embedding, the fact that $\rho_{3,2}\in\mathrm{RED}'_{3,2}\setminus\mathrm{RED}_{3,2}$ immediately implies $\rho_{n,k}\in\mathrm{RED}'_{n,k}\setminus\mathrm{RED}_{n,k}$, which proves the left non-inclusions in items (1) and (2). The right non-inclusion in item (1) follows by a swap of both subsystems.\end{proof}The non-inclusions from Proposition \ref{propositionREDvsPPT} (1,2) are already hinted at in the original works \cite{horodeckireduction,cerf}, albeit without explicit examples.

\section{Remarks and open questions}\label{remarkssection}

We would like to conclude our work with a series of comments and questions we leave open.

When comparing the results in the current paper for the set $\mathrm{ARED}$ with the ones for $\mathrm{APPT}$ developed in \cite{hil}, one notices immediately that Hildebrand characterizes $\mathrm{APPT}$ by a finite list of matrix inequalities, whereas our Theorem \ref{thm:ared} provides necessary and sufficient conditions as an infinite list of scalar, linear inequalities. From a practical point of view, it would be desirable to have a finite characterization of $\mathrm{ARED}$, so we leave open the following important question.

\begin{problem}
 Provide a \emph{finite} list of necessary and sufficient conditions for $\lambda \in \mathrm{ARED}_{n,k}$.
 \end{problem}

Continuing the parallel with the results in \cite{hil}, our Proposition \ref{p.k1>k2} leaves some cases open. Indeed, in \cite[Theorem IV.2]{hil}, the author shows the following inclusion:
$$\mathrm{APPT}_{n_1,k_1} \subseteq \mathrm{APPT}_{n_2,k_2},$$
whenever $\min(n_1,k_1) =: r_1 \geq r_2:=\min(n_2,k_2)$. In Proposition \ref{p.k1>k2} we show that the reversed inclusion holds for the sets $\mathrm{ARED}$,
$$ \mathrm{ARED}''_{n_1,k_1} \supseteq \mathrm{ARED}''_{n_2,k_2},$$
under the more restrictive condition $r_1 \geq 2r_2-1$. We believe that this condition is unnecessary.

 \begin{problem}\label{problemHildebrandDecompositions}
Show that, whenever $r_1 \geq r_2$, the previous inclusion holds.
 \end{problem}

\end{document}